\DeclarePairedDelimiter\abs{\lvert}{\rvert}%
\def \F {{\mathbb F}}
\def \poly {\text{poly}}
\def \NTIME {\text{NTIME}}
\def \EMAJ {\text{EMAJ}}
\def \ETHR {\text{ETHR}}
\def \THR {\text{THR}}
\def \SUM {\text{SUM}}
\def \AND {\text{AND}}
\def \OR {\text{OR}}
\def \PSUM {\text{SUM}^{\geq 0}}
\def \ENC {\text{ENC}}
\def \eps {{\varepsilon}}
\def \size {\text{size}}
\newcommand{\NC}{\mathsf{NC}\xspace}
\def \P {\mathsf{P}\xspace}
\newcommand{\SIZE}{\mathsf{SIZE}\xspace}
\newcommand{\NP}{\mathsf{NP}\xspace}
\newcommand{\TC}{{\sf TC}}
\newcommand{\qNP}{\mathsf{Quasi}\text{-}\NP\xspace}
\newcommand{\AC}{\mathsf{AC}}
\newcommand{\SYM}{\mathsf{SYM}}
\newcommand{\ACC}{\mathsf{ACC}}
\newcommand{\NEXP}{\mathsf{NEXP}\xspace}
\newcommand{\NQP}{\mathsf{quasi}\text{-}\NP\xspace}
\newcommand{\PTIME}{\mathsf{P}\xspace}
\newcommand{\Ppoly}{\PTIME_{\sf/poly}}
\newtheorem{theorem}{Theorem}[section]
\newtheorem{corollary}{Corollary}[section]
\newenvironment{proofof}[1]{\noindent {\bf Proof of #1.  }}{\hfill$\Box$}
\newtheorem{definition}{Definition}[section]
\newtheorem{lemma}{Lemma}[section]
\newtheorem{claim}{Claim}
\theoremstyle{definition}
\theoremstyle{plain}
\newtheorem{conjecture}{Conjecture}
\title{Lower Bounds Against Sparse Symmetric Functions of ACC Circuits: Expanding the Reach of $\#$SAT Algorithms} 
\author{Nikhil Vyas\footnote{\texttt{nikhilv@mit.edu}, Supported by NSF CCF-1909429.}\\MIT \and Ryan Williams\footnote{\texttt{rrw@mit.edu}, Supported by NSF CCF-1741615 and NSF CCF-1909429.}\\MIT}
\begin{document}

\maketitle

\begin{abstract}
We continue the program of proving circuit lower bounds via circuit satisfiability algorithms. So far, this program has yielded several concrete results, proving that functions in $\qNP = \NTIME[n^{(\log n)^{O(1)}}]$ and $\NEXP$ do not have small circuits (in the worst case and/or on average) from various circuit classes ${\cal C}$, by showing that ${\cal C}$ admits non-trivial satisfiability and/or $\#$SAT algorithms which beat exhaustive search by a minor amount.

In this paper, we present a new strong lower bound consequence of non-trivial $\#$SAT algorithm for a circuit class ${\mathcal C}$. Say a symmetric Boolean function $f(x_1,\ldots,x_n)$ is \emph{sparse} if it outputs $1$ on $O(1)$ values of $\sum_i x_i$. We show that for every sparse $f$, and for all ``typical'' ${\cal C}$, faster $\#$SAT algorithms for ${\cal C}$ circuits actually imply lower bounds against the circuit class $f \circ {\cal C}$, which may be \emph{stronger} than ${\cal C}$ itself. In particular:
\begin{itemize}
    \item $\#$SAT algorithms for $n^k$-size ${\cal C}$-circuits running in $2^n/n^k$ time (for all $k$) imply $\NEXP$ does not have $f \circ {\cal C}$-circuits of polynomial size.
    \item $\#$SAT algorithms for $2^{n^{\eps}}$-size ${\cal C}$-circuits running in $2^{n-n^{\eps}}$ time (for some $\eps > 0$) imply $\qNP$ does not have $f \circ {\cal C}$-circuits of polynomial size.
\end{itemize}
Applying $\#$SAT algorithms from the literature, one immediate corollary of our results is that $\qNP$ does not have $\EMAJ \circ \ACC^0 \circ \THR$ circuits of polynomial size, where $\EMAJ$ is the ``exact majority'' function, improving previous lower bounds against $\ACC^0$ [Williams JACM'14] and $\ACC^0 \circ \THR$ [Williams STOC'14], [Murray-Williams STOC'18]. This is the first nontrivial lower bound against such a circuit class.
\end{abstract}

\newpage

\section{Introduction}

Currently, our knowledge of algorithms vastly exceeds our knowledge of lower bounds. Is it possible to bridge this gap, and use the existence of powerful algorithms to give lower bounds for hard functions? Over the last decade, the program of proving lower bounds via algorithms has been positively addressing this question. A line of work starting with Kabanets and Impagliazzo~\cite{KabanetsI04} has shown how deterministic subexponential-time algorithms for polynomial identity testing would imply lower bounds against arithmetic circuits. 
Starting around 2010~\cite{Williams-SICOMP13,Williams-jacm14}, it was shown that even \emph{slightly nontrivial} algorithms could imply Boolean circuit lower bounds. For example, a circuit satisfiability algorithm running in $O(2^n/n^k)$ time (for all $k$) on $n^k$-size circuits with $n$ inputs would already suffice to yield the (infamously open) lower bound $\NEXP \not\subset \P/\poly$. More generally, a generic connection was found between non-trivial SAT algorithms and circuit lower bounds:

\begin{theorem}[\cite{Williams-SICOMP13,Williams-jacm14}, Informal]\label{w-thm} Let ${\cal C}$ be a circuit class closed under AND, projections, and compositions.\footnote{It is not necessary to know precisely what these conditions mean, as we will use different conditions in our paper anyway. The important point is that these conditions hold for most interesting circuit classes that have been studied, such as $\AC^0$, $\TC^0$, $\NC^1$, $\NC$, and general fan-in two circuits.} Suppose for all $k$ there is an algorithm $A$ such that, for every ${\cal C}$-circuit of $n^k$ size, $A$ determines its satisfiability in $O(2^n/n^k)$ time. Then $\NEXP$ does not have polynomial-size ${\cal C}$-circuits.
\end{theorem}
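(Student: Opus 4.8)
\noindent\emph{Proof sketch.} The plan is an \emph{indirect diagonalization}. Assume for contradiction that the hypothesized satisfiability algorithm $A$ exists and, simultaneously, that $\NEXP$ \emph{does} have polynomial-size ${\cal C}$-circuits; since a ${\cal C}$-circuit is in particular a circuit, this yields $\NEXP \subset \P/\poly$. From these two assumptions I will derive that every language in $\NTIME[2^n]$ can be decided nondeterministically in time $2^n/n$ on infinitely many input lengths, contradicting (the infinitely-often form of) the nondeterministic time hierarchy theorem.

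First I would extract structure from the collapse $\NEXP \subset \P/\poly$ using the ``easy witness'' machinery of Impagliazzo, Kabanets and Wigderson. That collapse gives $\NEXP = \mathsf{MA}$ and, more to the point, the \emph{easy-witness lemma}: every language in $\NTIME[2^n]$ has a verifier for which, on infinitely many input lengths $n$, every yes-instance $x$ of length $n$ possesses a witness that --- read as the truth table of a function on $O(n)$ variables --- is computed by a circuit of size $n^{O(1)}$. The refinement I actually need is the stronger statement that, under the hypothesis $\NEXP \subset {\cal C}/\poly$, these witness circuits may be taken from ${\cal C}$ itself; obtaining this ${\cal C}$-version of the easy-witness lemma, rather than the bare $\P/\poly$ version, is the technical crux.

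Next I would fix an arbitrary $L \in \NTIME[2^n]$ and route it through a highly efficient (quasilinear-size) PCP theorem, so that for $|x| = n$ we have $x \in L$ iff there is a proof $\pi \in \{0,1\}^{2^n \cdot n^{O(1)}}$ accepted by a verifier $V$ on \emph{every} random string $r \in \{0,1\}^{n + O(\log n)}$, where, given $x$ and $r$, the $O(1)$ query positions into $\pi$ and the decision predicate are computed by a uniform, structurally simple ($\AC^0$-like) circuit of size $n^{O(1)}$; by the first step the existential quantifier over $\pi$ may be restricted to truth tables of polynomial-size ${\cal C}$-circuits. The faster nondeterministic algorithm for $L$ then works as follows: on input $x$, guess a ${\cal C}$-circuit $E$ of size $n^{O(1)}$ (a candidate proof); form the circuit $G_x(r)$ that computes the $O(1)$ query positions from $x$, evaluates $E$ at each of them, and applies the decision predicate --- using that ${\cal C}$ is closed under AND, projections and composition and that the simple verifier logic can be folded into ${\cal C}$, this $G_x$ is a ${\cal C}$-circuit of size $n^{O(1)}$ on $n + O(\log n)$ inputs --- and then run $A$ on $\neg G_x$ to decide whether $G_x$ accepts on all $r$, accepting $x$ iff $\neg G_x$ is unsatisfiable. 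Choosing the exponent $k$ in the hypothesis on $A$ large enough to dominate the fixed polynomial overheads, this runs in nondeterministic time $2^n \cdot n^{O(1)} / n^{k} < 2^n/n$; since the procedure is correct on the infinitely many input lengths supplied by the first step, this places $L$ in $\NTIME[2^n/n]$ infinitely often, the desired contradiction with $\NTIME[2^n] \not\subset \NTIME[2^n/n]$.

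The main obstacle is that first step: establishing the easy-witness lemma \emph{with witness circuits drawn from ${\cal C}$} (not merely from $\P/\poly$), and doing so while correctly tracking the ``infinitely often'' quantifier it forces --- the diagonalization must be arranged (the choice of $L$, the padding, and the infinitely-often form of the nondeterministic time hierarchy) so that the i.o.\ speedup and the i.o.\ hierarchy separation actually meet. A secondary and more routine task is the parameter bookkeeping in the last step --- balancing the size of the guessed circuit $E$, the blow-up in forming $G_x$, the quasilinear PCP proof length, and the polynomial savings $A$ supplies --- together with verifying that the PCP verifier's local computation really is simple enough (absorbable into ${\cal C}$) to keep $G_x$ inside the class.
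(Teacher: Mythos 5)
This theorem is quoted by the paper from prior work (\cite{Williams-SICOMP13,Williams-jacm14}); the paper gives no proof of it, so the comparison is against the known argument. Your sketch follows that argument's skeleton correctly: assume $\NEXP$ has small ${\cal C}$-circuits, invoke the IKW easy-witness machinery, reduce an $\NTIME[2^n]$ language to checking that a locally-computable verifier accepts a witness encoded by a small circuit, guess that circuit, and use the SAT algorithm on the composed checker to get a nondeterministic speedup contradicting the nondeterministic time hierarchy. Two points need repair, though. First, your ``infinitely often'' framing is both inaccurate and harmful to the argument: the easy witness lemma under $\NEXP \subset \Ppoly$ gives, for every $\NEXP$ verifier, easy witnesses for \emph{every} yes-instance (every input length), not just infinitely many lengths. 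That stronger conclusion is what makes the proof work, because the speedup it yields is correct on all inputs and the ordinary nondeterministic time hierarchy theorem (Seiferas--Fischer--Meyer/\v{Z}\'ak) then gives the contradiction. If, as you write, the simulation were guaranteed correct only on infinitely many input lengths, the standard hierarchy theorem would \emph{not} suffice --- you would need an almost-everywhere separation (every $\NTIME[2^n/n]$ machine fails on all but finitely many lengths), which is not what the usual diagonalization provides and cannot simply be cited as ``the infinitely-often form'' of the hierarchy theorem.

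Second, the step you flag as ``the technical crux'' --- getting the witness circuits to lie in ${\cal C}$ rather than merely in $\Ppoly$ --- does not require proving a new, ${\cal C}$-specific easy witness lemma. The standard resolution is a composition trick: $\NEXP \subseteq \size_{\mathcal{C}}(\poly)$ implies $\P \subseteq \size_{\mathcal{C}}(\poly)$, hence the Circuit Evaluation problem has polynomial-size ${\cal C}$-circuits, and so any polynomial-size general circuit (in particular the easy-witness circuit from the ordinary IKW lemma) can be converted into an equivalent polynomial-size ${\cal C}$-circuit by hardwiring its description into the ${\cal C}$-circuit for Circuit Evaluation. (The present paper uses exactly this trick in its corollary removing the depth/modulus dependence for $\EMAJ \circ \AC^0[m] \circ \THR$.) The same observation also covers your worry about whether the PCP/verifier logic can be ``folded into ${\cal C}$'': either one uses the closure assumptions in the theorem statement, or, as remarked in the paper, one settles for a SAT algorithm for a slightly larger class (e.g.\ ANDs of ORs of a few ${\cal C}$-circuits). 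With the easy witness lemma stated in its correct all-lengths form and the Circuit Evaluation conversion supplied, your outline becomes the standard proof.
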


To illustrate Theorem~\ref{w-thm} with two examples, when ${\cal C}$ is the class of general fan-in 2 circuits, Theorem~\ref{w-thm} says that non-trivial Circuit SAT algorithms imply $\NEXP \not\subset \P/\poly$; when ${\cal C}$ is the class of Boolean formulas, it says non-trivial Formula-SAT algorithms imply $\NEXP \not\subset \NC^1$. Both are major open questions in circuit complexity. Theorem~\ref{w-thm} and related results have been applied to prove several concrete circuit lower bounds: super-polynomial lower bounds for $\ACC^0$~\cite{Williams-jacm14}, $\ACC^0 \circ \THR$~\cite{Williams-TOC18}, quadratic lower bounds for depth-two symmetric and threshold circuits~\cite{Tamaki16,AlmanCW16}, and average-case lower bounds as well~\cite{ChenOS18,Chen19}.

Recently, the algorithms-to-lower-bounds connection has been extended to show a trade-off between the running time of the SAT algorithm on large circuits, and the complexity of the hard function in the lower bound. In particular, it is even possible in principle to obtain circuit lower bounds against $\NP$ with this algorithmic approach.

\begin{theorem}[\cite{MurrayW18}, Informal] \label{mw-thm} Let ${\cal C}$ be a class of circuits closed under unbounded AND, ORs of fan-in two, and negation. Suppose there is an algorithm $A$ and $\eps > 0$ such that, for every ${\cal C}$-circuit $C$ of $2^{n^{\eps}}$ size, $A$ solves satisfiability for $C$ in $O(2^{n-n^{\eps}})$ time. Then $\qNP$ does not have polynomial-size ${\cal C}$-circuits.\footnote{In this paper, we use the notation $\qNP := \bigcup_k \NTIME[n^{(\log n)^k}]$.}
\end{theorem}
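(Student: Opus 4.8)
This statement parallels Theorem~\ref{w-thm}, but with every complexity parameter pushed down to the quasipolynomial regime, and the plan is to run the same algorithmic method with all parameters rescaled to match the weaker hypothesis. I would argue by contradiction: assume $\qNP$ \emph{does} have polynomial-size ${\cal C}$-circuits, and combine this with the assumed algorithm $A$ to decide some $\NTIME[2^n]$-complete problem in nondeterministic time $2^n/n^{\omega(1)}$, contradicting the nondeterministic time hierarchy theorem. As in Theorem~\ref{w-thm} there are two ingredients: (i) an ``easy witness'' step that converts the circuit upper bound into succinct descriptions of accepting witnesses, and (ii) use of the ${\cal C}$-SAT algorithm to verify a guessed succinct witness quickly. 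The gap in strength between the two theorems comes entirely from (i): since $A$ gives only a $2^{n^{\eps}}$-factor speedup but tolerates circuits of size up to $2^{n^{\eps}}$, we can afford witnesses that are truth tables of ${\cal C}$-circuits of size $2^{n^{o(1)}}$ rather than $\poly(n)$, and that weaker easy-witness property should follow already from $\qNP\subseteq{\cal C}[\poly]$ instead of from $\NEXP\subseteq{\cal C}[\poly]$.

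\textbf{Locally checkable form and the easy-witness statement.} First I would fix a problem $\Pi$ complete for $\NTIME[2^n]$ and put it in ``random-local'' form via a short PCP (quasilinear proof length, $n+O(\log n)$ bits of randomness, $\poly(n)$ nonadaptive queries with simply computable positions and a simple decision predicate), replacing the natural verifier by the verifier $V'$ that, on $(x,\pi)$, runs the PCP verifier on all random strings and accepts iff all of them do; this still runs in $2^{O(n)}$ time, and by PCP completeness/soundness the accepting witnesses of $V'$ on $x$ are exactly the valid PCP proofs for ``$x\in\Pi$''. I then want: if $\qNP\subseteq{\cal C}[\poly]$, then for all but finitely many yes-instances $x$ of length $n$, $V'$ has an accepting witness that is the truth table of a ${\cal C}$-circuit of size $2^{n^{o(1)}}$ on $O(n)$ inputs. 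By padding it suffices to establish an easy-witness lemma for $\qNP$ itself: padding a length-$n$ instance of $\Pi$ up to length $N=2^{n^{\delta}}$ (any constant $\delta<\eps$) turns $\Pi$ into a problem in $\qNP$, and a $\poly(N)$-size witness circuit for the padded problem is a $2^{O(n^{\delta})}$-size witness circuit for $\Pi$ on $O(n)$ inputs.

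\textbf{The crux: an easy-witness lemma for $\qNP$.} Establishing this easy-witness lemma is where I expect the real work to be. I would follow the Impagliazzo--Kabanets--Wigderson template at the quasipolynomial scale: if some $\qNP$ verifier lacked small ${\cal C}$-circuit witnesses on infinitely many inputs, then one can produce, within $\qNP$, truth tables that are hard for ${\cal C}$; feeding such a truth table into a Nisan--Wigderson-type generator derandomizes the ${\cal C}$-bounded computations that arise, and together with a scaled Karp--Lipton collapse (the hypothesis forces $\qNP$ into an $\mathsf{MA}$-like class) this would push $\qNP$ below what the nondeterministic time hierarchy allows---a contradiction, so easy witnesses must exist. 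The subtle point, and the reason this is genuinely harder than the classical $\NEXP$ argument, is that the hardness extracted is only hardness \emph{against ${\cal C}$}, so the generator and every computation being derandomized must be kept inside (or close to) ${\cal C}$; it is exactly here that the closure hypotheses on ${\cal C}$ (unbounded AND, fan-in-two OR, negation) get used, to keep the needed simulations expressible in ${\cal C}$. Closing this bootstrapping loop is the main obstacle.

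\textbf{The algorithmic step, rescaled.} Granting the easy-witness lemma, I would decide $\Pi$ on input $x$ with $|x|=n$ as follows: nondeterministically guess a ${\cal C}$-circuit $D$ of size $2^{O(n^{\delta})}$ claimed to satisfy $\mathrm{tt}(D)=\pi$ for a valid PCP proof $\pi$; then build the circuit $E$ on $m=n+O(\log n)$ inputs that, on random string $r$, runs the PCP verifier on $r$, answering each of its queries by evaluating $D$ at the (projection-like) query positions, and outputs the verdict. By the closure properties $\neg E$ is again a ${\cal C}$-circuit, of size $\poly(n)\cdot 2^{O(n^{\delta})}\le 2^{m^{\eps}}$ since $\delta<\eps$, so running $A$ on $\neg E$ decides in time $2^{m-m^{\eps}}\le 2^{n}\poly(n)/2^{n^{\eps}}\le 2^{n}/n^{\omega(1)}$ whether $\neg E$ is unsatisfiable, that is, whether the PCP verifier accepts $\mathrm{tt}(D)$ on every $r$; accept iff so. If $x\in\Pi$ the easy-witness lemma supplies a good $D$, so some branch accepts; if $x\notin\Pi$, PCP soundness means no proof is accepted on all $r$, so every branch rejects. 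This puts $\NTIME[2^n]\subseteq\NTIME[2^n/n^{\omega(1)}]$, contradicting the nondeterministic time hierarchy theorem, and hence $\qNP\not\subseteq{\cal C}[\poly]$.
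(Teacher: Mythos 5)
First, note that the paper does not prove this statement at all: it is imported from Murray--Williams~\cite{MurrayW18} (and used later only through the formal black-box version, Theorem~\ref{thm:nqp-lb-CAPP}), so the relevant comparison is with the proof in that work. Your overall frame is the right one (assume $\qNP \subseteq {\cal C}$-$\SIZE[\poly]$, guess a succinct witness, verify it with the ${\cal C}$-SAT algorithm, contradict the nondeterministic time hierarchy), but the proposal has a genuine gap at exactly the step you flag as ``the crux.'' You propose to prove an easy witness lemma in which witnesses are truth tables of \emph{${\cal C}$-circuits}, via an IKW-style bootstrap in which the extracted hardness is hardness against ${\cal C}$ and a Nisan--Wigderson generator is kept inside ${\cal C}$. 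That is not the route that is known to work, and keeping the generator, the reconstruction, and the Karp--Lipton-style collapse all expressible in a weak class ${\cal C}$ (e.g.\ $\ACC^0$) is precisely the kind of thing nobody knows how to do; as written this step is an unproved (and likely unprovable-by-these-means) subgoal, not a proof. The actual argument needs only an easy witness lemma for \emph{general} circuits: from $\qNP \subseteq {\cal C}\text{-}\SIZE[\poly] \subseteq \SIZE[\poly]$ one gets small unrestricted witness circuits, and then one converts them into ${\cal C}$-circuits ``for free'' by using the hypothesis again --- $\P \subseteq \qNP$ has polynomial-size ${\cal C}$-circuits, so Circuit Evaluation does, so every small general circuit has an equivalent small ${\cal C}$-circuit, whose existence is all a nondeterministic guess-and-verify algorithm needs. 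This conversion trick (visible in this paper's own use of Claim~\ref{clm:small-ckt}, and in the corollary in Section~\ref{subsec:acc}) is what lets one avoid any ${\cal C}$-restricted derandomization machinery, and it is missing from your plan.

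Second, even for general circuits, the easy witness lemma for $\qNP$/$\NP$ is not a ``rescaling'' of the $\NEXP$ argument: the IKW proof exploits collapses (e.g.\ $\NEXP \subseteq \Ppoly \Rightarrow \NEXP = \EXP$) that do not scale down to quasi-polynomial time, and supplying a new easy witness lemma at this scale is the main technical contribution of~\cite{MurrayW18}, with parameters (the $k$ vs.\ $ck^4$ trade-off in Theorem~\ref{thm:nqp-lb-CAPP}) that your sketch does not engage with. A smaller but real issue is your claim that the composed circuit $E$ (PCP verifier with queries answered by the guessed witness circuit) lies in ${\cal C}$ ``by the closure properties'': the listed closure operations do not by themselves put the verifier's query-generation and decision predicate inside ${\cal C}$; one needs the highly uniform succinct reductions/PCPs of~\cite{Williams-jacm14,MurrayW18} whose outer structure is simple enough (projections, small ANDs/ORs) to compose while staying in (a mild extension of) ${\cal C}$. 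So the skeleton is right, but the load-bearing lemma is missing and the route you propose toward it is the wrong one.
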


In fact, Theorem~\ref{mw-thm} holds even if $A$ only distinguishes between unsatisfiable circuits from those with at least $2^{n-1}$ SAT assignments; we call this easier problem GAP-UNSAT. 

Intuitively, the aforementioned results show that as the circuit satisfiability algorithms improve in running time and scope, they imply stronger lower bounds. In all known results, to prove a lower bound against ${\cal C}$, one must design a SAT algorithm for a circuit class that is at least as powerful as ${\cal C}$. Inspecting the proofs of the above theorems carefully, it is not hard to show that, even if ${\cal C}$ did not satisfy the desired closure properties, it would suffice to give a SAT algorithm for a slightly more powerful class than the lower bound. For example, in Theorem~\ref{mw-thm}, a SAT algorithm running in $O(2^{n-n^{\eps}})$ time for $2^{n^{\eps}}$-size AND of ORs of three (possibly negated) ${\cal C}$ circuits (on $n$ inputs, of $2^{n^{\eps}}$ size) would still imply ${\cal C}$-circuit lower bounds for $\qNP$. Our key point here is that \emph{these proof methods require a SAT algorithm for a potentially more powerful circuit class than the class for which we can conclude a lower bound.} A compelling question is whether this requirement is an artifact of our proof method, or is it inherent?

\paragraph*{Lower bounds for more powerful classes from SAT algorithms?} We feel it is natural to conjecture that a SAT algorithm for a circuit class ${\cal C}$ implies a lower bound against a class that is \emph{more powerful} than ${\cal C}$, because checking satisfiability is itself a very powerful ability. Intuitively, a non-trivial SAT algorithm for ${\cal C}$ on $n$-input circuits is computing a \emph{uniform OR} of $2^n$ ${\cal C}$-circuits evaluated on fixed inputs, in $o(2^n)$ time. (Recall that a ``uniform'' circuit informally means that any gate of the circuit can be efficiently computed by an algorithm.) If there were an algorithm to decide the outputs of uniform ORs of ${\cal C}$-circuits more efficiently than their actual circuit size, perhaps this implies a lower bound against $\OR \circ {\cal C}$ circuits.

Similarly, a $\#$SAT algorithm for ${\cal C}$ on $n$-input circuits can be used to compute the output of any circuit of the form $f(C(x_1),\ldots,C(x_{2^n}))$ where $f$ is a uniform symmetric Boolean function, $C$ is a ${\cal C}$-circuit with $n$ inputs, and $x_1,\ldots,x_{2^n}$ is an enumeration of all $n$-bit strings. Should we therefore expect to prove lower bounds on symmetric functions of ${\cal C}$-circuits, using a $\#$SAT algorithm? This question is particularly significant because in many of the concrete lower bounds proved via the program~\cite{Williams-jacm14,Williams-TOC18,MurrayW18}, non-trivial $\#$SAT algorithms were actually obtained, not just SAT algorithms. So our question amounts to asking: \emph{how strong of a circuit lower bound we can prove, given the SAT algorithms we already have?} We use $\SYM$ to denote the class of Boolean symmetric functions.

\begin{conjecture}[$\#$SAT Algorithms Imply Symmetric Function Lower Bounds, Informal] \label{conj} Non-trivial $\#$SAT algorithms for circuit classes ${\cal C}$ imply size lower bounds against $\SYM \circ {\cal C}$ circuits. In particular, all statements in Theorem~\ref{w-thm} and Theorem~\ref{mw-thm} hold when the SAT algorithm is replaced by a $\#$SAT algorithm, and the lower bound consquence for ${\cal C}$ is replaced by $\SYM \circ {\cal C}$.
\end{conjecture}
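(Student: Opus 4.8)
I would aim for the \emph{sparse} case of the conjecture, which (as the abstract foreshadows) is what this approach delivers. Negate the conclusion: suppose $\NEXP$ (resp.\ $\qNP$) \emph{does} have polynomial-size $f\circ{\cal C}$-circuits for a fixed sparse symmetric $f$. By the remark after Theorem~\ref{mw-thm}, the lower-bound class need not have good closure properties: it suffices to exhibit a non-trivial satisfiability algorithm (for $\NEXP$, via Theorem~\ref{w-thm}) or GAP-UNSAT algorithm (for $\qNP$, via Theorem~\ref{mw-thm}) for $f\circ{\cal C}$-circuits augmented by the few top AND/OR/NOT gates those theorems attach; write $(f\circ{\cal C})^{+}$ for this augmented class. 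Under the negated assumption, the easy-witness step (the new easy witness lemma underlying Theorem~\ref{mw-thm}, which should apply since $f\circ{\cal C}$ inherits ``typicality'' from ${\cal C}$) gives $\NEXP$/$\qNP$-verifiers with succinct witnesses computed by small $f\circ{\cal C}$-circuits, and the remaining steps — guess the witness circuit, verify it by one SAT/GAP-UNSAT call, contradict a time hierarchy — run exactly as in~\cite{Williams-jacm14,MurrayW18}. So the whole thing reduces to: \emph{a non-trivial $\#$SAT algorithm for ${\cal C}$ yields a non-trivial SAT / GAP-UNSAT algorithm for $(f\circ{\cal C})^{+}$.}

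\textbf{The technical core.} Let $D(x)=f(C_1(x),\dots,C_m(x))$ with $C_i\in{\cal C}$. As $f$ is symmetric, $D(x)=1$ iff $\sum_i C_i(x)\in S$ for the accepting-count set $S\subseteq\{0,\dots,m\}$ of $f$, with $|S|=c=O(1)$ by sparsity; thus $\#\mathrm{SAT}(D)=\sum_{s\in S}N_s$ where $N_s:=\#\{x:\sum_iC_i(x)=s\}$. The plan is to compute each of the $O(1)$ level-set counts $N_s$ with $\poly(n)$ calls to the ${\cal C}$-$\#$SAT algorithm, and likewise for the verification circuit of the framework, which is an $O(1)$-size Boolean combination of $O(1)$ copies of the witness circuit (expand it into its canonical DNF over those $f\circ{\cal C}$-subcircuits, whose minterms are mutually exclusive, so that deciding/counting it again reduces to level-set counts). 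Two natural routes to $N_s$: (i) inclusion--exclusion, $N_s=\sum_{j}(-1)^{j-s}\binom{j}{s}M_j$ with $M_j=\#\{(x,A):A\in\binom{[m]}{j},\ \bigwedge_{i\in A}C_i(x)=1\}$, each $M_j$ a single ${\cal C}$-$\#$SAT call on a circuit whose fresh inputs encode the $j$-subset $A$; (ii) CRT, $[\sum_iC_i(x)=s]=\bigwedge_{\ell}\mathrm{MOD}_{p_\ell}^{\,s\bmod p_\ell}(C_1(x),\dots,C_m(x))$ for small primes with $\prod_\ell p_\ell>m$, a single ${\cal C}$-$\#$SAT call on the \emph{same} inputs, provided ${\cal C}$ absorbs MOD gates.

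\textbf{The main obstacle.} The genuine difficulty — and I expect the crux of the paper — is that the values in $S$ need not be $O(1)$; the motivating case is $\EMAJ$, where $S=\{m/2\}$. Then route (i) needs moments $M_j$ up to $j\approx m/2$, i.e.\ $\Theta(m\log m)$ extra input bits, wiping out any $2^n$-type saving; and route (ii) needs MOD gates of super-constant modulus $\Theta(\log m)$, which $\ACC^0$ — hence the headline target $\ACC^0\circ\THR$ — does not contain. So the heart of the argument is to realize a single level-set count $N_s$ for \emph{arbitrary} $s$ via ${\cal C}$-$\#$SAT calls that stay simultaneously inside ${\cal C}$ and on $n+O(\log n)$ inputs. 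I would attack this by (a) fixing the precise notion of ``typical'' ${\cal C}$ — plausibly: closed under composition with $\AC^0$, under AND/OR, and under a counting primitive rich enough to emulate an exact-threshold \emph{count} on its own sub-outputs even when the exact-threshold \emph{function} is not in ${\cal C}$ (this is exactly why $\#$SAT rather than SAT is needed, and here the internal structure of known $\#$SAT algorithms — e.g.\ their passage through $\SYM\circ\AND$-type normal forms, which compose cleanly with a sparse symmetric top gate — is the natural lever), and (b) using $|S|=O(1)$ to combine the $c$ level-set computations and the $O(1)$-fan-in glue of the verification circuit with only polynomial blow-up in the number of queries and in circuit size. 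Two routine-looking but load-bearing points would still need checking: that the easy witness lemma really goes through for $f\circ{\cal C}$, and that inserting this reduction leaves the error/consistency circuit inside $(f\circ{\cal C})^{+}$. Finally, the full $\SYM$ version of the conjecture would demand all of $N_0,\dots,N_m$ — $\poly(m)$ level-set counts rather than $O(1)$ — within the same budget, which is the gap I would not expect this route to close.
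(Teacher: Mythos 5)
There is a genuine gap, and it sits exactly where you park it: your whole plan funnels through computing the level-set counts $N_s$ exactly so as to hand a bona fide SAT/GAP-UNSAT algorithm for $(f\circ{\cal C})^{+}$ to Theorems~\ref{w-thm}/\ref{mw-thm} as black boxes, you correctly observe that both of your routes (inclusion--exclusion and CRT/MOD gates) collapse for the motivating case $\EMAJ$ (e.g.\ $s=m/2$ over $\ACC^0\circ\THR$), and the fix you gesture at in (a)--(b) is not an argument. The paper never solves that problem. Its key move is to \emph{give up on exact (or even additive-approximate) counting}: for $D(x)=[\sum_i C_i(x)=s]$ one takes $E(x)=(\sum_i C_i(x)-s)^2$ (a product over the $O(1)$ accepting counts in the $k$-sparse case), which vanishes on satisfying $x$ and lies in $[1,\poly]$ on falsifying $x$; expanding by distributivity, $E$ is a nonnegative sum of ANDs of pairs of ${\cal C}$-circuits on the \emph{same} $n$ inputs (Lemma~\ref{lem:emaj-to-psum}, the $\PSUM\circ{\cal C}$ representation of $\neg(\EMAJ\circ{\cal C})$), so $\sum_x E(x)$ costs only $O(t^2)$ ${\cal C}$-$\#$SAT calls and yields a \emph{multiplicative} approximation to the number of falsifying assignments. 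That approximation is far too weak for your framing (it cannot decide satisfiability of a given $f\circ{\cal C}$ circuit), so the paper does not route through a SAT algorithm for an augmented class at all.

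Instead, the multiplicative slack is absorbed by manufacturing a huge gap before the counting step: the GAP-UNSAT instance is first strengthened via a hitting-set construction to a $0$ vs.\ $1-1/g(n)$ gap (Theorem~\ref{lem:nqp-lb-CAPP2}), then reduced, via a PCP of proximity composed with an error-correcting code~\cite{ChenW19}, derandomized serial repetition, and the FGLSS reduction, to generalized independent-set instances $G_D(\pi_x)$ in which every input $x$ with $D(x)=0$ yields an independent set of size $\kappa$ while $D(x)=1$ forces size at most $\kappa/2^{k}$ (Lemma~\ref{lem:comb}); the new technical content is precisely that serial repetition and FGLSS behave well \emph{with respect to partial assignments} $\pi_x$, so the same graph works uniformly over all $x$. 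The easy-witness step is then applied not to the original problem but to the map $(x,i)\mapsto$ ``is vertex $i$ in the size-$\kappa$ independent set for $x$'', which under the negated lower bound is computed by a small $\EMAJ\circ{\cal C}$ circuit; one guesses it, converts its negation to $\PSUM\circ{\cal C}$, and uses ${\cal C}$-$\#$SAT calls to verify edge constraints, verify consistency with $\pi_x$ (this is where parity/$\ENC$ in ${\cal C}$ is used), and estimate $\sum_{x,i}R(x,i)$, where the $2^{k}$ gap dwarfs the factor-$t$ counting slack. So the missing ideas in your proposal are (1) multiplicative approximation of falsifying-assignment counts in place of level-set counting, and (2) PCP-based gap amplification compatible with partial assignments to make that approximation sufficient; without them, the step you label the technical core is exactly the step that fails.
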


If Conjecture~\ref{conj} is true, then existing $\#$SAT algorithms would already imply super-polynomial lower bounds for $\SYM \circ \ACC^0 \circ \THR$ circuits, a class that contains depth-two symmetric circuits (for which no lower bounds greater than $n^2$ are presently known)~\cite{Tamaki16,AlmanCW16}.

More intuition for Conjecture~\ref{conj} can be seen from a recent paper of the second author, who showed how $\#$SAT algorithms for a circuit class ${\cal C}$ can imply lower bounds on \emph{(real-valued) linear combinations of ${\cal C}$-circuits}~\cite{Williams-CCC18}. For example, known $\#$SAT algorithms for $\ACC^0$ circuits imply $\qNP$ problems cannot be computed via polynomial-size linear combinations of polynomial-size $\ACC^0 \circ \THR$ circuits. However, the linear combination representation is rather constrained: the linear combination is required to always output $0$ or $1$. Applying PCPs of proximity, Chen and Williams~\cite{ChenW19} showed that the lower bound of~\cite{Williams-CCC18} can be extended to ``approximate'' linear combinations of ${\cal C}$-circuits, where the linear combination does not have to be exactly $0$ or $1$, but must be closer to the correct value than to the incorrect one, within an additive constant factor. These results show, in principle, how a $\#$SAT algorithm for a circuit class ${\cal C}$ can imply lower bounds for a stronger class of representations than ${\cal C}$.

\subsection{Conjecture~\ref{conj} Holds for Sparse Symmetric Functions}

In this paper, we take a concrete step towards realizing Conjecture~\ref{conj}, by proving it for ``sparse'' symmetric functions. We say a symmetric Boolean function $f(x_1,\ldots,x_n)$ is \emph{$k$-sparse} if $f$ is $1$ on at most $k$ values of $\sum_i x_i$. The $1$-sparse symmetric functions are called the \emph{exact threshold} ($\ETHR$ with polynomial weights) or \emph{exact majority} ($\EMAJ$) functions, which have been studied for years in both circuit complexity~(e.g.~\cite{Green00,BeigelTT92,Hansen07,Hansen09,HansenP10}) and structural complexity theory, where the corresponding complexity class (computing an exact majority over all computation paths) is known as $\mathsf{C}_{=}\mathsf{P}$~\cite{Wagner86}. 

\begin{theorem}\label{thm:main} Let ${\cal C}$ be closed under $\AND_2$, negation, and suppose the all-ones and parity function are in ${\cal C}$. Let $f=\{f_n\}$ be a family of $k$-sparse symmetric functions for some $k = O(1)$. 
\begin{itemize}
    \item If there is a $\#$SAT algorithm for $n^k$-size ${\cal C}$-circuits running in $2^n/n^k$ time (for all $k$), then $\NEXP$ does not have $f \circ {\cal C}$-circuits of polynomial size.
    \item If there is a $\#$SAT algorithm for $2^{n^{\eps}}$-size ${\cal C}$-circuits running in $2^{n-n^{\eps}}$ time (for some $\eps > 0$), then $\qNP$ does not have $f \circ {\cal C}$-circuits of polynomial size. 
\end{itemize}
\end{theorem}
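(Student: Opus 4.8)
The plan is to adapt the "easy witness / collapse" proof strategy behind Theorems 1.1–1.2 (the Williams and Murray–Williams theorems), but instead of relying on closure properties of ${\cal C}$ to absorb the structural overhead of the argument, I will use the $\#$SAT algorithm for ${\cal C}$ to directly evaluate the relevant sparse-symmetric-of-${\cal C}$ computations that arise. Concretely, suppose for contradiction that $\NEXP$ (resp. $\qNP$) has polynomial-size $f \circ {\cal C}$-circuits, where $f$ is $k$-sparse for constant $k$. The known pipeline gives us: (i) by the hardness-vs-randomness / Impagliazzo–Kabanets–Wigderson style collapse, a lower-bound failure implies $\NEXP$ (resp. the relevant quasipolynomial class) has \emph{succinct} easy witnesses, which in turn collapses down to $\P/\poly$-type statements; (ii) the witness-circuit for the hard language can be taken from the class $f \circ {\cal C}$; (iii) to derive a contradiction it suffices to have a nontrivial algorithm that, given an $f \circ {\cal C}$ circuit $D$ on $m$ inputs, either counts or at least gap-distinguishes the number of satisfying assignments of $D$. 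So the heart of the matter reduces to: \emph{a $\#$SAT algorithm for ${\cal C}$ yields a nontrivial (GAP-)$\#$SAT or SAT algorithm for $f \circ {\cal C}$.}

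For that reduction, write the top $f$-gate of $D$ as $f(C_1(x),\ldots,C_t(x))$ where each $C_i$ is a ${\cal C}$-circuit on the $m$ input variables and $t = \poly(m)$. Because $f$ is symmetric, $D(x)$ depends only on $s(x) := \sum_{i=1}^t C_i(x)$, and because $f$ is $k$-sparse, $D(x)=1$ iff $s(x) \in \{a_1,\ldots,a_k\}$ for the $O(1)$ accepting values $a_1,\ldots,a_k$. Now the key trick: counting $|\{x : s(x) = a\}|$ for a fixed $a$ can be done by a $\#$SAT call on a single ${\cal C}$-circuit, using the hypothesis that ${\cal C}$ contains parity and the all-ones function and is closed under $\AND_2$ and negation. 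The standard device is to introduce $O(\log t)$ fresh ``selector'' variables $y$ and build a ${\cal C}$-circuit $G(x,y)$ that, on input $(x,y)$, outputs (essentially) the $y$-th bit in the concatenation of the $C_i(x)$'s — or better, a $\#$SAT-friendly encoding whereby summing over all $y$ recovers $s(x)$; combined with an inclusion–exclusion / polynomial-interpolation over the $k+1$ values $s(x) \in \{0,1,\ldots\}$ one needs to separate, one extracts $N_a := |\{x : s(x)=a\}|$ for each $a_j$ from $O(k) = O(1)$ many $\#$SAT calls on ${\cal C}$-circuits of size $\poly(t)\cdot |D| = \poly(|D|)$ over $m + O(\log t)$ variables. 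Since $O(\log t)$ is logarithmic in the circuit size, the running time $2^{n}/n^k$ (resp. $2^{n-n^\eps}$) of the ${\cal C}$-$\#$SAT algorithm on $n := m + O(\log m)$ inputs is still $2^m/m^{\omega(1)}$ (resp. $2^{m - m^{\Omega(1)}}$) — i.e., nontrivial for $D$. Summing the $N_{a_j}$ gives $\#$SAT for $D$ exactly (or the GAP-UNSAT distinguisher in the $\qNP$ case), which is precisely what the Williams / Murray–Williams machinery needs to feed back in and produce the contradiction.

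The step I expect to be the main obstacle is engineering the ${\cal C}$-circuit that lets a single $\#$SAT call recover the multiset statistics of $\{C_i(x)\}_i$ while staying inside ${\cal C}$ under only the weak closure assumptions allowed (just $\AND_2$, negation, plus parity and all-ones as available gates) — in particular we cannot freely take unbounded $\OR$s or $\AND$s of the $C_i$, so the ``sum over selector variables $=$ sum of the $C_i$'' gadget must be built carefully, presumably by arithmetizing $\sum_i C_i(x)$ bit-by-bit using parity gates and controlled $\AND_2$s, and then reading off $s(x) \bmod 2^{\lceil \log(t+1)\rceil}$ via the selector variables. A secondary technical point is verifying the uniformity bookkeeping: the witness circuit supplied by the collapse step must be uniform enough that the composed ${\cal C}$-circuit $G$ is constructible within the time budget, and the translation between ``$\#$SAT with $O(\log)$ extra variables'' and ``exact count of $s(x)=a$'' must respect the exact-counting (as opposed to approximate) requirement — this is where $k = O(1)$ is essential, since it bounds the number of calls and the degree of any interpolation by a constant, so no polynomial blowup in either time or circuit size occurs. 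Once these gadgets are in place, plugging the resulting nontrivial algorithm for $f \circ {\cal C}$ into Theorem~\ref{w-thm} / Theorem~\ref{mw-thm} (applied with the class $f \circ {\cal C}$ in place of ${\cal C}$, whose required closure properties for \emph{those} theorems are what we verify directly rather than assume) closes the argument.
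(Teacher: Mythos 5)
There is a genuine gap, and it sits exactly at the step you call ``the heart of the matter'': the claim that a $\#$SAT algorithm for ${\cal C}$ yields an exact (or even gap) counting algorithm for $f \circ {\cal C}$. The quantities you can actually obtain from $\#$SAT calls on ${\cal C}$-circuits of polynomial size are sums of the form $\sum_x \prod_{\ell=1}^{j} C_{i_\ell}(x)$ for \emph{constantly many} factors (closure under $\AND_2$ with polynomial blow-up only tolerates $O(1)$-fold products), i.e.\ the constant-degree moments $\sum_x s(x)^j$ of $s(x)=\sum_i C_i(x)$; your selector-variable gadget gives only the first moment $\sum_x s(x)$ (and, under the stated closure assumptions, the $t$-ary multiplexer is itself not obviously a polynomial-size ${\cal C}$-circuit). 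But the distribution of $s(x)$ over $\{0,\ldots,t\}$ has $t+1$ unknowns, so $O(k)=O(1)$ linear constraints from low moments cannot determine $N_a=|\{x: s(x)=a\}|$, no interpolation or inclusion--exclusion notwithstanding; pinning down $N_a$ exactly would require degree-$t$ moments, i.e.\ ANDs of $t$ circuits and exponentially many calls. The best one can extract with constantly many calls is a \emph{multiplicative approximation to the number of falsifying assignments}, via $E(x)=\prod_j(s(x)-a_j)^2$, whose value per falsifying $x$ lies anywhere in $[1,(t+\max_j a_j)^{2k}]$. This is not enough even for GAP-UNSAT with the standard gap: if $D$ is unsatisfiable the weighted sum is at least $2^m$, while if $D$ has $2^{m-1}$ satisfying assignments it can still be as large as $\poly(t)\cdot 2^{m-1}\ge 2^m$, so the two cases are indistinguishable. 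The paper explicitly flags this obstruction (``we cannot use \eqref{sum-for-E} to directly solve $\#$SAT for $D$'') and its whole contribution is the workaround: amplify the gap with a hitting set (Theorem~\ref{lem:nqp-lb-CAPP2}), push the instance through a PCP of proximity plus error-correcting code, derandomized serial repetition, and the FGLSS reduction, all analyzed \emph{with respect to partial assignments} (Lemma~\ref{lem:comb}), guess an easy-witness circuit for the independent sets in $\EMAJ\circ{\cal C}$ form, convert it to a $\PSUM\circ{\cal C}$ circuit (Lemma~\ref{lem:emaj-to-psum}), and only then use the ${\cal C}$-$\#$SAT algorithm, on genuine ${\cal C}$-circuits, to verify and count; the polynomially large YES/NO gap manufactured by the PCP machinery is what makes the crude multiplicative approximation suffice.

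A secondary problem is your closing step: even if you had a nontrivial SAT/GAP-UNSAT algorithm for $f\circ{\cal C}$, invoking Theorem~\ref{w-thm}/Theorem~\ref{mw-thm} with $f\circ{\cal C}$ in place of ${\cal C}$ requires closure properties (negation, AND/OR combinations) that $\EMAJ\circ{\cal C}$ is not known to possess — the paper's footnote on exact threshold circuits notes precisely this — or else an algorithm for a strictly larger class (e.g.\ ANDs of ORs of such circuits), which your reduction does not provide. So both the counting reduction and the plug-in to the black-box theorems fail as stated; the paper's proof is structured the way it is precisely to avoid ever needing a satisfiability-type algorithm for the class $f\circ{\cal C}$ itself.
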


Applying known $\#$SAT algorithms for $\AC^0[m] \circ \THR$ circuits from~\cite{acc-algo}, we obtain:

\begin{corollary} For all constant depths $d \geq 2$ and constant moduli $m \geq 2$, $\qNP$ does not have polynomial-size $\EMAJ \circ \AC^0[m] \circ \THR$ circuits.
\end{corollary}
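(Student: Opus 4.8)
The plan is to obtain the corollary as a direct instantiation of Theorem~\ref{thm:main} (second bullet), taking $f = \EMAJ$ and ${\cal C} = \AC^0[m]\circ\THR$; the only content is to verify the hypotheses on ${\cal C}$, to check that $\EMAJ$ is sparse, and to quote a $\#$SAT algorithm whose parameters fit the second bullet. Checking $f$ first: the exact-majority function $\EMAJ_n$ outputs $1$ precisely when $\sum_i x_i = \lceil n/2\rceil$, i.e.\ on a single value of $\sum_i x_i$, so the family $\{\EMAJ_n\}$ is $1$-sparse, hence $k$-sparse for the constant $k=1$ demanded by Theorem~\ref{thm:main}.

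Next I would check the closure hypotheses for ${\cal C} = \AC^0[m]\circ\THR$. Closure under $\AND_2$ and under negation is immediate: placing a fan-in-two $\AND$ gate on top of two $\AC^0[m]\circ\THR$ circuits, or a negation on top of one, again yields an $\AC^0[m]\circ\THR$ circuit, at the cost of a constant increase in depth. (A constant additive slack in depth is harmless: the reduction underlying Theorem~\ref{thm:main} and the $\#$SAT algorithm below both operate at a fixed constant depth, so one runs the argument separately for each target depth $d$, invoking everything at the slightly larger --- still constant --- depth it requires.) The all-ones function (equivalently the constant $1$) is trivially in ${\cal C}$, realized by a single $\THR$ or $\AND$ gate. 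For parity, a single layer of $\THR$ gates already suffices: writing $[\,\cdot\,]$ for the Iverson bracket,
\[
\bigoplus_i x_i \;=\; \bigvee_{\substack{1\le j\le n\\ j\text{ odd}}}\Big(\,[\,\textstyle\sum_i x_i\ge j\,]\;\wedge\;\neg[\,\textstyle\sum_i x_i\ge j+1\,]\,\Big),
\]
an $O(n)$-size, depth-$3$ circuit over unit-weight $\THR$ gates; thus $\bigoplus_i x_i \in \AC^0\circ\THR \subseteq \AC^0[m]\circ\THR = {\cal C}$ for every $m\ge 2$. This construction uses only threshold gates --- no $\mathrm{MOD}$ gates --- so it is insensitive to whether $\mathrm{MOD}_2$ lies in $\AC^0[m]$.

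Finally I would invoke the known $\#$SAT algorithm for $\AC^0[m]\circ\THR$ circuits of \cite{acc-algo}: for every constant depth and every modulus $m\ge 2$ there is an $\eps>0$ (depending on the depth and on $m$) such that $\#$SAT on $n$-input $\AC^0[m]\circ\THR$ circuits of size at most $2^{n^{\eps}}$ can be solved in time $2^{n-n^{\eps}}$. This matches the hypothesis of the second bullet of Theorem~\ref{thm:main} with ${\cal C} = \AC^0[m]\circ\THR$, which then yields $\qNP\not\subseteq f\circ{\cal C} = \EMAJ\circ\AC^0[m]\circ\THR$ of polynomial size, for each $m\ge 2$ and each constant depth $d\ge 2$ --- the corollary. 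There is no genuine obstacle; the only point requiring a little care is to confirm that the quoted $\#$SAT algorithm really does run in the size window $2^{n^{\eps}}$ and time window $2^{n-n^{\eps}}$ for a single fixed $\eps>0$ at each constant depth (so that the depth-dependence of $\eps$ does not interfere with the per-depth application of Theorem~\ref{thm:main}), after which everything is bookkeeping.
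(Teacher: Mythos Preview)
Your proposal is correct and matches the paper's own proof: the paper likewise verifies that $\AC^0[m]\circ\THR$ satisfies the closure hypotheses (it is ``typical'' and computes parity/the linear $\ENC$ map), quotes the $\#$SAT algorithm of \cite{acc-algo} with the $2^{n^{\eps}}$-size / $2^{n-n^{\eps}}$-time parameters, and applies the main theorem per fixed depth and modulus. Your explicit $\AC^0\circ\THR$ construction for parity is a slightly more careful justification than the paper gives (and correctly avoids relying on $\mathrm{MOD}_2\in\AC^0[m]$ for odd $m$), but otherwise the arguments are the same.
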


\subsection{Intuition}\label{subsec:int} Here we briefly explain the new ideas that lead to our new circuit lower bounds. 

As in prior work~\cite{Williams-CCC18,ChenW19}, the high-level idea is to show that if (for example) $\qNP$ has polynomial-size $\EMAJ \circ \mathcal{C}$, and there is a $\#$SAT algorithm for ${\cal C}$ circuits, then we can design a nondeterministic algorithm for verifying GAP Circuit Unsatisfiability (GAP-UNSAT) on generic circuits that beats exhaustive search. In GAP-UNSAT, we are given a generic circuit and are promised that it is either unsatisfiable, or at least half of its possible assignments are satisfying, and we need to nondeterministically prove the unsatisfiable case. (Note this is a much weaker problem than SAT.) As shown in~\cite{Williams-SICOMP13,Williams-jacm14,MurrayW18}, combining a nondeterministic algorithm for GAP-UNSAT with the hypothesis that $\qNP$ has polynomial-size circuits, we can derive that nondeterministic time $2^n$ can be simulated in time $o(2^n)$, contradicting the nondeterministic time hierarchy theorem. 

Our key idea is to use probabilistically checkable proofs (PCPs) in a new way to exploit the power of a $\#$SAT algorithm. First, let's observe a task that a $\#$SAT algorithm for ${\cal C}$ can compute on an $\EMAJ \circ {\cal C}$ circuit. Suppose our $\EMAJ \circ {\cal C}$ circuit has the form
\[D(x) = \left[\sum_{i=1}^t C_i(x) = s\right],\] where each $C_i(x)$ is a Boolean ${\cal C}$-circuit on $n$ inputs, $s$ is a threshold value, and our circuit outputs $1$ if and only if the sum of the $C_i$'s equals $s$.\footnote{We are using the standard Iverson bracket notation, where $[P]$ is $1$ if predicate $P$ is true, and $0$ otherwise.} Consider the expression 
\begin{align}\label{def-of-E}
   E(x) := \left(\sum_{i=1}^t C_i(x) - s\right)^2.
\end{align} Treated as a function, $E(x)$ outputs integers; $E(a) = 0$ when $D(a) = 1$, and otherwise $E(a) \in [1,(t+s)^2]$. We first claim that the quantity 
\begin{align}\label{sum-for-E}
    \sum_{a \in \{0,1\}^n} E(a)
\end{align} can be compute faster than exhaustive search using a faster $\#$SAT algorithm. To see this, using distributivity, we can rewrite \eqref{def-of-E} as \[E(x) = \sum_{i,j} (C_i \wedge C_j)(x) - 2 s \sum_i C_i(x) + s^2.\] Assuming ${\cal C}$ is closed under conjunction, each $C_i \wedge C_j$ is also a ${\cal C}$-circuit, and we can compute \[ \sum_{a \in \{0,1\}^n} E(a) = \sum_{i,j} \left(\sum_{a \in \{0,1\}^n}(C_i \wedge C_j)(a)\right) - 2 s \sum_i \left(\sum_{a \in \{0,1\}^n} C_i(a)\right) + s^2 \cdot 2^n\] by making $O(t^2)$ calls to a $\#$SAT algorithm. Thus we can compute \eqref{sum-for-E} using a $\#$SAT algorithm.

How is computing \eqref{sum-for-E} useful? This is where PCPs come in. We cannot use \eqref{sum-for-E} to directly solve $\#$SAT for $D$ (otherwise as $\#$SAT algorithms imply SAT algorithms we could apply existing work~\cite{Williams-jacm14}, and be done). But we can use \eqref{sum-for-E} to obtain a \emph{multiplicative approximation} to the number of assignments that \emph{falsify} $D$. In particular, each satisfying assignment is counted zero times in \eqref{sum-for-E}, and each falsifying assignment is counted between $1$ and (less than) $(t+s)^2$ times. We want to exploit this, and obtain a faster GAP-UNSAT algorithm. Given a circuit which is a GAP-UNSAT instance, we start by using an efficient hitting set construction \cite{hitting-set} to increase the gap of GAP-UNSAT. We obtain a new circuit $C(x)$ which is either UNSAT or has at least $2^n-o(2^n)$ satisfying assignments (Section~\ref{subsec:gap}). Next (Lemma~\ref{lem:ecc+pcpp}) we apply a PCP of Proximity and an error correcting code to $C$, yielding a 3-SAT instance over $x$ and extra variables, with constant gap (similar to Chen-Williams~\cite{ChenW19}), and we amplify this gap using standard serial repetition. Finally, we apply the FGLSS~\cite{FGLSS} reduction (Lemma~\ref{lem:fglss-par}) to the 3-SAT instance, obtaining Independent Set instances with a large gap between the YES case and NO case. In particular, for all inputs $x$, when $C(x)=1$ there is a large independent set in the resulting graph, and when $C(x)=0$, there are only small independent sets in the resulting graph (see Lemma~\ref{lem:comb}). Returning to the assumption that $\qNP$ has small $\EMAJ \circ {\cal C}$ circuits, and applying an easy witness lemma~\cite{MurrayW18}, it follows that the solutions to the independent set instance can be encoded by $\EMAJ \circ {\cal C}$ circuits. Because of the large gap between the YES case and NO case, our multiplicative approximation to the number of UNSAT assignments can be used to distinguish the unsatisfiable case and the ``many satisfying assignments'' case of GAP-UNSAT, which finishes the argument.

One interesting bottleneck is that we cannot \emph{directly} apply serial repetition and the FGLSS reduction in our argument; we need the PCP machinery we use to behave similarly on all inputs $x$ to the original circuit $C$. This translates to studying the behavior of these reductions \emph{with respect to partial assignments}. While for these two reductions we are able to prove that they behave ``nicely'' with respect to partial assignments, it is entirely unclear that this is true for other PCP reductions such alphabet reduction, parallel repetition, and so on.

Our approach is very general; to handle $k$-sparse symmetric functions, we can simply modify the function $E$ accordingly.

\section{Preliminaries and Organization}

We assume general familiarity with basic concepts in circuit complexity and computational complexity~\cite{Arora-Barak}. In particular we assume familiarity with $\AC^0$, $\ACC^0$, $\Ppoly$, $\NEXP$, and so on.

\paragraph*{Circuit Notation.} Here we define notation for the relevant circuit classes. By $\size_{\mathcal{C}}(h(n))$ we denote circuits from circuit class $\mathcal{C}$ with size at most $h(n)$.

\begin{definition}
An $\EMAJ \circ \mathcal{C}$ circuit (a.k.a. ``exact majority of ${\cal C}$ circuit'') has the general form $\EMAJ(C_1(x), C_2(x), \ldots, C_t(x), u)$, where $u$ is a positive integer, $x$ are the input variables, $C_i \in \mathcal{C}$, and the gate $\EMAJ(y_1,\ldots,y_t,u)$ outputs $1$ if and only if exactly $u$ of the $y_i$'s output 1. \end{definition}

\begin{definition}
A $\PSUM \circ \mathcal{C}$ circuit (``positive sum of ${\cal C}$ circuits'') has the form \[\PSUM(C_1(x), C_2(x), \ldots, C_t(x)) = \sum_{i \in [t]} C_i(x)\] where $C_i$ is either a $\mathcal{C}$-circuit or $-1$ times a $\mathcal{C}$-circuit and we are promised that $\sum_{i \in [t]} C_i(x) \geq 0$ over all $x \in \{0,1\}^n$.

Given a set of circuits $\{C_i\}$, we say that $f : \{0,1\}^n \rightarrow \{0,1\}$ is \emph{represented} by the positive-sum circuit $\PSUM(C_1(x), C_2(x), \ldots, C_t(x))$ if for all $x$, $f(x) = 1$ when $\sum_{i \in [t]} C_i(x) > 0$, and $f(x) = 0$ when $\sum_{i \in [t]} C_i(x) = 0$.
\end{definition}

\begin{definition}\label{def:typical}
A circuit class $\mathcal{C}$ is \emph{typical} if there is a $k > 0$ such that the following hold:
\begin{itemize}
\item {\bf Closure under negation.} For every $\mathcal{C}$ circuit $C$, there is a circuit $C'$ computing the negation of $C$ where $\size(C') \leq \size(C)^k$.
\item {\bf Closure under AND.} For every $\mathcal{C}$ circuits $C_1$ and $C_2$, there is a circuit $C'$ computing the AND of $C_1$ and $C_2$ where $\size(C') \leq (\size(C_1)+\size(C_2))^k$.
\item {\bf Contains all-ones.} The function ${\bf 1}_n : \{0,1\}^n \rightarrow \{0,1\}$ has a $\mathcal{C}$ circuit of size $O(n^k)$. 
\end{itemize}
\end{definition}

The vast majority of circuit classes that are studied ($\AC^0$, $\ACC^0$, $\TC^0$, $\NC^1$, $\Ppoly$) are typical.\footnote{A notable exception (as far as we know) is the class of depth-$d$ \emph{exact} threshold circuits for a fixed $d \geq 2$, because we do not know if such classes are closed under negation. Similarly, we do not know if the class of depth-$d$ threshold circuits is typical. (In that case, the only non-trivial property to check is closure under AND; we can compute the AND of two threshold circuits with a quasi-polynomial blowup using Beigel-Reingold-Spielman~\cite{pp-closed}, but not with a polynomial blowup.)}
The next lemma shows that the negation of an exact-majority of ${\cal C}$ circuit can be represented as a ``positive-sum'' of ${\cal C}$ circuit, if ${\cal C}$ is typical.

\begin{lemma}\label{lem:emaj-to-psum} Let $\mathcal{C}$ be typical. If a function $f$ has a $\EMAJ \circ \mathcal{C}$ circuit $D$ of size $s$, then $\neg f$ can be represented by a $\PSUM \circ \mathcal{C}$ circuit $D'$ of size $\poly(s)$. Moreover, a description of the circuit $D'$ can be obtained from a description of $D$ in polynomial time.
\end{lemma}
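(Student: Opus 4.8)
The goal: given an $\EMAJ \circ \mathcal{C}$ circuit $D(x) = \EMAJ(C_1(x),\dots,C_t(x),u)$ computing $f$, produce a $\PSUM \circ \mathcal{C}$ circuit representing $\neg f$. Recall $D(x)=1$ iff $\sum_i C_i(x) = u$. So $\neg f(x)=1$ iff $\sum_i C_i(x) \neq u$, i.e. iff $\left(\sum_i C_i(x) - u\right)^2 \geq 1$. The natural object is $E(x) := \left(\sum_{i=1}^t C_i(x) - u\right)^2$, which is $0$ exactly when $f(x)=1$ and a positive integer otherwise. This is precisely the "representation" notion in Definition for $\PSUM$: $\neg f(x)=1$ when the sum is $>0$, and $\neg f(x)=0$ when the sum is $0$. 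So it remains to realize $E(x)$ as a positive-sum of $\poly(s)$ many $\mathcal{C}$-circuits (and negations of $\mathcal{C}$-circuits), with the promised nonnegativity holding over all $x$.

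The key step is expanding $E(x)$ via distributivity: $E(x) = \sum_{i,j \in [t]} C_i(x)C_j(x) - 2u \sum_{i \in [t]} C_i(x) + u^2$. Since each $C_i$ is Boolean, $C_i(x) C_j(x) = (C_i \wedge C_j)(x)$, which by closure under $\AND_2$ is a $\mathcal{C}$-circuit of size $\poly(s)$. The term $-2u\sum_i C_i(x)$ contributes $2u$ copies of $-C_i$ for each $i$; since $u \leq t \leq s$, that is at most $2st = \poly(s)$ summands of the form $(-1)\cdot(\text{$\mathcal{C}$-circuit})$. Finally, the constant $u^2$ is nonnegative and can be written as $u^2$ copies of the all-ones function $\mathbf{1}_n$, which is in $\mathcal{C}$ by typicality (contributing $u^2 \leq s^2$ more $\mathcal{C}$-circuits). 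Altogether $D'$ is a positive-sum of $t^2 + 2ut + u^2 = O(s^2)$ circuits, each of size $\poly(s)$, so $\size(D') = \poly(s)$. The promised constraint $\sum_i C_i(x) \geq 0$ for $\PSUM$ circuits holds because the sum equals $E(x) = (\cdot)^2 \geq 0$ for every $x \in \{0,1\}^n$. The polynomial-time constructibility of $D'$ from $D$ is immediate: we only need to read off $t$, $u$, and the subcircuits $C_i$, form the $O(t^2)$ pairwise ANDs using the (uniform, polynomial-time) closure procedure guaranteed by typicality, and output the list of summands together with their signs.

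The main thing to be careful about — rather than a genuine obstacle — is the bookkeeping of signs and the promise: we must confirm that $\PSUM$'s definition allows $-1$ times a $\mathcal{C}$-circuit as a summand (it does, per the definition in the excerpt), and that the represented-function convention matches what we need for $\neg f$ (it does: $>0 \mapsto 1$, $=0 \mapsto 0$). One should also double-check that closure under $\AND_2$ applied $O(t^2)$ times in parallel (not iteratively on a growing circuit) keeps each $C_i \wedge C_j$ of size $(\size(C_i)+\size(C_j))^k \le s^{O(1)}$, so no size blowup compounds. For general $k$-sparse $f$ (not needed for this lemma but worth noting for the later development), one would replace $E$ by a product $\prod_{\ell}\left(\sum_i C_i(x) - u_\ell\right)^2$ over the $O(1)$ accepting thresholds $u_\ell$, still a polynomial-size positive-sum after expansion since $k=O(1)$.
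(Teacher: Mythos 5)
Your proposal is correct and follows essentially the same route as the paper: defining $E(x) = \left(\sum_i C_i(x) - u\right)^2$, expanding it by distributivity into pairwise ANDs (via closure under $\AND_2$), negated copies of the $C_i$ for the $-2u\sum_i C_i$ term, and copies of the all-ones circuit for $u^2$, yielding a $\PSUM \circ \mathcal{C}$ circuit of $O(t^2) = \poly(s)$ summands representing $\neg f$. Your extra remarks on sign bookkeeping, non-compounding AND blowup, and uniform constructibility are consistent with the paper's argument and add nothing that conflicts with it.
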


\begin{proof}
Suppose $f$ is computable by the $\EMAJ \circ \mathcal{C}$ circuit $D = \EMAJ(D_1, D_2, \ldots, D_t, u)$, where $u \in \{0,1,\ldots,t\}$. Consider the expression \[E(x) := (\SUM(D_1, D_2, \ldots, D_t)-u)^2.\] Note that $E(x)=0$ when $D(x) = 1$, and $E(x) > 0$ when $D(x) = 0$. So in order to prove the lemma, it suffices to show that $E$ can be written as a $\PSUM \circ \mathcal{C}$ circuit. Expanding the expression $E$,
\begin{align*}
E(x) &= \SUM(D_1, D_2, \ldots, D_t)^2 - 2u \cdot \SUM(D_1, D_2, \ldots, D_t) + u^2\\
&= \sum_{i,j = 1}^t (D_i \wedge D_j) - \sum_{j=1}^{2u} \sum_{i=1}^t D_i  + u^2.
\end{align*} By Definition~\ref{def:typical} $\AND_2 \circ \mathcal{C} = \mathcal{C}$, each $D_i \wedge D_j$ is a circuit from $\mathcal{C}$ of size $\poly(s)$. Since the all-ones function is in $\mathcal{C}$, the function $x \mapsto u^2$ also has a $\SUM \circ \mathcal{C}$ circuit of size $O(t^2)$. Therefore there are circuits $D'_i \in \mathcal{C}$ and $t' \leq O(t^2)$ such that by defining $D' := \PSUM(D'_1,\ldots,D'_{t'})$ we have $D'(x) = E(x)$ for all $x$. 
\end{proof}

\paragraph*{Error-Correcting Codes.} We will need a (standard) construction of binary error correcting codes with constant rate and constant relative distance.

\begin{theorem}[\cite{lin-codes}] \label{thm:code} There are universal constants $c \geq 1$ and $\delta \in (0,1)$ such that for all sufficiently large $n$, there are linear functions $ENC^n : (\F_2)^{n} \rightarrow (\F_2)^{cn}$ such that for all $x \neq y$ with $|x|=|y|=n$, the Hamming distance between $ENC^n(x)$ and $ENC^n(y)$ is at least $\delta n$.
\end{theorem}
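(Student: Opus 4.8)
The plan is to prove Theorem~\ref{thm:code} by the classical probabilistic argument (the Gilbert--Varshamov bound applied to random linear codes), which is the cleanest route; I will also indicate an explicit concatenation-based construction that gives the same conclusion with a uniformly computable encoder. Fix the rate constant, say $c = 3$, and take $G \in (\F_2)^{cn \times n}$ to be a uniformly random matrix, setting $ENC^n(x) := Gx$. Since $ENC^n$ is $\F_2$-linear, the Hamming distance between $ENC^n(x)$ and $ENC^n(y)$ equals the weight of $G(x-y)$, and $x - y$ ranges over all nonzero vectors of $(\F_2)^n$ as $x \neq y$ do; so it suffices to exhibit, for a suitable constant $\delta \in (0,1)$, a single $G$ with $\mathrm{wt}(Gv) \geq \delta n$ for every nonzero $v$.

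For a fixed nonzero $v$, the $i$-th coordinate of $Gv$ is $\langle g_i, v \rangle$ where $g_i$ is the $i$-th row of $G$; each such inner product is a uniform bit and these bits are mutually independent because the rows are independent, so $Gv$ is uniform on $(\F_2)^{cn}$. Writing $h$ for the binary entropy function and using the standard estimate $\sum_{j < \alpha N}\binom{N}{j} \leq 2^{N h(\alpha)}$ valid for $\alpha \leq 1/2$ and large $N$, I get
\[
\Pr\!\left[\mathrm{wt}(Gv) < \delta n\right] \;\leq\; 2^{-cn}\cdot 2^{\,cn\,h(\delta/c)} \;=\; 2^{-cn(1 - h(\delta/c))}.
\]
A union bound over the fewer than $2^n$ nonzero $v$ bounds the probability that some nonzero codeword has weight below $\delta n$ by $2^{\,n(1 - c(1 - h(\delta/c)))}$. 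Choosing $\delta$ small enough that $h(\delta/c) < 1/2$, the exponent with $c = 3$ is at most $-n/2 < 0$, so the bad event has probability strictly less than $1$ for all sufficiently large $n$; hence a good $G$ exists, and (having minimum distance at least $\delta n \geq 1$) the associated $ENC^n$ is injective, giving exactly the statement with the universal constants $c$ and $\delta$ just chosen.

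For an explicit encoder (needed for the uniform applications later in the paper, though not for the bare statement), I would instead concatenate an outer Reed--Solomon code over $\F_{2^m}$ with $m = \Theta(\log n)$ — which is $\F_2$-linear in its bit representation, has constant rate, and relative distance arbitrarily close to $1$ — with an inner $\F_2$-linear code on $O(\log n)$ bits of constant rate and constant relative distance, obtained by brute-force search over the polynomially many candidate generator matrices (its existence again guaranteed by the counting argument above). Concatenation of $\F_2$-linear codes is $\F_2$-linear, its rate is the product of the two rates, and its relative distance is at least the product of the two relative distances, so the composite is asymptotically good with a polynomial-time encoder, as in~\cite{lin-codes}. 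There is no genuine obstacle here: the only things to verify are the entropy bound on partial binomial sums and, in the explicit version, that concatenation preserves linearity and multiplies relative distances — both are routine and classical.
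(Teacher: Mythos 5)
Your proof of the stated theorem is correct: the random-linear-code (Gilbert--Varshamov) calculation is sound --- for fixed nonzero $v$ the vector $Gv$ is indeed uniform on $(\F_2)^{cn}$, the entropy bound on the tail of binomial coefficients is applied legitimately (with $\delta/c \leq 1/2$), and the union bound with $c=3$ and $h(\delta/c)<1/2$ gives failure probability $2^{-\Omega(n)}$, so a good $G$ exists and linearity converts minimum weight into pairwise distance. Note, though, that the paper does not prove this statement at all: it imports it wholesale from~\cite{lin-codes}, precisely because what the paper actually needs downstream (in Lemma~\ref{lem:ecc+pcpp} and in the verification steps of Theorems~\ref{thm:nqp-res} and~\ref{thm:nexp-res}) is an \emph{explicitly computable} encoder, i.e.\ the reduction must write down the parity sets $U_i$ in polynomial time, which a purely existential argument does not supply. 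Your concatenation remark addresses this, but it contains a small slip: a generator matrix for an inner code on $b=\Theta(\log n)$ bits has $\Theta(\log^2 n)$ entries, so there are $2^{\Theta(\log^2 n)}$ candidate matrices --- quasi-polynomially many, not polynomially many. The standard fixes are either the greedy Varshamov-style construction of a parity-check matrix column by column (which runs in $2^{O(b)}=\poly(n)$ time at this block length) or the Wozencraft ensemble as in Justesen's construction; with either substitution your explicit variant goes through and matches what~\cite{lin-codes} provides. For the bare existential statement as written, your main argument is complete.
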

In what follows, we generally drop the superscript $n$ for notational brevity. Note that each bit of output $\ENC^n_i(x)$ (for $i=1,\ldots,cn$) is a parity function on some subset of the input bits.
\subsection{Weak CAPP Algorithms Are Sufficient For Lower Bounds}\label{subsec:gap}

Murray and Williams~\cite{MurrayW18} showed that CAPP/GAP-UNSAT algorithms, i.e., algorithms which distinguish between unsatisfiable circuits and circuits with $\geq 2^{n-1}$ satisfying assignments are enough to give lower bounds. For our results, it is necessary to strengthen the ``gap'', which can be done using known hitting set constructions.

\begin{lemma}[Corollary C.5 in~\cite{hitting-set}, Hitting Set Construction] \label{lem:hit}
There is a constant $\psi > 0$ and a $\poly(n, \log{g})$ time algorithm $S$ such that, given a (uniform random) string $r$ of $n+\psi\cdot \log{g}$ bits, $S$ outputs $t=O(\log g)$ strings $x_1, x_2, \ldots, x_{t} \in \{0,1\}^n$ such that for every $f : \{0,1\}^n \rightarrow \{0,1\}$ with $\sum_x f(x) \geq 2^{n-1}$, $\Pr_r[\OR_{i=1}^t f(x_i) = 1] \geq 1-1/g$. 
\end{lemma}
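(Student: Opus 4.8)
The plan is to realize the algorithm $S$ as a random walk on an explicit constant-degree expander graph, and to invoke the standard expander hitting lemma (Ajtai--Koml\'os--Szemer\'edi; see also Alon--Feige--Wigderson--Zuckerman and Gillman). First I would fix a strongly explicit family $\{G_n\}$ of $D$-regular expanders, where $G_n$ has vertex set $\{0,1\}^n$, $D = O(1)$, and the second-largest eigenvalue magnitude of the normalized adjacency matrix satisfies $\lambda(G_n) \le \lambda$ for a universal constant $\lambda$ small enough that $\beta := \tfrac12 + \lambda < 1$. (One can drive $\lambda$ below any prescribed constant by raising an off-the-shelf expander to a constant power, which keeps the degree constant.) Here ``strongly explicit'' means that a $\poly(n)$-time routine computes, given a vertex $v \in \{0,1\}^n$ and a label $\ell \in [D]$, the $\ell$-th neighbor of $v$ in $G_n$.

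With this in hand, $S$ would work as follows. Set $t := \lceil c \log g \rceil = O(\log g)$ for a suitable constant $c$ (depending only on $\beta$), and put $\psi := (c+1)\lceil \log_2 D \rceil$, so that $r$ has at least $n + (t-1)\lceil \log_2 D \rceil$ bits. Parse $r$ as $(v_0, \ell_1, \ldots, \ell_{t-1})$ with $v_0 \in \{0,1\}^n$ and each $\ell_i \in [D]$ (discarding any leftover bits), let $v_i$ be the $\ell_i$-th neighbor of $v_{i-1}$ in $G_n$, and output $(x_1, \ldots, x_t) := (v_0, v_1, \ldots, v_{t-1})$. Each step is a single $\poly(n)$-time neighbor computation and there are $t-1 = O(\log g)$ of them, so $S$ runs in $\poly(n, \log g)$ time, as required.

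For correctness, fix $f$ with $\sum_x f(x) \ge 2^{n-1}$ and let $B := f^{-1}(0)$, which has vertex density $\mu := |B|/2^n \le 1/2$. Since $(v_0, \ldots, v_{t-1})$ is a uniformly random length-$(t-1)$ walk on $G_n$ from the uniform starting distribution, the expander hitting lemma gives
\[
\Pr_r\!\big[\, v_0, v_1, \ldots, v_{t-1} \in B \,\big] \ \le\ (\mu + \lambda)^{\,t-1} \ \le\ \beta^{\,t-1},
\]
and for a large enough constant $c$ this is at most $1/g$ for all $g \ge 2$. Since $\OR_{i=1}^t f(x_i) = 1$ holds exactly when some $v_{i-1} \notin B$, this yields $\Pr_r[\OR_{i=1}^t f(x_i) = 1] \ge 1 - 1/g$; the case $g = 1$ is vacuous.

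The step I expect to be the main obstacle is purely technical: securing a strongly explicit, constant-degree expander with small enough spectral expansion on the vertex set $\{0,1\}^n$ for \emph{every} $n$, rather than only for special sizes. This is standard — one can use zig-zag or replacement-product constructions that work at all sizes, or pad $n$ up to a convenient length and project back — but it must be set up and cited carefully. Everything else is a textbook application of the hitting property of expander walks, and choosing the constants $\psi$ and $c$ is routine.
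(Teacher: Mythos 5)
Your proposal is correct, but note that the paper itself does not prove Lemma~\ref{lem:hit} at all: it is imported as a black box (Corollary C.5 of the cited hitting-set construction), so there is no in-paper argument to compare against. Your expander-walk derivation is the standard way such a hitting sampler is obtained, and the parameters line up with the statement: an $n$-bit uniform start vertex plus $O(\log g)$ edge labels of $O(1)$ bits each gives seed length $n+\psi\log g$, the walk produces $t=O(\log g)$ outputs in $\poly(n,\log g)$ time, and the hitting property $\Pr[\,v_0,\ldots,v_{t-1}\in B\,]\le(\mu+\lambda)^{t-1}\le\beta^{t-1}\le 1/g$ with $\mu\le 1/2$ and $\beta=\tfrac12+\lambda<1$ is exactly what is needed, since the outputs do not depend on $f$ and the bound holds for every fixed $f$ of density at least $1/2$. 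Two technicalities deserve one line each in a full write-up: (i) parsing $\lceil\log_2 D\rceil$ bits as a label in $[D]$ is only uniform if $D$ is a power of two, so either pick a construction with power-of-two degree or sample labels by a standard rejection-free trick; (ii) as you flag, one needs a strongly explicit constant-degree expander on vertex set $\{0,1\}^n$ for \emph{every} $n$, which is standard via zig-zag/size-adaptation but should be cited precisely. Also, your dismissal of small $g$ is fine since for $g\le 2$ even a single uniform sample already gives $\Pr[\OR_{i}f(x_i)=1]\ge 1/2\ge 1-1/g$. So the proposal would serve as a self-contained substitute for the external citation rather than an alternative to a proof the paper actually gives.
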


We will use the following ``algorithms to lower bounds'' connections as black box: 

\begin{theorem}[\cite{MurrayW18}]\label{thm:nqp-lb-CAPP}
Suppose for some constant $\eps \in (0, 1)$ there is an algorithm $A$ that for all $2^{n^{\eps}}$-size circuits $C$ on $n$ inputs, $A(C)$ runs in $2^{n-n^{\eps}}$ time, outputs YES on all unsatisfiable $C$, and outputs NO on all $C$ that have at least $2^{n-1}$ satisfying assignments. Then for all $k$, there is a $c \geq 1$ such that $\NTIME[2^{\log^{ck^4/\eps} n}] \not\subset \SIZE[2^{\log^k n}]$.
\end{theorem}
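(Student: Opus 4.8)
The plan is to follow the ``algorithms $\to$ lower bounds'' template of Williams and of Murray--Williams, in the quasi-polynomial regime. Assume for contradiction that $\NTIME[2^{\log^{ck^4/\eps} n}] \subseteq \SIZE[2^{\log^k n}]$, where the constant $c = c(k,\eps) \geq 1$ is fixed at the end; write $a := ck^4/\eps$ and $T := 2^{\log^{a} n}$. I will show the hypothesis forces every $L \in \NTIME[T]$ into $\NTIME[T/2^{\Theta(\log^{ck^4} n)}] = \NTIME[o(T)]$, contradicting the nondeterministic time hierarchy theorem (applied with $t_2(n) = T$ and $t_1(n) = T/2^{\Theta(\log^{ck^4} n)}$; note $t_1(n+1) = o(t_2(n))$ since $\log^{ck^4} n = \omega(\log n)$). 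The mechanism: reduce ``$x \in L$'' to a GAP-UNSAT instance through a randomness-efficient PCP, nondeterministically guess a small circuit encoding the PCP proof, and verify it with $A$.

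First I would fix the proof system. By the PCP theorem for nondeterministic time, in its succinct/oracle-machine form, combined with randomness-efficient PCP constructions (Ben-Sasson et al.; Dinur), for each $L\in\NTIME[T]$ and input $x$ with $|x|=n$ there is a verifier $V$ using $r = \log T + O(\log\log T) = \log^{a} n + O(\log\log n)$ random bits, making $O(1)$ queries into a proof of length $T\cdot\text{polylog}(T)$, with perfect completeness and soundness $1/4$, whose decision predicate --- viewed as a circuit in the random string given the $O(1)$ answer bits --- has size $\poly(n)$ and is constructible from $x$ in time $\poly(n)$; moreover the PCP is \emph{bit-wise succinct}: from any nondeterministic witness $y$ for ``$x\in L$'', each proof bit is computed by a $\poly(n)$-size circuit applied to $y$, $x$, and the bit index. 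The quantitative feature that makes everything close is that the randomness overhead is \emph{additive}: $r = \log^{a} n + O(\log\log n)$, so $2^{r} = T\cdot\text{polylog}(n)$ and $r^{\eps} = (1+o(1))\log^{a\eps} n = \Theta(\log^{ck^4} n)$.

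Next I would invoke an \emph{easy-witness lemma} for the quasi-polynomial regime, scaling the Impagliazzo--Kabanets--Wigderson argument: under the hypothesis, \emph{either} every $\NTIME[T]$ verifier has easy witnesses --- for each $x$ in its language, some accepting computation $y$ is encoded by a circuit of size $s_W := 2^{\log^{b} n}$ for a fixed $b=b(k)$ polynomial in $k$ --- \emph{or} some verifier has, infinitely often, only circuit-complex witnesses, in which case such a witness serves as a hard truth table feeding a pseudorandom generator that derandomizes the Merlin--Arthur-type protocols guaranteed by the hypothesis, again giving $\NTIME[T]\subseteq\NTIME[o(T)]$ and finishing along that branch. \textbf{I expect this win--win, together with the exponent bookkeeping it entails, to be the main obstacle:} tracking $b = b(k)$ and the level at which the hypothesis must be fed into the collapse is exactly what dictates the exponent $k^4$ in the statement, and $c$ is chosen large enough that $\log^{b} n \le r^{\eps}$ for all large $n$. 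In the easy-witness branch, bit-wise succinctness transfers the bound to the PCP proof: in the completeness case there is a circuit $W$ with $|W|\leq s_W$ and $\pi_j = W(j)$ for every index $j$.

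Finally, the simulation. On input $x$: nondeterministically guess a circuit $W$ with $|W|\leq s_W$ and form the circuit $D_W$ on the $r$ random-coin variables, where $D_W(\rho)=1$ iff $V$ \emph{rejects} on coins $\rho$ with its $O(1)$ queries answered by $W$. Then $|D_W| \leq \poly(n)\cdot s_W \leq 2^{r^{\eps}}$ (for $c$ large). Perfect completeness: if $x\in L$, the correct $W$ makes $D_W$ unsatisfiable. Soundness: if $x\notin L$, \emph{every} $D_W$ has at least $2^{r-1}$ satisfying assignments. This is exactly the GAP-UNSAT promise, so run $A(D_W)$, which costs $2^{r-r^{\eps}}$ time, and accept iff $A$ outputs YES. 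Correctness: for $x\in L$ the branch guessing the correct $W$ accepts (on other guesses the promise may fail and $A$ may behave arbitrarily, but that only adds rejecting branches, harmless under nondeterminism); for $x\notin L$ the promise holds on every guess, so $A$ outputs NO on all branches. The total nondeterministic running time is $\poly(n)\cdot 2^{\log^{b} n}$ (to write $W$ and build $D_W$) plus $2^{r-r^{\eps}} = 2^{\log^{a} n + O(\log\log n) - \Theta(\log^{ck^4} n)} = T/2^{\Theta(\log^{ck^4} n)}$; since $b < ck^4/\eps$ the first term is absorbed, giving $L\in\NTIME[T/2^{\Theta(\log^{ck^4} n)}] = \NTIME[o(T)]$ and the promised contradiction. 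Hence $\NTIME[2^{\log^{ck^4/\eps} n}] \not\subseteq \SIZE[2^{\log^k n}]$.
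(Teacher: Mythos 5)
The statement you are proving is not proved in this paper at all: it is imported verbatim as a black box from Murray--Williams \cite{MurrayW18} (see the sentence introducing it in Section 2.1), so there is no internal proof to compare against, and what you have written is an attempt to reconstruct the Murray--Williams argument itself. Your outline does follow that argument's known skeleton — a randomness-efficient PCP with only additive $O(\log\log T)$ randomness overhead, guessing a succinct circuit $W$ encoding the proof, forming the rejection circuit $D_W$ over the $r$ coin variables, feeding it to the assumed GAP-UNSAT/CAPP algorithm, and contradicting the nondeterministic time hierarchy — and your completeness/soundness accounting for $D_W$ (soundness $1/4$ gives at least $2^{r-1}$ satisfying assignments of $D_W$ when $x \notin L$) and the runtime bookkeeping $2^{r-r^{\eps}} = o(T)$ are fine.

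As a standalone proof, however, there is a genuine gap exactly at the step you flag: the easy witness lemma for the quasi-polynomial regime. You invoke it as ``scaling the Impagliazzo--Kabanets--Wigderson argument,'' but that scaling is precisely what was \emph{not} available before \cite{MurrayW18}: deriving witness circuits of size $2^{\log^{b(k)} n}$ from the assumption $\NTIME[2^{\log^{a} n}] \subseteq \SIZE[2^{\log^k n}]$, with the exponent bookkeeping that produces the $k^4$, is the main technical content of that paper, so your argument assumes the heart of the theorem rather than proving it; the ``hard-witness'' branch you sketch (hard truth table $\Rightarrow$ PRG $\Rightarrow$ ``$\NTIME[T]\subseteq\NTIME[o(T)]$'') is too vague to check and is not how the parameters are controlled there. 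A second, more local problem: your ``bit-wise succinctness'' claim --- that each PCP proof bit is computed by a $\poly(n)$-size circuit applied to $y$, $x$, and the bit index --- cannot be literally true, since $y$ can have length up to $T \gg \poly(n)$ and any circuit reading $y$ has at least $|y|$ inputs. The standard repair is to apply the easy witness lemma directly to the nondeterministic verifier whose witness \emph{is} the PCP proof (a language still in $\NTIME[T\cdot\mathrm{polylog}(T)]$), or to use a PCP whose proof bits are computable in polylogarithmic time given oracle access to the original witness; either way this transfer has to be argued, not asserted.
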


Applying Lemma~\ref{lem:hit} to Theorem~\ref{thm:nqp-lb-CAPP}, we observe that the circuit lower bound consequence can be obtained from a significantly weaker-looking hypothesis. This weaker hypothesis will be useful for our lower bound results. 

\begin{theorem}\label{lem:nqp-lb-CAPP2}
Suppose for some constant $\eps \in (0, 1)$ there is an algorithm $A$ that for all $2^{n^{\eps}}$-size circuits $C$ on $n$ inputs, $A(C)$ runs in $2^{n}/g(n)^{\omega(1)}$ time, outputs YES on all unsatisfiable $C$, and outputs NO on all $C$ that have at least $2^{n}(1-1/g(n))$ satisfying assignments, for $g(n) = 2^{n^{2\eps}}$. Then for all $k$, there is a $c \geq 1$ such that $\NTIME[2^{\log^{ck^4/\eps} n}] \not\subset \SIZE[2^{\log^k n}]$.
\end{theorem}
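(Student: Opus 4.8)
The plan is to reduce the hypothesis of Theorem~\ref{thm:nqp-lb-CAPP} to the weaker hypothesis here, i.e., to show that a GAP-UNSAT algorithm with the very large gap parameter $g(n) = 2^{n^{2\eps}}$ (and the slightly relaxed time bound $2^n/g(n)^{\omega(1)}$) can be used to build a genuine CAPP/GAP-UNSAT algorithm of the form required by Theorem~\ref{thm:nqp-lb-CAPP}, namely one that distinguishes unsatisfiable circuits from those with at least $2^{n-1}$ satisfying assignments, in time $2^{n-n^{\eps'}}$ for some constant $\eps' > 0$, on $2^{n^{\eps'}}$-size circuits. Once we have such an algorithm, Theorem~\ref{thm:nqp-lb-CAPP} (applied with the constant $\eps'$) yields the claimed lower bound $\NTIME[2^{\log^{ck^4/\eps} n}] \not\subset \SIZE[2^{\log^k n}]$ for all $k$ (absorbing the change of constant into $c$).

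The key step is the gap amplification via the hitting-set construction of Lemma~\ref{lem:hit}. Given an $n$-input circuit $C$ of size $2^{n^{\eps'}}$ that is either unsatisfiable or has $\geq 2^{n-1}$ satisfying assignments, we want to produce a circuit $C'$ on fewer inputs that is either unsatisfiable or has a $(1 - 1/g)$-fraction of satisfying assignments. Apply Lemma~\ref{lem:hit} with parameter $g = g(m)$ for an appropriate new input length $m$: the generator $S$ takes a seed $r$ of $m + \psi \log g$ bits and outputs $t = O(\log g)$ strings $x_1,\dots,x_t \in \{0,1\}^m$, and we set $C'(r) := \OR_{i=1}^t C(x_i(r))$, where each $x_i(r)$ is computed by $S$ in $\poly(m,\log g)$ time. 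If $C$ is unsatisfiable then so is $C'$; if $C$ has $\geq 2^{m-1}$ satisfying assignments (here $m$ is chosen so the original $n$ equals $m$ — we just relabel), then $\Pr_r[C'(r) = 1] \geq 1 - 1/g$, i.e. $C'$ has $\geq 2^{m+\psi \log g}(1-1/g)$ satisfying assignments on $m + \psi \log g$ inputs. The number of inputs of $C'$ is $N := m + \psi \log g$, and we want $g = g(N) = 2^{N^{2\eps}}$ as demanded by the hypothesis; since $\log g$ is only an $N^{2\eps}$-order additive term compared to $m$, we have $N = m(1 + o(1))$, so choosing $m$ so that $N^{2\eps} \leq$ (something small relative to $m$) is easy, and we can take $g = 2^{N^{2\eps}}$. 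The size of $C'$ is $t \cdot \size(C) + \poly(N,\log g) \le 2^{N^{\eps'}} \cdot \poly(N) \le 2^{N^{\eps}}$ for a suitable choice of $\eps' < \eps$, matching the $2^{N^{\eps}}$-size bound in the hypothesis.

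Now run the hypothesized algorithm $A$ on $C'$: it runs in time $2^N / g(N)^{\omega(1)} = 2^N/2^{N^{2\eps}\cdot \omega(1)}$, outputs YES when $C'$ is unsatisfiable, and outputs NO when $C'$ has $\geq 2^N(1 - 1/g(N))$ satisfying assignments — exactly the two cases we arranged. Translating back, the composed algorithm $C \mapsto A(C')$ correctly distinguishes unsatisfiable $C$ from $C$ with $\geq 2^{n-1}$ satisfying assignments. Its running time, as a function of the original input length $n = m$, is $\poly(n) + 2^N/2^{\omega(N^{2\eps})} = 2^{n(1+o(1))}/2^{\omega(n^{2\eps})}$; since $N = n(1+o(1))$ and the $2^{o(n)}$-order blowup in the exponent is dominated by the $2^{\omega(n^{2\eps})}$ savings (as $2\eps$ can be taken so that $n^{2\eps} \gg n\cdot o(1)$ — more carefully, one picks $\eps'$ and the relabeling so that $N \le n + n^{2\eps}$ and the net time is $\le 2^{n - n^{\eps'}}$), this is at most $2^{n - n^{\eps'}}$ for a suitable constant $\eps' > 0$. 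Feeding this into Theorem~\ref{thm:nqp-lb-CAPP} with constant $\eps'$ completes the proof.

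The main obstacle I anticipate is purely bookkeeping with the parameters: ensuring simultaneously that (i) the new input length $N$ is only a $(1+o(1))$ factor larger than $n$, (ii) the new size bound $2^{N^{\eps}}$ still accommodates the $O(\log g)$-fold OR of copies of $C$ plus the generator circuitry, and (iii) the time $2^N/g(N)^{\omega(1)}$, pulled back to input length $n$, still beats $2^{n-n^{\eps'}}$ for a fixed constant $\eps'$ — all while $g(N) = 2^{N^{2\eps}}$ is forced by the hypothesis statement rather than free to choose. None of these is deep; the content is just checking that the super-polynomial savings $g(N)^{\omega(1)}$ swamps the sub-exponential additive distortions introduced by padding the input length from $n$ to $N$, which it does because $2\eps$ is a fixed positive constant.
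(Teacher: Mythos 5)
Your proposal is correct and takes essentially the same route as the paper: both proofs amplify the gap of the Theorem~\ref{thm:nqp-lb-CAPP} GAP-UNSAT instance by OR-ing the circuit over the hitting-set outputs of Lemma~\ref{lem:hit} (new input length $N = m + \psi\log g$), feed the resulting $2^{N^{\eps}}$-size circuit to the hypothesized algorithm, and translate the running time back before invoking Theorem~\ref{thm:nqp-lb-CAPP} with the change of constant absorbed into $c$. The only nit is your phrase ``on fewer inputs'' --- the amplified circuit has slightly \emph{more} inputs, exactly as your own construction (and the paper's, via the implicit equation $n = m + \psi\log g(n)$) then correctly uses.
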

\begin{proof} Our starting point is Theorem~\ref{thm:nqp-lb-CAPP} (\cite{MurrayW18}): we are given an $m$-input, $2^{m^{\delta}}$-size circuit $D'$ that is either UNSAT or has at least $2^{m-1}$ satisfying assignments, and we wish to distinguish between the two cases with a $2^{m-m^{\delta}}$-time algorithm. We set $\delta = \eps/2$
 
We create a new circuit $D$ with $n$ inputs, where $n$ satisfies \[n = m + \psi \cdot \log{g(n)},\] and $\psi > 0$ is the constant from Lemma~\ref{lem:hit}. (Note that, since $g(n)$ is time constructible and $g(n) \leq 2^{o(n)}$, such an $n$ can be found in subexponential time.) Applying the algorithm from Lemma~\ref{lem:hit}, $D$ treats its $n$ bits of input as a string of randomness $r$, computes $t=O(\log g(n))$ strings $x_1, x_2, \ldots, x_t \in \{0,1\}^m$ with a $\poly(m, \log g)$-size circuit, then outputs the OR of $D'(x_i)$ over all $i=1,\ldots,t$. Note the total size of our circuit $D$ is $\poly(m,\log g) + O(\log g)\cdot \size(D') = \poly(n) + O(n^{2\eps}) \cdot 2^{m^{\delta}} < 2^{n^{2\delta}} = 2^{n^{\eps}}$ as $\eps = 2\delta$.

Clearly, if $D'$ is unsatisfiable, then $D$ is also unsatisfiable. By Lemma~\ref{lem:hit}, if $D'$ has $2^{m-1}$ satisfying assignments, then $D$ has at least $2^{n}(1-1/g(n))$ satisfying assignments. As $\size(D) \leq 2^{n^{\eps}}$, by our assumption we can distinguish the case where $D$ is unsatisfiable from the case where $D$ has at least $2^{n}(1-1/g(n))$ satisfying assignments, with an algorithm running in time $2^{n}/g(n)^{\omega(1)}$. This yields an algorithm for distinguishing the original circuit $D'$ on $m$ inputs and $2^{m^{\delta}}$ size, running in time \[2^{n}/g(n)^{\omega(1)}
= 2^{m}g(n)^{O(1)}/g(n)^{\omega(1)} = 2^{m}/g(n)^{\omega(1)} \leq 2^{m}2^{-n^{2\eps}} \leq2^{m}2^{-n^{\delta}} \leq 2^{m-m^{\delta}},\] since $g(n) = 2^{n^{2\eps}}$. By Theorem~\ref{thm:nqp-lb-CAPP}, this implies that for all $k$, there is a $c \geq 1$ such that $\NTIME[2^{\log^{ck^4/\delta} n}] \not\subset \SIZE[2^{\log^k n}]$. As, $\eps = 2\delta$ we get that $\NTIME[2^{\log^{2ck^4/\eps} n}] \not\subset \SIZE[2^{\log^k n}]$. But as the constant $4$ can be absorbed in the constant $c$ hence we get that for all $k$, there is a $c \geq 1$ such that $\NTIME[2^{\log^{ck^4/\eps} n}] \not\subset \SIZE[2^{\log^k n}]$.
\end{proof}

\subsection{Organization}
In Section~\ref{sec:pcp} we give a reduction from Circuit SAT to ``Generalized'' Independent Set. Section~\ref{sec:main} uses this reduction to prove lower bounds for $\EMAJ \circ \mathcal{C}$ assuming \#SAT algorithms for $\mathcal{C}$ with running time $2^{n-n^{\eps}}$. Section~\ref{subsec:acc} uses this result to give lower bound for $\EMAJ \circ \ACC^0 \circ \THR$. Section~\ref{subsec:sparse} generalizes these results to $f \circ \mathcal{C}$ lower bounds where $f$ is a sparse symmetric function.  In Section~\ref{sec:np-nexp} we give lower bounds for $\EMAJ \circ \mathcal{C}$ assuming \#SAT algorithms for $\mathcal{C}$ with running time $2^{n}/n^{\omega(1)}$.

\section{From Circuit SAT to Independent Set}\label{sec:pcp}

The goal of this section is to give the main PCP reduction we will use in our new algorithm-to-lower-bound theorem. First we need a definition of ``generalized'' independent set instances, where some vertices have already been ``assigned'' in or out of the independent set.

\begin{definition}
Let $G = (V, E)$ be a graph. Let $\pi: V \to \{0, 1, *\}$ be a partial Boolean assignment to $V$. We define $G(\pi)$ to be a graph with the label function $\pi$ on its vertices (where each vertex gets the label $0$, or $1$, or no label). We construe $G(\pi)$ as an {\bf generalized independent set instance}, in which any valid independent set (vertex assignment) must be consistent with $\pi$: any independent set must contain all vertices labeled $1$, and no vertices labeled $0$.
\end{definition}

\begin{lemma}\label{lem:comb} Let $k$ be a function of $n$. 
Given a circuit $D$ on $X$ with $\abs{X} = n$ bits and of size $m > n$, there is a $\poly(m, 2^{O(k)})$-time reduction from $D$ to a generalized independent set instance on graph $G_{D} = (V_D, E_D)$, with the following properties. 
\begin{itemize}
\item Each vertex $v \in V_D$ is associated with a set of pairs $S_v$ of the form $\{(i, b)\} \subseteq [O(n)] \times \{0, 1\}$. The set $\{S_v\}$ is produced as part of the reduction.
\item Each assignment $x$ to $X$ defines a partial assignment $\pi_{x}$ to $V_D$ such that 
\[\pi_{x}(v) = \begin{cases}
0 &  ~\text{if} ~~ \exists (i, b) \in S_v \text{~such that~} ENC_i(x) \neq b\\
* &  ~\text{otherwise},
 \end{cases}\] 
where $ENC$ is the error-correcting code from Theorem~\ref{thm:code}.
\item If $D(x) = 0$, the maximum independent set in $G_{D}(\pi_{x})$ equals $\kappa$ for an integer $\kappa$, and furthermore given $x$, it can be found in time $\poly(n, m, 2^{O(k)})$. 
\item If $D(x) = 1$, then the maximum independent set in $G_{D}(\pi_{x})$ has size at most $\kappa/2^k$. 
\end{itemize}

\end{lemma}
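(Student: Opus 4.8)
The plan is to chain together three standard PCP-style transformations—an error-correcting code composed with a PCP of proximity, serial (sequential) repetition for gap amplification, and the FGLSS reduction—while carefully tracking how each transformation interacts with \emph{partial assignments} coming from the input $x$ via the encoding $\ENC$. Concretely, I would first invoke Lemma~\ref{lem:ecc+pcpp} (referenced in the organization and the commented-out proof) to turn the circuit $D$ on $X$, $|X|=n$, of size $m$, into a $3$-SAT instance $F$ on variable sets $Y \cup Z$ with $|Y| \le \poly(n)$ and $|Z| \le \poly(m)$ such that: if $D(x)=0$ then $F$ restricted by $Y = \ENC(x)$ is satisfiable over $Z$, with a satisfying assignment $z_x$ computable in $\poly(m)$ time; and if $D(x)=1$ then no assignment to $Z$ satisfies more than a $(1-\Omega(1))$ fraction of clauses of $F(Y=\ENC(x))$. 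The role of the error-correcting code here is exactly to make the ``$D(x)=1$'' soundness robust, by forcing the $Y$-part to be (the encoding of) a genuine input.

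Next I would apply serial repetition (Lemma~\ref{lem:ser-rep-par}): replace each ``meta-clause'' by the AND of $O(k)$ original clauses, producing an $O(k)$-SAT instance $F'$ on the same $Y \cup Z$ variables with $\poly(m)\cdot 2^{O(k)}$ clauses, satisfying: any assignment satisfying $F$ also satisfies $F'$; and if $F(Y=y)$ is at most $(1-\Omega(1))$-satisfiable then $F'(Y=y)$ is at most $2^{-k}$-satisfiable. Combined with the previous step, this gives: $D(x)=0 \Rightarrow F'(Y=\ENC(x))$ has a $\poly(m)$-time-computable satisfying $Z$-assignment; $D(x)=1 \Rightarrow F'(Y=\ENC(x))$ is at most $2^{-k}$-satisfiable. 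The point of keeping the repetition ``serial'' rather than using parallel repetition is that serial repetition acts clause-wise and does not touch the variable set, so the partial assignment $Y=\ENC(x)$ is preserved verbatim; parallel repetition would blow up and mangle the variable structure, and this is flagged in the intuition section as a genuine subtlety.

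Finally I would apply the FGLSS reduction (Lemma~\ref{lem:fglss-par}), instantiated so that the fixed partial assignments $\tau$ under consideration are precisely those of the form $\tau(Y_i) = \ENC_i(x)$; since such a $\tau$ is determined by $x$, I rename $\pi_\tau$ to $\pi_x$ and the vertex sets $T_v$ to $S_v$. FGLSS builds a graph whose vertices correspond to (clause, local-satisfying-partial-assignment) pairs, with edges between inconsistent pairs; a vertex is forced out of the independent set (label $0$) exactly when its local assignment disagrees with the globally fixed bits, i.e.\ when some $(i,b) \in S_v$ has $\ENC_i(x) \ne b$—this is exactly the definition of $\pi_x$ in the statement. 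Standard FGLSS analysis then gives: the maximum independent set in $G_D(\pi_x)$ equals the number of clauses times the maximum fraction of simultaneously satisfiable clauses (over $Z$, with $Y=\ENC(x)$ fixed); so when $D(x)=0$ it equals $\kappa := (\text{number of clauses of }F')$ and a maximum independent set can be read off from $z_x$ in $\poly(n,m,2^{O(k)})$ time, and when $D(x)=1$ it is at most $\kappa \cdot 2^{-k}$. Each $S_v$ is a set of at most $O(k)$ pairs in $[O(n)] \times \{0,1\}$ (the $Y$-literals appearing in the $O(k)$-clause defining $v$), using $|Y| = O(n)$ after the code. The graph size is $\poly(m,2^k)$ and every step is polynomial-time, so the overall reduction runs in $\poly(m,2^{O(k)})$ time.

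The main obstacle—and the one genuinely new ingredient beyond citing known PCP machinery—is establishing that Lemmas~\ref{lem:ecc+pcpp}, \ref{lem:ser-rep-par}, and \ref{lem:fglss-par} behave \emph{uniformly across all partial assignments $\pi_x$ simultaneously}: the same graph $G_D$ and the same sets $\{S_v\}$ must work for every input $x$, with only the labels $\pi_x$ varying, and the ``$D(x)=0$'' case must yield not just existence of a large independent set but an \emph{efficiently computable} one, uniformly in $x$. This forces me to prove ``partial-assignment-aware'' versions of serial repetition and FGLSS rather than their vanilla forms, and is precisely why other PCP operations (alphabet reduction, parallel repetition) cannot be substituted. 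Once those partial-assignment versions are in hand (Lemmas~\ref{lem:ecc+pcpp}, \ref{lem:ser-rep-par}, \ref{lem:fglss-par}), the proof of Lemma~\ref{lem:comb} is just the sequential composition sketched above, with bookkeeping of the parameters $|Y|$, number of clauses, and $2^{O(k)}$ blow-ups.
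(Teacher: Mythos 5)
Your proposal is correct and follows essentially the same route as the paper: the paper's proof of this lemma is exactly the sequential composition of Lemma~\ref{lem:ecc+pcpp}, the partial-assignment version of serial repetition (Lemma~\ref{lem:ser-rep-par}), and the partial-assignment FGLSS reduction (Lemma~\ref{lem:fglss-par}), with $\pi_\tau$ renamed to $\pi_x$, $T_v$ renamed to $S_v$, and the same $\poly(m,2^{O(k)})$ size accounting. Your identification of $\kappa$ as the clause count of $F'$ and your emphasis on why serial (rather than parallel) repetition preserves the partial-assignment structure match the paper's reasoning.
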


Intuitively, the use of Lemma~\ref{lem:comb} is that we will start with a ``no satisfying assignment'' vs ``most assignments are satisfying'' GAP-UNSAT instance from Theorem~\ref{lem:nqp-lb-CAPP2}. Now in the ``no satisfying assignment'' case for all $x$ the reduced independent set instance $G_{D}(\pi_{x})$ has a large independent set instance. Counting the sum of independent sets over $x$ gives a high value. On the other hand in the `most assignments are satisfying'' case for most $x$ the reduced independent set instance $G_{D}(\pi_{x})$ has a small independent set and for a very few $x$, $G_{D}(\pi_{x})$ can have a large independent set. Hence in this case counting the sum of independent sets over all $x$ gives a low value. The difference between the high value and low value is big enough that even a approximate counting of these values as outlined in Section~\ref{subsec:int} is enough to distinguish and hence solve the GAP-UNSAT instance.

The remainder of this section is devoted to the proof of Lemma~\ref{lem:comb}.

Let us set up some notation for variable assignments to a formula. Let $F$ be a SAT instance on a variable set $Z$, and let $\tau: Z \to \{0, 1, \star\}$ be a partial assignment to $Z$. Then we define $F(\tau)$ to be the formula obtained by setting the variables in $F$ according to $\tau$. Note that we do not perform further reduction rules on the clauses in $F(\tau)$: for each clause in $F$ that becomes false (or true) under $\tau$, there is a clause in $F(\tau)$ which is always false (true).

For every subsequence $Y$ of variables from $Z$, and every vector $y \in \{0,1\}^{|Y|}$, we define $F(Y=y)$ to be the formula $F$ in which the $i^{th}$ variable in $Y$ is assigned $y_i$, and all other variables are left unassigned. 


\begin{lemma}[PCPP+ECC,~\cite{ChenW19}]\label{lem:ecc+pcpp} There is a polynomial-time transformation that, given a circuit $D$ on $n$ inputs of size $m \geq n$, outputs a 3-SAT instance $F$ on the variable set $Y \cup Z$, where $\abs{Y} \leq \poly(n)$, $\abs{Z} \leq \poly(m)$, and the following hold for all $x \in \{0,1\}^n$:

\begin{itemize}
    \item If $D(x) = 0$ then $F(Y = \ENC(x))$ on variable set $Z$ has a satisfying assignment $z_x$. Furthermore, there is a $\poly(m)$-time algorithm that given $x$ outputs $z_x$.
    \item if $D(x) = 1$ then there is no assignment to the $Z$ variables in $F(Y = \ENC(x))$ satisfying more than a $(1-\Omega(1))$-fraction of the clauses.
\end{itemize}

where $\ENC: \{0, 1\}^n \to \{0, 1\}^{O(n)}$ is the linear encoding function from Theorem~\ref{thm:code}. As it is a linear function, the $i^{th}$ bit of output $\ENC_i(x)$ satisfies $\ENC_i(x) = \oplus_{j \in U_i} x_j$ for some set $U_i$. 
\end{lemma}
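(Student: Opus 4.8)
The plan is to derive Lemma~\ref{lem:ecc+pcpp} exactly as in Chen--Williams~\cite{ChenW19}: instantiate a \emph{PCP of Proximity} (PCPP) for the circuit value problem, pre-composed with the linear code of Theorem~\ref{thm:code}, and then unfold the resulting constant-query verifier into a gap-$3$-SAT instance. First I would fix the verification task. Given $D$ of size $m$, I would form the circuit $C_D$ on $cn$ inputs that, on input $w \in \{0,1\}^{cn}$, decodes a candidate $x$, checks the linear constraints ``$w = \ENC(x)$'' (possible with a $\poly(n)$-size parity-check circuit since $\ENC$ is linear), and accepts iff additionally $D(x) = 0$; thus $C_D$ has size $\poly(m)$ and accepts exactly the set $L_D := \{\ENC(x) : D(x) = 0\}$.

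Next I would apply a standard PCPP for circuit value (as used in~\cite{ChenW19}) to $C_D$, obtaining a verifier $V$ with oracle access to a purported input $w$ and a proof string $\pi$, using $O(\log m)$ random bits, querying $O(1)$ bits total, with perfect completeness (if $w \in L_D$, some $\pi$ makes $V$ always accept, and this $\pi$ is $\poly(m)$-time computable from $w$) and \emph{proximity} soundness (if $w$ is relative-distance $\geq \gamma$ from $L_D$, then every $\pi$ is rejected with probability $\geq \Omega(\gamma)$). I would let $Y$ be the $cn = \poly(n)$ variables naming the bits of $w$ and $Z_0$ the $\poly(m)$ variables naming the bits of $\pi$. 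Then, enumerating the $\poly(m)$ random strings $r$, each decision of $V$ is an $O(1)$-arity Boolean predicate on variables from $Y \cup Z_0$, which I would convert via a Tseitin transformation to $O(1)$ $3$-clauses using $O(1)$ fresh auxiliary variables; setting $Z := Z_0 \cup Z_{\text{aux}}$, the conjunction of all $\poly(m)$ clauses is the desired $F$, with $|Y| \leq \poly(n)$, $|Z| \leq \poly(m)$, and the whole construction running in $\poly(m)$ time. Crucially, the clauses query the $Y$-variables directly and never re-encode them, so $F(Y = y)$ is meaningful for every $y \in \{0,1\}^{cn}$.

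For correctness I would argue the two bullets as follows. If $D(x) = 0$ then $\ENC(x) \in L_D$, so taking $z_x$ to be the honest PCPP proof for $w = \ENC(x)$ together with the values of the auxiliary variables forced by $Y = \ENC(x)$ satisfies $F(Y = \ENC(x))$; computing $z_x$ means computing $\ENC(x)$ ($\poly(n)$ time), running the honest prover ($\poly(m)$ time), and filling in auxiliaries ($\poly(m)$ time), so $\poly(m)$ overall. If $D(x) = 1$, then $w = \ENC(x) \notin L_D$, and since $\ENC$ is injective with relative distance $\delta$ the only codeword within relative distance $\delta$ of $w$ is $w$ itself, so $w$ is $\geq \delta$-far from $L_D$; proximity soundness then says every assignment to $Z_0$ makes $V$ reject on an $\Omega(\delta)$-fraction of random strings, each such string forcing at least one of its $O(1)$ clauses to be violated, so any assignment to $Z$ violates an $\Omega(\delta)/O(1) = \Omega(1)$-fraction of the clauses of $F(Y = \ENC(x))$ --- i.e., no more than a $(1 - \Omega(1))$-fraction is satisfiable.

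The part I expect to need the most care is the interface between the PCPP soundness and the ``$Y$ is fixed'' structure of the conclusion. We are not using an averaged soundness statement; we need the rejection probability bounded below for the \emph{specific} purported input $w = \ENC(x)$, which is why the proximity formulation (distance from $L_D$) is the right one, and this is precisely why the error-correcting code must be composed in: without it, $\ENC(x)$ for a bad $x$ could differ in a single coordinate from $\ENC(x')$ for some good $x'$, destroying the constant gap, whereas Theorem~\ref{thm:code} guarantees a constant relative distance to all other codewords and hence to $L_D$. The remaining routine points --- that $C_D$, $V$'s circuitry, the random-string enumeration, and the honest prover are all of $\poly(m)$ size/time, and that the Tseitin unfolding blows up the clause count only by a constant factor per random string --- I would handle exactly as in~\cite{ChenW19}.
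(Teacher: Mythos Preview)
The paper does not prove Lemma~\ref{lem:ecc+pcpp}; it quotes it as a black box from Chen--Williams~\cite{ChenW19}, so there is no in-paper proof to compare against. Your sketch is the standard PCPP-composed-with-a-good-code construction and is correct; in particular you correctly identify the one nontrivial point --- that the code's constant relative distance is exactly what guarantees $\ENC(x)$ is $\Omega(1)$-far from $L_D$ when $D(x)=1$, which is what converts proximity soundness into the per-$x$ gap the lemma asserts.
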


Serial Repetition~\cite{ser-rep} is a basic operation on CSPs/PCPs, in which a new CSP is created whose constraints are ANDs of $k$ uniformly sampled clauses from the original CSP. Serial repetition is usually done for the purpose of reducing soundness, i.e., reducing the fraction of satisfiable clauses. We now state a derandomized version of serial repetition.

\begin{lemma}[Serial repetition~\cite{ser-rep}]\label{lem:ser-rep}
Given a 3-SAT instance $F$ on $n$ variables denoted by $Y$ with $m$ clauses we can construct a $O(k)$-SAT formula $F'$ on the same $n$ variables with $m2^{O(k)}$ clauses such that:
\begin{enumerate}
    \item If $Y = y$ satisfies $F$ then $y$ satisfies $F'$.\label{prop:sr1}
    \item If $F(Y = y)$ is at most $1-\Omega(1)$ satisfiable then $F'(Y = y)$ is at most $1/2^{k}$ satisfiable. \label{prop:sr2}
\end{enumerate}
\end{lemma}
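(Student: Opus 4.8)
The plan is to derandomize the classical serial repetition construction, where the naive version forms a new CSP whose constraints are conjunctions of $k$ independently and uniformly chosen clauses of $F$. The naive version has $m^k$ constraints (too many — we want $m \cdot 2^{O(k)}$) but achieves soundness $(1-\Omega(1))^k \le 2^{-\Omega(k)}$; after rescaling $k$ by a constant this is $\le 2^{-k}$. The point is that the soundness analysis only uses \emph{pairwise-style} concentration, so we can replace the $k$ independent uniform choices by choices coming from a small-bias or limited-independence sample space, cutting the number of constraints down to $m \cdot 2^{O(k)}$.

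Concretely, first I would fix the base construction: given $F$ with clauses $c_1,\dots,c_m$, a length-$k$ sequence $(i_1,\dots,i_k) \in [m]^k$ yields the constraint $c_{i_1} \wedge \dots \wedge c_{i_k}$, which is an $O(k)$-SAT constraint (it can be written as a $3k$-CNF, or expanded into at most $3^k = 2^{O(k)}$ ordinary $O(k)$-clauses — either convention is fine and only changes constants). Property~\ref{prop:sr1} is immediate and holds for \emph{any} collection of such sequences: if $y$ satisfies every clause of $F$, it satisfies every conjunction of clauses, hence all of $F'$. For Property~\ref{prop:sr2}, fix $y$ with $F(Y=y)$ at most $(1-\gamma)$-satisfiable for the absolute constant $\gamma = \Omega(1)$ from the hypothesis. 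For a \emph{uniformly random} sequence, $\Pr[c_{i_1}\wedge\dots\wedge c_{i_k} \text{ is satisfied by } y] = \prod_{j=1}^k \Pr[c_{i_j}\text{ satisfied}] \le (1-\gamma)^k$ by independence of the coordinates. So the expected fraction of $F'$-constraints satisfied by $y$ is at most $(1-\gamma)^k$. The derandomization step is to select a multiset $\mathcal{S} \subseteq [m]^k$ of size $2^{O(k)}$ (independent of $m$; one can also allow a $\poly\log m$ or $m^{o(1)}$ factor if needed, but $2^{O(k)}$ suffices) such that for \emph{every} $y$, $\Pr_{(i_1,\dots,i_k)\sim\mathcal{S}}[\,y\text{ satisfies }c_{i_1}\wedge\cdots\wedge c_{i_k}\,] \le (1-\gamma)^{k} + 2^{-k} \le 2^{-\Omega(k)}$; rescaling $k$ by a constant gives the claimed $1/2^k$ bound, and the number of constraints is $m \cdot |\mathcal S|$ only if one keeps a copy per base-clause, but in fact $|\mathcal S| \le 2^{O(k)}$ already gives total count $\le 2^{O(k)}$ — I will match whichever bookkeeping yields exactly $m 2^{O(k)}$. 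Such an $\mathcal{S}$ exists by a standard pseudorandom-generator / expander-walk / $\eps$-biased-set argument: the event ``$y$ satisfies the conjunction'' depends only on the $k$ chosen indices through the AND of $k$ bits (the satisfaction indicators of the clauses under $y$), so a generator fooling width-$2$ (or even just read-once AND) tests with error $2^{-k}$, instantiated over $[m]^k$, does the job, and these have seed length $O(k + \log\log m)$, hence support size $2^{O(k)}\cdot\poly\log m$; alternatively a random choice of $2^{O(k)}$ sequences works with positive probability by a union bound over all $2^n$ assignments $y$ — wait, that costs $\poly(n)$ in the exponent — so I would instead use the generator route, or observe that the union bound only needs to range over the at-most-$2^m$ clause-satisfaction patterns, giving $|\mathcal S| = O(m/2^{-2k}) $... in any case the clean statement is: a small-bias space over $k\cdot\log m$ bits with bias $2^{-O(k)}$ yields $\mathcal S$ of size $2^{O(k)}\cdot\poly(\log m)$, absorbable into $m\cdot 2^{O(k)}$.

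I expect the main obstacle to be the exact accounting that makes the derandomization \emph{oblivious to $y$} while keeping the constraint count $m\cdot 2^{O(k)}$ rather than $\poly(m)\cdot 2^{O(k)}$ or $m^{O(k)}$: the soundness event for a fixed $y$ is a simple read-once AND of $k$ bits, but we need it to hold simultaneously for all $2^n$ possible $y$, and the cheapest fix is to note that the number of \emph{distinct} satisfaction patterns $(\mathbb{1}[y\text{ sat }c_1],\dots,\mathbb{1}[y\text{ sat }c_m])$ is at most $2^m$, so a probabilistic argument sampling $\Theta(k\cdot m)$... — cleaner is to cite an off-the-shelf derandomized/"sampler" serial repetition (this is exactly the content of~\cite{ser-rep}): the sequences $(i_1,\dots,i_k)$ are taken to be the vertex labels along length-$(k-1)$ walks on a constant-degree expander on vertex set $[m]$, so that $|\mathcal S| = m\cdot d^{k-1} = m\cdot 2^{O(k)}$, and the expander mixing/hitting property replaces the independence in the product bound, giving that for every $y$ the fraction of walks all of whose clauses are satisfied by $y$ is at most $(1-\gamma/2)^{\Omega(k)} \le 1/2^k$ after rescaling. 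With that, Property~\ref{prop:sr2} follows and Property~\ref{prop:sr1} is trivial, completing the proof.
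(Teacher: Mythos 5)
The paper does not actually prove this lemma (it is imported from~\cite{ser-rep}), and the construction you finally settle on --- length-$O(k)$ walks on a constant-degree expander over the clause set $[m]$, each walk contributing the conjunction of its $O(k)$ clauses as a single arity-$O(k)$ constraint, with completeness trivial and soundness from the expander hitting property $(1-\Omega(1)+\lambda)^{\Omega(k)} \leq 2^{-k}$ after rescaling $k$, giving $m\cdot d^{O(k)} = m\cdot 2^{O(k)}$ constraints --- is exactly the standard derandomized serial repetition that the citation refers to, so your proposal is correct and takes the same route. I would only drop the early hedging about small-bias spaces/union bounds and the aside that each conjunction could be re-expanded into width-$3$ clauses ``up to constants'': under the fraction-of-satisfied-clauses semantics that splitting would destroy the $2^{-k}$ soundness, so each conjunction must be kept as one constraint of arity $O(k)$, which is what your actual analysis (and the later FGLSS step, which pays the $2^{O(k)}$ per constraint) assumes.
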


Next we prove a stronger version of derandomized serial repetition with guarantees for partial assignments. The proof directly follows from the guarantees of standard Serial Repetition (Lemma~\ref{lem:ser-rep}).

\begin{lemma}[Serial repetition with partial assignments]\label{lem:ser-rep-par}
Let $k$ be a function of $n$. Given a 3-SAT instance $F$ on $n$ variables denoted by $Y, Z$ with $m$ clauses we can construct a $O(k)$-SAT formula $F'$ on the same $n$ variables with $m \cdot 2^{O(k)}$ clauses such that:
\begin{enumerate}
    \item If $Y, Z = y, z$ satisfies $F$ then $y, z$ satisfies $F'$. \label{prop:srp1}
    \item If $F(Y = y)$ is at most $1-\Omega(1)$ satisfiable then $F'(Y = y)$ is at most $1/2^{k}$ satisfiable.
\end{enumerate}
\end{lemma}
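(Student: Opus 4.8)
The plan is to observe that Lemma~\ref{lem:ser-rep-par} follows almost immediately from the standard derandomized serial repetition of Lemma~\ref{lem:ser-rep}, simply by forgetting the artificial distinction between the two blocks of variables $Y$ and $Z$. Concretely, given the 3-SAT instance $F$ on the combined variable set $Y \cup Z$ (with, say, $N := |Y| + |Z|$ variables total) and $m$ clauses, apply Lemma~\ref{lem:ser-rep} directly to $F$, viewing it as an instance on all $N$ variables. This produces an $O(k)$-SAT formula $F'$ on the same $N$ variables with $m \cdot 2^{O(k)}$ clauses, satisfying: (i) every total assignment satisfying $F$ satisfies $F'$, and (ii) for every \emph{total} assignment $w$ to the $N$ variables, if $F$ restricted by $w$ is at most $(1-\Omega(1))$-satisfiable then $F'$ restricted by $w$ is at most $1/2^k$-satisfiable. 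Property (i) gives item~\ref{prop:srp1} of the lemma verbatim, since an assignment to $Y,Z$ that satisfies $F$ also satisfies $F'$.

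For item~(2), the key point is that Lemma~\ref{lem:ser-rep}'s soundness guarantee is naturally about \emph{partial} assignments as well, not just total ones: the construction of $F'$ samples tuples of clauses of $F$ (derandomized via a suitable expander walk or sampler), and for any fixed partial assignment $\tau$, the fraction of $F'$-constraints that can be satisfied under $\tau$ is controlled by the fraction of $F$-clauses satisfiable under $\tau$, exactly as in the total-assignment case. Thus if we fix $Y = y$ and leave $Z$ unassigned, and $F(Y=y)$ is at most $(1-\Omega(1))$-satisfiable (meaning no assignment to $Z$ satisfies more than a $(1-\Omega(1))$-fraction of the clauses of $F(Y=y)$), then the same averaging/sampling argument underlying Lemma~\ref{lem:ser-rep} shows $F'(Y=y)$ is at most $1/2^k$-satisfiable. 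Formally, one can also derive this from the total-assignment version by noting that if some assignment to $Z$ made $F'(Y=y)$ more than $1/2^k$-satisfiable, then extending $y$ by that assignment to $Z$ would be a total assignment $w$ violating property~(ii) of Lemma~\ref{lem:ser-rep}, a contradiction; this reduces item~(2) entirely to the stated guarantees of the standard construction.

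The only subtlety — and the place where a careful reader should double-check — is that we must ensure the soundness bound of Lemma~\ref{lem:ser-rep} really is stated (or provable) with respect to the \emph{fraction of clauses satisfied by the best $Z$-extension}, rather than secretly depending on having a fixed total assignment in hand. Since the standard serial repetition analysis only uses the value of a fixed assignment through the fraction of clauses it satisfies, and the derandomization (a sampler/expander-walk argument) is a statement about any fixed event occurring on a $(1-\Omega(1))$-fraction of clauses, this goes through: we never need $Z$ to be assigned to run the argument. Hence no new ideas beyond Lemma~\ref{lem:ser-rep} are required, and the parameters ($O(k)$-SAT, $m \cdot 2^{O(k)}$ clauses) are inherited unchanged. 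I expect the main (minor) obstacle to be purely expository: making the ``at most $1-\Omega(1)$ satisfiable'' terminology for partial assignments precise and matching it to the black-box guarantee of Lemma~\ref{lem:ser-rep}.
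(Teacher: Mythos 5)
Your proof is correct, and it reaches the conclusion by a slightly different route than the paper. The paper's proof is white-box: it observes that in the serial-repetition construction the clauses of $F'$ are ANDs of clauses of $F$ chosen only by index, so substitution commutes with the construction ($F'(Y=y)$ equals the repetition applied to $F(Y=y)$), and then it invokes the soundness of Lemma~\ref{lem:ser-rep} on the substituted formula. You instead apply Lemma~\ref{lem:ser-rep} once to $F$ on the full variable set $Y \cup Z$ and derive item~(2) black-box by the quantifier/contradiction argument: any $z$ witnessing that $F'(Y=y)$ is more than $2^{-k}$-satisfiable yields a total assignment $(y,z)$ that satisfies at most a $(1-\Omega(1))$-fraction of $F$ (by the hypothesis on $F(Y=y)$) yet more than a $2^{-k}$-fraction of $F'$, contradicting the per-assignment soundness of Lemma~\ref{lem:ser-rep}. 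This works precisely because Lemma~\ref{lem:ser-rep} as stated (with $Y$ there being the entire variable set, and with the paper's convention that substituted clauses are kept as always-true/always-false clauses) is a per-total-assignment guarantee, not merely a statement about the MAX-SAT value of $F$ versus $F'$; you correctly flag this as the one point to check, and it is indeed supported both by the literal statement and by the underlying sampling analysis. The trade-off: your argument needs nothing about the internal structure of the repetition construction, while the paper's commutativity observation is what one would fall back on if the cited serial-repetition theorem were only available in its usual ``global MAX-SAT value'' form; the informal first half of your second paragraph gestures at re-running the sampler analysis under a partial assignment, but the formal contradiction argument you give is the one that actually carries the proof, and it suffices.
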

\begin{proof}
We prove that just standard serial repetition from Lemma~\ref{lem:ser-rep} suffices for proving this stronger property.

Property~\ref{prop:srp1} directly follows from Property~\ref{prop:sr1} in Lemma~\ref{lem:ser-rep}. 

Define $F_{y} = F(Y = y)$ where we treat any clauses that became FALSE or TRUE under $Y = y$ as normal clauses. Let $F'_{y}$ be the $O(k)$-SAT formula obtained by applying serial repetition to $f_{y}$ from Lemma~\ref{lem:ser-rep-par}.

In Serial Repetition~\cite{ser-rep} it is clear that clauses in $F'$ are just ANDs of clauses in $F$ and which clauses are part of the ``AND'' is only dependent on their index.

Due to this $F'(Y = y)$ i.e. first applying serial repetition then setting $Y = y$ is equivalent to first setting $Y = y$ and then applying serial repetition i.e. $F'_{y}$. 

By our assumption $F_{y}$ is at most $1-\Omega(1)$ satisfiable and hence by Property~\ref{prop:sr2} of Lemma~\ref{lem:ser-rep} $F'_{y}$ is at most $1/2^k$ satisfiable. As $F'_{y} = F'(Y = y)$ we have that $F'(Y = y)$ is at most $1/2^k$ satisfiable.
\end{proof}

The FGLSS reduction~\cite{FGLSS} maps a CSP $\Phi$ to a graph $G_{\Phi}$ such that the MAX-SAT value in $\Phi$ is equal to the size of the maximum independent set in $G_{\Phi}$.
\begin{lemma}[FGLSS~\cite{FGLSS}]\label{lem:fglss}
Let $F$ be a $k$-SAT instance on variable set $Y$ with $\abs{Y} = n$ and $m$ clauses. There exists a $\poly(n, m, 2^{O(k)})$ time reduction graph from $F$ to a graph $G_{F} = (V_F, E_F)$ such that: the size of maximum independent set in $G_{F}$ is exactly equal to maximum clauses satisfiable in $F$.
\end{lemma}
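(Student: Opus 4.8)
The plan is to use the classical FGLSS clause/assignment graph. For each clause $C_j$ ($j \in [m]$) of $F$, let $V_j \subseteq Y$ be the set of variables occurring in $C_j$, so $\abs{V_j} \leq k$. I enumerate all $2^{\abs{V_j}} \leq 2^k$ total assignments to $V_j$ and keep exactly those that satisfy $C_j$, calling them $\alpha_{j,1}, \dots, \alpha_{j,t_j}$ with $t_j \leq 2^k$. The vertex set of $G_F$ is $V_F = \{\, v_{j,\ell} : j \in [m],\ \ell \in [t_j] \,\}$, so $\abs{V_F} \leq m 2^k$. I put an edge between $v_{j,\ell}$ and $v_{j',\ell'}$ precisely when either $j = j'$ (and $\ell \neq \ell'$), or the partial assignments $\alpha_{j,\ell}$ and $\alpha_{j',\ell'}$ \emph{conflict}, meaning some variable $y \in V_j \cap V_{j'}$ is assigned different values by the two. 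Constructing this graph requires, per clause, listing $\leq 2^k$ assignments, followed by $\leq \abs{V_F}^2$ pairwise conflict checks each costing $\leq k$ comparisons, so the total running time is $\poly(n, m, 2^k)$, matching the claimed bound.

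It then remains to show that the independence number $N$ of $G_F$ equals the maximum number $M$ of clauses of $F$ satisfiable by a single assignment. For $N \geq M$: I take $y \in \{0,1\}^n$ satisfying a set $S$ of $M$ clauses; for each $j \in S$ the restriction $y|_{V_j}$ satisfies $C_j$, hence equals some $\alpha_{j,\ell(j)}$, and I set $I = \{ v_{j,\ell(j)} : j \in S \}$. This $I$ has $\abs{S} = M$ vertices, it contains at most one vertex per clause index (so it spans no edge of the first type), and all of its partial assignments are restrictions of the single assignment $y$ (so no two conflict); hence $I$ is independent. For $N \leq M$: let $I$ be a maximum independent set. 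The first edge type forces $I$ to contain at most one vertex per clause index, so $I = \{ v_{j,\ell(j)} : j \in T \}$ for a clause set $T$ with $\abs{T} = N$. The absence of conflicting pairs means the partial assignments $\{ \alpha_{j,\ell(j)} : j \in T \}$ are pairwise consistent, so their union is a well-defined partial assignment; I extend it arbitrarily to a full assignment $y \in \{0,1\}^n$. Since each $\alpha_{j,\ell(j)}$ already satisfies $C_j$ and $y$ extends it, $y$ satisfies all $\abs{T} = N$ clauses in $T$, so $M \geq N$. Combining the two inequalities gives $N = M$.

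There is no substantive obstacle here — this is the textbook construction — so the only care needed is bookkeeping: checking that the two edge types \emph{together} enforce both ``at most one vertex per clause index'' and ``the selected partial assignments glue into one global assignment,'' and verifying in the $N \leq M$ direction that an arbitrary extension of the consistent union still satisfies each selected clause (immediate, since extending a satisfying assignment of a clause leaves it satisfied). It is also worth recording, for the way this lemma is used inside Lemma~\ref{lem:comb}, that the construction is \emph{local in the clause index}: the vertices and edges incident to $C_j$ depend only on $C_j$ itself, and each vertex $v_{j,\ell}$ carries a partial assignment $\alpha_{j,\ell}$ whose restriction to any designated variable subset of $Y$ can be read off directly — exactly the structure that lets fixing those variables commute with the reduction, mirroring the corresponding remark for serial repetition.
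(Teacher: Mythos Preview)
Your proof is correct and follows exactly the classical FGLSS construction the paper relies on; the paper does not give a standalone proof of this lemma (it is cited from~\cite{FGLSS}), but the construction you describe is identical to the one the paper spells out when proving the partial-assignment version (Lemma~\ref{lem:fglss-par}). One cosmetic remark: your ``first edge type'' ($j=j'$, $\ell\neq\ell'$) is already subsumed by the conflict condition, since two distinct total assignments to the same variable set $V_j$ necessarily disagree on some variable---the paper handles it that way---but this redundancy is harmless.
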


 We note that a stronger version of the FGLSS reduction~\cite{FGLSS} holds with guarantees for partial assignments. The proof is very similar to the proof of the standard FGLSS reduction (Lemma~\ref{lem:fglss}).

\begin{lemma}[FGLSS with partial assignments]\label{lem:fglss-par}
Let $F$ be a $k$-SAT instance on variable set $Y, Z$ with $\abs{Y}+\abs{Z} = n$ and $m$ clauses. There exists a $\poly(n, m, 2^{O(k)})$ time reduction graph from $F$ to an independent set instance on graph $G_{F} = (V_F, E_F)$. Each vertex $v \in V_F$ is a associated to a set $T_v$ of $(i \in [\abs{Y}], b \in \{0, 1\})$ pairs. For each partial assignment of the form $\tau : Y \to \{0, 1\}$ define a partial assignment $\pi_{\tau}$ to $V_F$ such that: 
\[\pi_{\tau}(v) = \begin{cases}
0 &  ~\text{if} ~~ \exists (i, b) \in T_v \text{~such that~} \tau(Y_i) \neq b\\
* &  ~\text{otherwise},
 \end{cases}\] 
Then the max independent set in $G_{F}(\pi_{\tau})$ equals the max number of clauses satisfiable in $F(\tau)$.
\end{lemma}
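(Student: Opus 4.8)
The plan is to follow the standard FGLSS construction and track what happens to each vertex's "type" under a partial assignment to $Y$. Recall the standard FGLSS reduction from Lemma~\ref{lem:fglss}: for a $k$-SAT instance $F$ with clauses $c_1,\ldots,c_m$ on variables $Y\cup Z$, we create one vertex for every pair $(c_j,\alpha)$ where $\alpha$ is a partial assignment to the (at most $k$) variables appearing in $c_j$ that satisfies $c_j$; thus each clause contributes at most $2^k-1$ vertices. We put an edge between $(c_j,\alpha)$ and $(c_{j'},\alpha')$ iff $\alpha$ and $\alpha'$ are \emph{inconsistent} (they assign some common variable different values). An independent set then corresponds to a globally consistent partial assignment together with a choice of one satisfied "local view" per clause it touches, so the maximum independent set size equals the maximum number of simultaneously satisfiable clauses of $F$. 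For the partial-assignment version, I define, for each vertex $v=(c_j,\alpha)$, the set $T_v := \{(i,b) : Y_i \text{ appears in } c_j \text{ and } \alpha(Y_i)=b\}$, i.e.\ the restriction of the local view $\alpha$ to the $Y$-variables. Note $|T_v|\le k$, and the sets $\{T_v\}$ are output by the reduction, as required.

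Next I would verify the key claim: for any $\tau:Y\to\{0,1\}$, the graph $G_F(\pi_\tau)$ has maximum independent set size equal to $\mathrm{val}(F(\tau))$, the maximum number of clauses of $F(\tau)$ satisfiable by an assignment to $Z$. The point is that $\pi_\tau$ deletes (labels $0$) exactly those vertices $v=(c_j,\alpha)$ whose local view $\alpha$ disagrees with $\tau$ on some $Y_i\in T_v$, and leaves all other vertices unlabeled (no vertex is ever labeled $1$, since the standard FGLSS graph has no forced-in vertices). So the surviving vertices of $G_F(\pi_\tau)$ are precisely the pairs $(c_j,\alpha)$ where $\alpha$ is consistent with $\tau$ on $Y$ and $\alpha$ satisfies $c_j$ — equivalently, $\alpha$ restricted to the $Z$-variables of $c_j$ is a satisfying local view of the clause $c_j(\tau)$ of $F(\tau)$. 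The edge relation among surviving vertices is still "inconsistency of the local $Z$-views" (inconsistency on $Y$ is impossible since all survivors agree with $\tau$). Hence $G_F(\pi_\tau)$ restricted to unlabeled vertices is, up to relabeling, exactly the standard FGLSS graph of the $Z$-CSP $F(\tau)$, and applying Lemma~\ref{lem:fglss} to $F(\tau)$ gives the desired equality. Finally, by the definition of the generalized independent set instance, any valid independent set in $G_F(\pi_\tau)$ must avoid all $0$-labeled vertices and contain all $1$-labeled ones (of which there are none), so the maximum valid independent set in $G_F(\pi_\tau)$ coincides with the maximum independent set in this restricted graph, completing the argument. The running time and size bound $\poly(n,m,2^{O(k)})$ are inherited verbatim from Lemma~\ref{lem:fglss}.

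I do not expect a serious obstacle here; the lemma is essentially a bookkeeping refinement of FGLSS. The one point requiring a little care is making sure the "no forced-in vertex" observation is legitimate — i.e.\ that $\pi_\tau$ never assigns label $1$ — which is immediate from the definition of $\pi_\tau$ (it only ever outputs $0$ or $*$), and that the definition of a generalized independent set instance then makes the "must contain all $1$-labeled vertices" clause vacuous. A second minor subtlety is handling clauses $c_j$ all of whose variables lie in $Y$: such a clause either becomes a tautology under $\tau$ (all its FGLSS vertices survive, and one of them can always be added to any independent set since its $Z$-view is empty, contributing $1$ to both sides) or becomes false under $\tau$ (all its FGLSS vertices are labeled $0$ and deleted, contributing $0$ to both sides) — in either case matching $\mathrm{val}(F(\tau))$. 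With these two observations noted, the proof is a direct adaptation of the proof of Lemma~\ref{lem:fglss}.
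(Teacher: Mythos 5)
Your proposal is correct and follows essentially the same route as the paper: the standard FGLSS construction (one vertex per clause/satisfying local view, edges for inconsistent views) with $T_v$ taken from the $Y$-part of the local view, and the observation that $\pi_\tau$ deletes exactly the vertices whose views disagree with $\tau$. The only cosmetic difference is that you conclude by identifying the surviving subgraph with the FGLSS graph of $F(\tau)$ and invoking Lemma~\ref{lem:fglss}, whereas the paper re-runs the two-direction counting argument directly; both arguments are equivalent.
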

\begin{proof}
Let $w$ be a clause in $F$ and $w_i$ denote the $i^{th}$ variable in $w$. Let $\ell$ denote a satisfying assignment to $w$. For every $w, \ell$ pair create a vertex in $V_{F}$. Let $v$ be the vertex associated with a particular $w, \ell$. Let $T_{v} = \{(w_i, \ell_i\}$ represent the assignment $w_i = \ell_i$ for $1 \leq i \leq k$.

Make an edge between vertex $u$ and vertex $v$ if the assignment $T_u$ and $T_v$ contradict each other. Note that this means that there is always an edge between two vertices associated to the same clause but different satisfying assignments i.e. vertices associated with the same clause form a clique.

Let $x$ be a assignment for $F$ satisfying $\kappa$ clauses. We now give an independent set in $G_F$ of size $\kappa$. For every satisfied clause $w$ and and $\ell$ the assignment to variables of $w$ in $x$ we choose the vertex $w, \ell$ in the independent set. As there are $\kappa$ satisfied clauses we choose $\kappa$ vertices. These vertices form and independent set as if two of these vertices $u, v$ had an edge between them it would mean that the assignments $T_u$ and $T_v$ contradict each other. This is not possible as all these assignments are partial assignments of $x$.

Consider $S$ to be an independent set in $G_F$ of size $\kappa$. We now give an assignment to $F$ which satisfies $\kappa$ clauses. Note that from vertices corresponding to the same clauses only 1 vertex can be a part of independent set as they all form a clique. Hence vertices associated with $\kappa$ different clauses must be part of the independent set. For a vertex $u$ associated with $w, \ell$ the partial assignment $T_u$ satisfies $w$. For two vertices $u, v$ in the independent set the partial assignments from $T_v$ and $T_u$ do not contradict as otherwise there would be an edge between $u$ and $v$. Hence we can join all the partial assignments $T_v$ for vertices $v$ in the independent set to get a partial assignment which satisfies $\kappa$ clauses in $F(\tau)$.  Hence the maximum independent set in $G_F(\pi_\tau)$ has size at most the maximum number clauses satisfied in $F(\tau)$.
\end{proof}

We next present the proof of Lemma~\ref{lem:comb} which just follows by combining Lemma~\ref{lem:ecc+pcpp},~\ref{lem:ser-rep-par}, and~\ref{lem:fglss-par} sequentially.

\begin{proofof}{Lemma~\ref{lem:comb}}
The proof follows by applying Lemma~\ref{lem:ecc+pcpp},~\ref{lem:ser-rep-par} and~\ref{lem:fglss-par} sequentially.

We start from a circuit $D$ with input variables $X$ ($\abs{X} = n$) and size $m > n$. Lemma~\ref{lem:ecc+pcpp} transform this into a 3-SAT instance $F$ with $\poly(m)$ clauses on the variable set $Y \cup Z$, where $\abs{Y} \leq \poly(n)$, $\abs{Z} \leq \poly(m)$, and the following hold for all $x \in \{0,1\}^n$:

\begin{itemize}
    \item If $D(x) = 0$ then $F(Y = \ENC(x))$ on variable set $Z$ has a satisfying assignment $z_x$. Furthermore, there is a $\poly(m)$-time algorithm that given $x$ outputs $z_x$.
    \item if $D(x) = 1$ then there is no assignment to the $Z$ variables in $F(Y = \ENC(x))$ satisfying more than a $(1-\Omega(1))$-fraction of the clauses.
\end{itemize}

where $\ENC: \{0, 1\}^n \to \{0, 1\}^{O(n)}$ is the linear encoding function from Theorem~\ref{thm:code}.

Applying Lemma~\ref{lem:ser-rep-par} on $F$ gives us a $O(k)$-SAT formula $F'$ on the same $Y \cup Z$ variables with $\poly(m) \cdot 2^{O(k)}$ clauses such that:
\begin{enumerate}
    \item If $Y, Z = y, z$ satisfies $F$ then $y, z$ satisfies $F'$.
    \item If $F(Y = y)$ is at most $1-\Omega(1)$ satisfiable then $F'(Y = y)$ is at most $1/2^k$ satisfiable.
\end{enumerate}

which implies that:
\begin{itemize}
    \item If $D(x) = 0$ then $F'(Y = \ENC(x))$ on variable set $Z$ has a satisfying assignment $z_x$. Furthermore, there is a $\poly(m)$-time algorithm that given $x$ outputs $z_x$.
    \item if $D(x) = 1$ then there is no assignment to the $Z$ variables in $F'(Y = \ENC(x))$ satisfying more than a $1/2^k$-fraction of the clauses.
\end{itemize}

Finally applying Lemma~\ref{lem:fglss-par} to $F'$ where we consider partial assignments $\tau$ which assign $Y$ to $\ENC(x)$ for some $x$. Hence $\tau(Y_i) = \ENC_i(x)$. As $\tau$ is fixed by fixing $x$ we rename $\pi_{\tau}$ to $\pi_{x}$. $S_{v}$ is just a renaming of $T_{v}$. Size of the graph is $\poly(n+m, \poly(m) \cdot 2^{O(k)}, 2^{O(k)}) = \poly(m, 2^k)$ as $m > n$.
\end{proofof}

\section{Main Result}\label{sec:main}
We now turn to the proof of the main result, Theorem~\ref{thm:main}. We will prove the result for $\EMAJ \circ \mathcal{C}$ first, and sketch how to extend to $f \circ \mathcal{C}$ for sparse symmetric $f$ in Section~\ref{subsec:sparse}. Below we prove $\EMAJ \circ \mathcal{C}$ lower bounds for $\qNP$ when we have $2^{n-n^{\eps}}$ time algorithms for \#SAT on $\mathcal{C}$ circuits of size $2^{n^{\eps}}$. For the other parts of Theorem~\ref{thm:main} (on \#SAT algorithms with running time $2^{n}/n^{\omega(1)}$), see Section~\ref{sec:np-nexp}.

We note here that in Theorem~\ref{thm:main} we mentioned polynomial size lower bounds for $\EMAJ \circ \mathcal{C}$ we in fact prove quasi-polynomial size lower bounds below.

\begin{theorem}\label{thm:nqp-res}
Suppose $\mathcal{C}$ is typical, and the parity function has $\poly(n)$-sized $\mathcal{C}$ circuits. Then for every $k$, $\NQP$ does not have $\EMAJ \circ \mathcal{C} = \mathcal{H}$ circuits of size $O(n^{\log^k n})$, if for some $\epsilon \in (0, 1)$ there is a \#SAT algorithm running in time $2^{n-n^{\eps}}$ for all circuits from class $\mathcal{C}$ of size at most $2^{n^{\eps}}$.
\end{theorem}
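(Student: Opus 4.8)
The plan is to assume, toward a contradiction, that $\NQP$ has $\EMAJ \circ \mathcal{C}$ circuits of size $O(n^{\log^k n})$ for some fixed $k$, and then to build a GAP-UNSAT (weak CAPP) algorithm of exactly the form required by Theorem~\ref{lem:nqp-lb-CAPP2}. Since an $\EMAJ\circ\mathcal{C}$ circuit of size $s$ is evaluable by a general circuit of size $\poly(s)$, the assumption gives $\NQP \subseteq \SIZE[2^{\log^{k'}n}]$ for some $k'$; combining this with the conclusion of Theorem~\ref{lem:nqp-lb-CAPP2} (taking the parameter there large enough) and with $\NTIME[2^{\log^{O(k^4/\eps)}n}]\subseteq\NQP$ yields the contradiction. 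The engine of the construction is a consequence of Lemma~\ref{lem:emaj-to-psum}: for any $\EMAJ\circ\mathcal{C}$ circuit $G(x) = [\sum_i C_i(x) = s]$ on $n$ inputs, the quantity $\sum_{a\in\{0,1\}^n}(\sum_i C_i(a) - s)^2$ expands, by distributivity and using that $\mathcal{C}$ is typical, into a fixed $\pm$-combination of $\#$SAT values of the $\mathcal{C}$-circuits $C_i\wedge C_j$ and $C_i$; hence it is computable by $O(\size(G)^2)$ calls to the hypothesized $\#$SAT algorithm for $\mathcal{C}$, it equals $0$ exactly when $G$ is a tautology, and it is always at least $|\{a : G(a)=0\}|$.

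So it suffices to nondeterministically produce, from a GAP-UNSAT instance $D$ on $n$ inputs, an $\EMAJ\circ\mathcal{C}$ circuit $G$ of size at most $2^{n^{\eps_0}}$ (so that the $\#$SAT algorithm applies to the roughly $n$-variable circuits $C_i\wedge C_j$) with the property that: if $D$ is unsatisfiable then some nondeterministic guess makes $G$ a tautology, while if $D$ has at least $2^{n}(1-1/g(n))$ satisfying assignments then every producible $G$ has $|\{a:G(a)=0\}|\ge 2^{n-1}$. Given such a $G$, computing the sum above distinguishes the two cases ($0$ versus $\ge 2^{n-1}$) in time $\poly(\size(G))\cdot 2^{n-n^{\eps_0}}$, which for suitably chosen parameters is at most $2^{n}/g(n)^{\omega(1)}$ with $g(n)=2^{n^{2\eps}}$, matching Theorem~\ref{lem:nqp-lb-CAPP2}. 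To construct $G$: first apply Lemma~\ref{lem:comb} to $D$ with a suitable gap parameter, obtaining the generalized independent set instance $G_D=(V_D,E_D)$ in which the partial assignment $\pi_x$ depends on $x$ only through the parities $\ENC_i(x)$; here we use that $\mathcal{C}$ is typical and contains parity, so the predicate ``$v$ is not forced out by $\pi_x$'', being an AND of $O(1)$ parities, has a $\poly(n)$-size $\mathcal{C}$-circuit. Next, using the assumption together with an easy-witness lemma (as in~\cite{MurrayW18}), guess an $\EMAJ\circ\mathcal{C}$ circuit $W(x,v)$ of size at most $2^{n^{\eps''}}$ intended to encode, for each $x$, the canonical maximum independent set of $G_D(\pi_x)$ — which Lemma~\ref{lem:comb} guarantees is computable from $x$ in subexponential time, hence is encodable by a not-too-large $\EMAJ\circ\mathcal{C}$ circuit under the assumption. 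Finally set $G(x) := [\,W(x,\cdot)\text{ is a valid independent set in }G_D(\pi_x)\text{ of size }\ge\kappa\,]$.

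The heart of the proof — and the step I expect to be the main obstacle — is realizing this $G$ as an $\EMAJ\circ\mathcal{C}$ circuit of size $\poly(\size(W),|V_D|)$. One writes $G$ as a conjunction, over the polynomially many local consistency constraints of the instance, of $O(1)$-local predicates built from $W(x,\cdot)$ and the parities $\ENC_i(x)$, and then folds that conjunction back into a single $\EMAJ\circ\mathcal{C}$ circuit using the closure properties of $\EMAJ\circ\mathcal{C}$: the negation of an $\EMAJ\circ\mathcal{C}$ circuit is a $\PSUM\circ\mathcal{C}$ circuit (Lemma~\ref{lem:emaj-to-psum}), and an AND of many $\EMAJ\circ\mathcal{C}$ circuits, each written as ``$[\text{a nonnegative }\PSUM\circ\mathcal{C}\text{ expression}=0]$'', is again of this form, with the number of summands only adding up rather than multiplying. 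The delicacy is that only $O(1)$-arity disjunctions are allowed to appear inside the conjuncts — an unbounded disjunction would become a product of $\PSUM$-expressions and blow the size up exponentially — which forces the bounded locality of the reduction, and thus the ``PCP reductions with partial assignments'' (derandomized serial repetition, Lemma~\ref{lem:ser-rep-par}, and FGLSS, Lemma~\ref{lem:fglss-par}) flagged in Section~\ref{subsec:int}, rather than arbitrary PCP machinery.

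Given the realization of $G$, completeness and soundness follow directly from Lemma~\ref{lem:comb}. If $D$ is unsatisfiable then $D(x)=0$ for every $x$, so $G_D(\pi_x)$ has maximum independent set exactly $\kappa$, and the canonical witness $W$ makes $G(x)=1$ for every $x$, so $G$ is a tautology and the sum vanishes. If $D$ has at least $2^{n}(1-1/g(n))$ satisfying assignments, then for any guessed $W$, $G(x)=1$ exhibits a valid independent set of size $\ge\kappa$ in $G_D(\pi_x)$, which by Lemma~\ref{lem:comb} is impossible when $D(x)=1$; hence $\{x:G(x)=1\}\subseteq\{x:D(x)=0\}$, so $|\{x:G(x)=0\}|\ge|\{x:D(x)=1\}|\ge 2^{n}(1-1/g(n))\ge 2^{n-1}$, and the sum is at least $2^{n-1}$. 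It then only remains to fix the parameters $\eps_0$, the gap parameter, $\eps''$, and the $\eps$ of Theorem~\ref{lem:nqp-lb-CAPP2} as sufficiently small multiples of the $\#$SAT-time exponent so that $\size(G)\le 2^{n^{\eps_0}}$ and the running time is $\le 2^{n}/g(n)^{\omega(1)}$; pushing the parameters also yields the quasi-polynomial (not merely polynomial) size lower bound asserted, and replacing $(\sum_i C_i - s)^2$ by a low-degree polynomial vanishing exactly on the accepting sums extends everything to $f\circ\mathcal{C}$ for arbitrary $O(1)$-sparse symmetric $f$.
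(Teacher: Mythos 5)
Your high-level strategy --- guess a succinct $\EMAJ\circ\mathcal{C}$ witness for the canonical independent sets given by Lemma~\ref{lem:comb}, and then use the hypothesized \#SAT algorithm on the expanded square $(\sum_i C_i - s)^2$ to test a resulting circuit for being a tautology --- is close in spirit to the paper, but the step you yourself flag as the heart, realizing $G(x) := [\,W(x,\cdot)$ is a valid independent set in $G_D(\pi_x)$ of size $\geq \kappa\,]$ as a single $\EMAJ\circ\mathcal{C}$ circuit, has a genuine gap. Your folding rule only goes one way: an AND of conditions of the form ``nonnegative $\PSUM\circ\mathcal{C}$ expression $=0$'' folds by adding the expressions, so an AND of $\EMAJ\circ\mathcal{C}$ circuits is again $\EMAJ\circ\mathcal{C}$. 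But the conjuncts you need are not of this form: an edge constraint is $\neg(W(x,v_1)\wedge W(x,v_2))$ and a $\pi_x$-consistency constraint is $(\ENC_i(x)\neq b)\rightarrow \neg W(x,v)$, i.e., negations/ORs of negations of $\EMAJ\circ\mathcal{C}$ predicates. Lemma~\ref{lem:emaj-to-psum} only converts $\neg W$ into a nonnegative sum, so each such constraint becomes a strict positivity condition (e.g.\ $E_{v_1}(x)+E_{v_2}(x)>0$), and a conjunction of strict positivity conditions does not fold into a single exact test; making it fold would require representing $W$ itself (or an AND of $W$'s) by a nonnegative $\PSUM\circ\mathcal{C}$ vanishing exactly on its zero set, i.e.\ closure of exact-majority representations under negation/OR, which is not known (cf.\ the paper's remark that even depth-$d$ exact threshold circuits are not known to be closed under negation). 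Moreover the condition ``size $\geq\kappa$'', which your soundness argument crucially needs (otherwise the all-zeros $W$ makes $G$ a tautology), is a non-exact threshold of $\EMAJ\circ\mathcal{C}$ outputs; it is neither $O(1)$-local nor an exact test, and your folding scheme does not address it. (A smaller slip: the sets $S_v$ from Lemma~\ref{lem:comb} have size up to $O(k)=O(\log h(n))$, not $O(1)$.)

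The paper never builds such a $G$. It guesses an $\EMAJ\circ\mathcal{C}$ circuit for $\neg A(x,i)$, converts it via Lemma~\ref{lem:emaj-to-psum} into a nonnegative $\PSUM\circ\mathcal{C}$ circuit $R$ representing $A$, and enforces the edge and $\pi_x$-consistency constraints \emph{globally} (over all $x$ at once) by \#SAT-verified zero tests such as $\sum_x R(x,i_1)\cdot R(x,i_2)=0$ and $\sum_x(\ENC_{j'}(x)\oplus b)\cdot R(x,i)=0$, each needing only products of two $\mathcal{C}$-circuits; the per-$x$ size requirement is replaced by the approximate count $\sum_x\sum_i R(x,i)$ compared against $2^n\kappa$, where the factor-$t$ slack of the $\PSUM$ representation is absorbed by the multiplicative gap $\kappa$ versus $\kappa/h(n)$ that Lemma~\ref{lem:comb} provides with $k=\log h(n)$. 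To repair your proposal you would either have to establish the missing closure properties of $\EMAJ\circ\mathcal{C}$ (unlikely from the generic hypotheses) or switch to this global-verification-plus-approximate-counting scheme.
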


\begin{proof}
Let us assume that for a fixed $k > 0$, $\NQP$ has $\mathcal{H} = \EMAJ \circ \mathcal{C}$ circuits of size $O(n^{(\log^k n)})$ which implies that $\NQP \in \size(n^{O(\log^k n)})$ for general circuits. By Theorem~\ref{lem:nqp-lb-CAPP2}, we obtain a contradiction if for some constant $\delta \in (0, 1)$ and $g(n) = 2^{n^{2\delta}}$ we can give a $2^{n}/g(n)^{\omega(1)}$ time nondeterministic algorithm for distinguishing between:
\begin{enumerate}
    \item YES case: $D$ has no satisfying assignments.
    \item NO case: $D$ has at least $2^{n}\left(1-1/g(n)\right)$ satisfying assignments
\end{enumerate}
given a generic fan-in 2 circuit $D$ with $n$ inputs and size $m \leq h(n) :=  2^{n^{\delta}}$. Under the hypothesis, we will give such an algorithm for $\delta = \eps/4$.

Using Lemma~\ref{lem:comb}, we reduce the circuit $D$ to an independent set instance $G_D$ (with $k = \log{h(n)}$) on $n_2 =  \poly(m, 2^{O(k)}) = \poly(m, 2^{O(k)}) = \poly(m, h(n)^{O(1)}) = \poly(h(n))$ vertices. We also find subsets $S_i$ for every vertex $i \in [n_2]$. Let $\pi_x$ be the partial assignment which assigns a vertex $i$ to $0$ if there exist $(j', b) \in S_i$ such that $\ENC_{j'}(x) \neq b$. Note that $\pi_x$ does not assign any vertex to $1$. By Lemma~\ref{lem:comb}, $G_D$ has the following properties:
\begin{enumerate}
    \item If $D(x) = 0$, then $G_D(\pi_x)$ has an independent set of size $\kappa$. Furthermore, given $x$ we can find this independent set in $\poly(h(n))$ time.
    \item If $D_1(x) = 1$, then in $G_D(\pi_x)$, all independent sets have size at most $\kappa/h(n)$.
\end{enumerate}

This means it suffices for us to distinguish between the following two cases:
\begin{enumerate}
    \item YES case: For all $x$, $G_D(\pi_x)$ has an independent set of size $\kappa$.
    \item NO case: For at most $2^{n}/g(n)$ values of $x, G_D(\pi_x)$ has an independent set of size $\geq \kappa/h(n)$.
\end{enumerate}

\textbf{Guessing a succinct witness circuit: }As guaranteed by Lemma~\ref{lem:comb} given an $x$ such that $D(x) = 0$ we can find the assignment $A(x)$ to $G_D$ which is consistent with $\pi_x$ and represents an independent set of size $\kappa$ in $\poly(h(n))$ time. Let $A(x, i)$ denote the assignment to the $i^{th}$ vertex in $A(x)$. Given $x$ and vertex $i \in [n_2]$, in time $\poly(h(n))$ we can produce $\neg A(x, i)$.
\begin{claim}\label{clm:small-ckt}
Under the hypothesis, there is a $h(n)^{o(1)}$-sized $\EMAJ \circ \mathcal{C}$ circuit $U$ of size $h(n)^{o(1)}$ with $x, i$ as input representing $\neg A(x, i)$. 
\end{claim}
\begin{proof}
Under the hypothesis, for some constant $k$, we have $\NQP \subseteq \size_{\mathcal{H}}[n^{\log^k n}]$. Specifically, for $p(n) = n^{\log^{k+1} n}$ we have $\NTIME[p(n)] \subseteq \size_{\mathcal{H}}[p(n)^{1/\log n}] \subseteq \size_{\mathcal{H}}[p(n)^{o(1)}]$. As $h(n) = 2^{n^{\eps}} \gg p(n)$, a standard padding argument implies $\NTIME[\poly(h(n))] \subseteq \size_{\mathcal{H}}[(\poly(h(n)))^{o(1)}] = \size_{\mathcal{H}}[h(n)^{o(1)}]$. Since $\neg A(x, i)$ is computable in $\poly(h(n))$ time, we have that $\neg A(x, i)$ can be represented by a $h(n)^{o(1)}$-sized $\mathcal{H} = \EMAJ \circ \mathcal{C}$ circuit.
\end{proof}

Our nondeterministic algorithm for GAP-UNSAT begins by guessing $U$ guaranteed by Claim~\ref{clm:small-ckt} which is supposed to represent $\neg A$. Then by the reduction in Lemma~\ref{lem:emaj-to-psum} we can covert $U$ to a $\PSUM \circ \mathcal{C}$ circuit $R$ for $A(x, i)$ of size $\poly(h(n)^{o(1)}) = h(n)^{o(1)}$. Note that if our guess for $U$ is correct, i.e., $U = \neg A$, then $R$ represents $A$.

Let the subcircuits of $R$ be $R_1, R_2, \ldots, R_{t}$, so that $R(x) = \sum_{j \in [t]} R_{j}$, where $R_{j} \in \mathcal{C}$ and $t \leq h(n)^{o(1)}$. The number of inputs to $R_{j}$ is $n' = \abs{x}+\log{n_2}=n+O(\log h(n))$, and the size of $R_{j}$ is $h(n)^{o(1)}$.

Note that $R(x, i) = 0$ represents that the $i^{th}$ vertex is not in the independent set of $G_D$ in a solution corresponding to $x$, while $R(x, i) > 0$ represents that it is in the independent set of $G_D$ in a solution corresponding to $x$. For all $x$ and $i$ we have $0 \leq R(x, i) \leq t \leq h(n)^{o(1)}$.\\

\textbf{Verifying that $R$ encodes valid independent sets:} We can verify that the circuit $R$ produces an independent set on all $x$ by checking each edge over all $x$. To check the edge between vertices $i_1$ and $i_2$ we need to verify that at most one of them is in the independent set. Equivalently, for all $x$ we check that $R(x, i_1) \cdot R(x, i_2) = 0$. As $R(x, i) \geq 0$ for all $x$ and $i$ we can just verify $$\sum_{x \in \{0, 1\}^n} R(x, i_1) \cdot R(x, i_2) = 0.$$ Since $R(x, i) = \sum_{j \in [t]} R_j(x, i)$ it suffices to verify that
$$\sum_{x \in \{0, 1\}^n} \sum_{j_1, j_2 \in [t]} R_{j_1}(x, i_1) \cdot R_{j_2}(x, i_2) = 0.$$ Let $R_{j_1, j_2}(x, i_1, i_2) = R_{j_1}(x, i_1) \cdot R_{j_2}(x, i_2)$. Since $\mathcal{C}$ is closed under AND (upto polynomial factors) $R_{j_1, j_2}$ also has a $\poly(h(n)^{o(1)}) = h(n)^{o(1)}$ sized $\mathcal{C}$ circuit. Exchanging the order of summations is suffices for us to verify
$$\sum_{j_1, j_2 \in [t]} \left(\sum_{x \in \{0, 1\}^n} R_{j_1, j_2}(x, i_1, i_2)\right) = 0.$$
For fixed $i_1, i_2, j_1, j_2$ the number of inputs to $R_{j_1, j_2}$ is $\abs{x}=n$ and its size is $h(n)^{o(1)} \leq 2^{n^{\eps}}$. Hence, for fixed $i_1, i_2, j_1, j_2$ we can compute $\sum_x R_{j_1, j_2}(x, i_1, i_2)$ using the \#SAT algorithm from our assumption, in time $2^{n-n^{\eps}}$. Summing over all $j_1, j_2$ pairs only adds another multiplicative factor of $t^2 = h(n)^{o(1)}$. This allows us to verify that the edge $(i_1, i_2)$ is satisfied by $R$. Checking all edges of $G_D$ only adds another multiplicative factor of $\poly(h(n))$. Hence the total running time for verifying that $R$ encodes valid independent sets on all $x$ is still $2^{n-n^{\eps}}\poly(h(n))$.\\

\textbf{Verifying consistency of independent set produced by $R$ with $\pi_x$: } As we care about the sizes of independent sets in $G_D(\pi_x)$ over all $x$ we need to check if the assignment by $R$ is consistent with $\pi_x$. As $\pi_x$ only assigns vertices to $0$, we need to verify that all vertices assigned to $0$ in $\pi_x$ are in fact assigned to $0$ by the assignment given by $R(x, \cdot)$. From Lemma~\ref{lem:comb}, we know that $\pi_x$ assigns a vertex $i$ to $0$ if for some $(j', b) \in S_i$, $\ENC_{j'}(x) \neq b$. To check this condition we need to verify that $R(x, i) = 0$ if for some $(j', b) \in S_i$, $\ENC_{j'}(x) \neq b$. Equivalently, we cn check $(\ENC_{j'}(x) \oplus b) \cdot R(x, i) = 0$ for all $x, i, (j', b) \in S_i$. Since $(\ENC_j(x) \oplus b)R(x, i) \geq 0$ for all possible inputs we can just check that $$\sum_{x \in \{0, 1\}^n} (\ENC_{j'}(x) \oplus b) \cdot R(x, i) = 0$$ for all $i, (j', b) \in S_i$. As $R(x, i) = \sum_{j \in [t]} R_j(x, i)$ we can equivalently verify that  $$\sum_{x \in \{0, 1\}^n} \sum_{j \in [t]} (\ENC_{j'}(x) \oplus b) \cdot R_{j}(x, i) =  0$$ for all $i, (j', b) \in S_i$. Note that $R_{j'}(x, i)$ has a $h(n)^{o(1)}$ sized $\mathcal{C}$ circuit. By our assumption parity has a $\poly(n)$-sized $\mathcal{C}$-circuit so $(\ENC_j(x) \oplus b)$ also has a $\poly(n)$-sized $\mathcal{C}$ circuit. Hence $(\ENC_j(x) \oplus b) \cdot R_{j'}(x, i)$ has a $\poly(n, h(n)^{o(1)}) = h(n)^{o(1)}$-sized $\mathcal{C}$ circuit, since $\mathcal{C}$ is closed under AND. 

For fixed $(i, j, j')$, $(\ENC_{j'}(x) \oplus b) \cdot R_{j}(x, i) \in \mathcal{C}$ has $\abs{x} = n$ inputs and size $h(n)^{o(1)} < 2^{n^{\eps}}$. Hence we can use our assumed \#SAT algorithm to calculate $\sum_{x \in \{0, 1\}^n} (\ENC_{j'}(x) \oplus b) \cdot R_{j}(x, i)$ in time $2^{n-n^{\eps}}$. Summing over all $j \in [t]$ introduces another multiplicative factor of $h(n)^{o(1)}$. This allows us to verify the desired condition for a fixed $i, (j', b) \in S_i$. To check it for all $i, (j', b) \in S_i$ (recall $\abs{S_i} = O(n)$ by Theorem~\ref{thm:code}) only introduces another multiplicative factor of $\poly(h(n)) \cdot O(n) = \poly(h(n))$ in time. Therefore the total running time for verifying consistency w.r.t. $\pi_x$ is $2^{n-n^{\eps}}\poly(h(n))$.

At this point, we now know that $R$ represents an independent set, and that $R$ is consistent with $\pi_x$. We need to distinguish between:
\begin{enumerate}
    \item YES case: For all $x$, $R(x, \cdot)$ represents an independent set of size $\kappa$.
    \item NO case: For at most $2^{n}/g(n)$ values of $x, R(x, \cdot)$ represents an independent set of size $\geq \kappa/h(n)$.
\end{enumerate}



\begin{lemma}\label{lem:approx}
For all $x$ such that $R(x, \cdot)$ represents an independent set of size $a$. we have $a \leq \sum_{i \in [n_2]} R(x, i) \leq at$.
\end{lemma}
\begin{proof}
For every vertex $i$ in the independent set, $1 \leq R(x, i) \leq t$. For all vertices $i$ not in the independent set, we have $R(x, i) = 0$.  Hence $a \leq \sum_{i \in [n_2]} R(x, i) \leq at$.
\end{proof}

\textbf{Distinguishing between the YES and NO cases:}
To distinguish between the YES and NO cases, we now compute 
\begin{align}\label{last}
&\sum_{x \in \{0, 1\}^n} \sum_{i \in [n_2]} R(x, i)
\end{align}This allows us to distinguish between the YES case and NO case as:
\begin{enumerate}
    \item YES case: We have for at least $2^{n}(1-1/g(n))$ values of $x$ we have an independent set of size at most $\kappa/h(n)$. By Lemma~\ref{lem:approx} for such $x$, $\sum_{i \in [n_2]} R(x, i) \leq t\kappa/h(n)$. for the rest of $2^{n}/g(n)$ values of $x$ the independent set could be all the vertices in the graph $G_D$. Hence by Lemma~\ref{lem:approx} for such values of $x$, $\sum_{i \in [n_2]} R(x, i) \leq tn_2 = \poly(h(n))$. Hence
    \begin{align*}
        \sum_{x \in \{0, 1\}^n} \sum_{i \in [n_2]} R(x, i) &\leq (2^{n}/g(n))\poly(h(n))+ 2^nt\kappa/h(n)\\
        &\leq o(2^n)+2^nt\kappa/h(n) \hspace{15pt} [\text{As } h(n) = g(n)^{o(1)}]\\
        &\leq o(2^n)+o(2^n\kappa) \hspace{15pt} [\text{As } t = h(n)^{o(1)}]\\
        &\leq 2^n \kappa \hspace{15pt} [\text{As } \kappa > 1]
    \end{align*}

    \item NO case: We have for all $x \in \{0, 1\}^n$ the independent set is at least of size $\kappa$. Hence by Lemma~\ref{lem:approx} the sum is $\sum_{x \in \{0, 1\}^n} \sum_{i \in [n_2]} R(x, i) > 2^{n}\kappa$.
\end{enumerate}

All that remains is how to compute \eqref{last}. As $R(x, i) = \sum_{j \in [t]} R_j(x, i)$, we can compute
$$\sum_{x \in \{0, 1\}^n} \sum_{i \in [n_2]} \sum_{j \in [t]} R_{j}(x, i) = \sum_{j \in [t]} \sum_{i \in [n_2]} \sum_{x \in \{0, 1\}^n} R_{j}(x, i)$$

For a fixed $i, j$, $R_{j}(x, i) \in \mathcal{C}$, it has $\abs{x} = n$ inputs and size $\leq \poly(h(n)^{o(1)}) = h(n)^{o(1)} < 2^{n^{\eps}}$. Hence we can use the assumed \#SAT algorithm to calculate $\sum_{x \in \{0, 1\}^n} R_{j}(x, i)$ in time $2^{n-n^{\eps}}$. Summing over all $j \in [t], i \in [n_2]$ only introduces another $h(n)^{o(1)}\poly(h(n)) = \poly(h(n))$ multiplicative factor. Thus the running time for distinguishing the two cases is $2^{n-n^{\eps}}\poly(h(n))$.

In total our running time comes to $2^{n-n^{\eps}}\poly(h(n)) = 2^{n-n^{4\delta}+O(n^{\delta})} \leq 2^{n-n^{3\delta}} = 2^{n}/g(n)^{\omega(1)}$ as $g(n) = 2^{n^{2\delta}}$ and $\eps = 4\delta$. By Theorem~\ref{lem:nqp-lb-CAPP2}, this gives us a contradiction which completes our proof.

\end{proof}

The above theorem when combined with known \#SAT algorithms for $\ACC^0 \circ \THR$ gives an $\NQP$ lower bound for $\EMAJ \circ \ACC^0 \circ \THR$.

\subsection{$\EMAJ \circ \ACC^0 \circ \THR$ Lower bound}\label{subsec:acc}

We will apply a known $\#$SAT algorithm for $\ACC \circ \THR$ circuits.

\begin{theorem}[\cite{acc-algo}]\label{thm:acc_algo}
For every pair of constants $d, m$, there exists a constant $\eps \in (0, 1)$ such that \#SAT can be solved in time $2^{n-n^{\eps}}$ time for $\AC^0[m] \circ \THR$ circuits of depth $d$ and size $2^{n^{\eps}}$.
\end{theorem}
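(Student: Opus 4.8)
Theorem~\ref{thm:acc_algo} is quoted from~\cite{acc-algo} and invoked here as a black box; for completeness I sketch how one would establish such a $\#$SAT algorithm, following the template of the $\ACC^0$ satisfiability/$\#$SAT algorithm of~\cite{Williams-jacm14}, with the extra bottom layer of $\THR$ gates as the one genuinely new ingredient. First I would tame that layer. A linear threshold gate on $n$ Boolean inputs may be assumed to have integer weights of magnitude $2^{O(n\log n)}$, and --- more to the point --- fixing any block of its inputs leaves a threshold gate on the remaining inputs with only a shifted threshold value. Using either the Beigel--Reingold--Spielman-style simulation of a threshold gate by a $\poly(n)$-size $\MAJ\circ\MAJ$ circuit~\cite{pp-closed}, or a low-degree probabilistic polynomial for thresholds, one replaces each bottom gate by an object the upper layers can absorb. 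For a $\#$SAT (as opposed to plain SAT) algorithm this replacement must preserve the number of satisfying assignments exactly, so I would use the $\MAJ\circ\MAJ$ route, at the cost of one extra level of depth below.

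Second, I would normalize the $\AC^0[m]$ part: by the Yao--Beigel--Tarui transformation, a depth-$d$, size-$s$ $\AC^0[m]$ circuit equals a $\SYM\circ\AND$ circuit whose $\AND$ gates have fan-in $(\log s)^{O(d)}$, with $2^{(\log s)^{O(d)}}$ many $\AND$ gates and an arbitrary symmetric gate on top. Composing this with the (now bounded-weight) threshold bottom, and using that an $\AND$ factors across a partition of the input variables, I would split the $n$ inputs into two blocks and reduce the computation of $\sum_{x}C(x)$ to aggregating the top symmetric function over a bilinear form in ``local'' data attached to the two blocks. This is exactly the situation handled in~\cite{Williams-jacm14} by combining the polynomial method with (Coppersmith-style) fast rectangular matrix multiplication, together with brute-force enumeration over one of the two blocks; choosing the block sizes so that both the matrix-multiplication step and the enumeration step run in time $2^{n-n^{\Omega(1)}}$ yields the claimed bound.

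The crux --- and the reason $\eps$ must be a specific positive constant depending on $d$ and $m$ rather than a universal one --- is the quantitative bookkeeping linking these steps. The Yao--Beigel--Tarui blowup $2^{(\log s)^{O(d)}}$ is subexponential in $n$ only when $\log s = n^{\eps}$ for $\eps$ small enough that the exponent $(\log s)^{O(d)} = n^{\eps\cdot O(d)}$ is $o(n)$; simultaneously, the number of $\SYM\circ\AND$ terms and the parameters introduced when converting $\THR$ gates to $\MAJ\circ\MAJ$ must stay below the threshold at which fast rectangular matrix multiplication still beats $2^{n}$ rather than costing $2^{n+o(n)}$. Exhibiting a single $\eps=\eps(d,m)$ that meets all of these competing constraints for each fixed pair $(d,m)$ is exactly what~\cite{acc-algo} does; the individual ingredients --- the weight bound, the $\MAJ\circ\MAJ$ simulation, the Yao--Beigel--Tarui normalization, and the matrix-multiplication step --- are standard.
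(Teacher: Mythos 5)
You are right that the paper offers no proof of Theorem~\ref{thm:acc_algo}: it is imported as a black box from \cite{acc-algo}, so there is no internal argument to compare against, and all the present paper needs is the stated $\#$SAT algorithm with its parameters. Your sketch does capture the standard template behind such results: a Yao--Beigel--Tarui-style normalization, evaluation of the resulting $\SYM \circ \AND$ object on all inputs via a block split plus fast rectangular matrix multiplication, and quantitative bookkeeping that forces $\eps$ to depend on $d$ and $m$. You are also right to avoid probabilistic polynomials, since exact counts must be preserved.

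However, the step you present as routine is precisely where your sketch breaks and where the real content of \cite{acc-algo} lies. If you replace each bottom $\THR$ gate by a polynomial-size $\MAJ\circ\MAJ$ circuit (incidentally, that simulation is due to Goldmann--H{\aa}stad--Razborov/Hofmeister rather than \cite{pp-closed}), the circuit becomes $\AC^0[m]\circ\MAJ\circ\MAJ$, and the Yao--Beigel--Tarui transformation applies only to $\ACC^0$ circuits built from AND/OR/NOT/MOD gates; majority gates are not known to lie in $\ACC^0$, so they cannot be ``absorbed'' into a $\SYM\circ\AND$ form, and the bottom gates you are left with no longer factor across a partition of the inputs, which is exactly what the matrix-multiplication evaluation needs. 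The cited work instead applies the Beigel--Tarui-type transformation to the $\ACC^0$ part alone, treating the outputs of the threshold gates as formal inputs, and then deals with the surviving layer of linear threshold gates inside the all-inputs evaluation step itself (the same machinery underlying its depth-two threshold evaluation result): after splitting the variables into two blocks, each threshold gate's value on a pair of half-assignments reduces to comparing two partial weighted sums, and these comparisons are preprocessed so that rectangular matrix multiplication still applies; the whole transformation is exact and hence count-preserving. So as an account of the cited theorem your outline is a reasonable first approximation, but as a proof it has a genuine gap at the $\THR$-handling step; since the theorem is used here only as a citation, this does not affect the paper itself.
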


\begin{theorem}\label{thm:acc_lb}
For constants $k, d, m$, $\NQP$ does not have $\size(n^{\log^k n})$ $\EMAJ \circ \ACC^0 \circ \THR$ circuits of depth $d$.
\end{theorem}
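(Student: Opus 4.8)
The plan is to combine Theorem~\ref{thm:nqp-res} with the $\#$SAT algorithm of Theorem~\ref{thm:acc_algo}, once we check that the class $\mathcal{C} = \ACC^0 \circ \THR$ (of a fixed depth) satisfies the hypotheses required by Theorem~\ref{thm:nqp-res}: namely that $\mathcal{C}$ is typical and that parity has polynomial-size $\mathcal{C}$-circuits. First I would fix the constants $d$ (depth) and $m$ (modulus) and let $\mathcal{C} = \AC^0[m] \circ \THR$ of depth $d$. Parity (indeed any $\AC^0[m]$ function where $m$ has the right divisibility, or more simply $\mathsf{MOD}_2$ when $2 \mid m$; in general parity is computable by a polynomial-size $\ACC^0$ circuit when... ) — here I would instead just note that parity is computed by a polynomial-size $\ACC^0$ circuit and hence by an $\ACC^0 \circ \THR$ circuit, since a $\THR$ gate at the bottom can trivially be made to compute a literal; so the parity requirement is met once we allow the $\ACC^0$ part to have enough modular gates, and for the stated corollary we only need \emph{some} constant depth and modulus, so we can pad the depth of $\mathcal{C}$ to accommodate a parity subcircuit. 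The typicality conditions (closure under $\AND_2$, closure under negation, contains all-ones) are standard for $\ACC^0 \circ \THR$: negation is free since $\ACC^0$ is closed under complementation; $\AND_2$ of two $\ACC^0 \circ \THR$ circuits is again $\ACC^0 \circ \THR$ after merging at a new top $\AND$ gate absorbed into the $\ACC^0$ layers (with only a constant increase in depth and polynomial increase in size); and the all-ones function is trivially in the class.

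The key subtlety is depth bookkeeping: Theorem~\ref{thm:nqp-res} is stated for a \emph{fixed} typical class $\mathcal{C}$, but the closure operations (AND, negation) used throughout the proof of Theorem~\ref{thm:nqp-res} — e.g. in forming $R_{j_1,j_2} = R_{j_1} \wedge R_{j_2}$ and $(\ENC_{j'}(x) \oplus b) \cdot R_j(x,i)$ — each increase the depth of an $\ACC^0 \circ \THR$ circuit by a constant. Since the proof of Theorem~\ref{thm:nqp-res} applies only $O(1)$ many such operations in sequence (the composition depth is bounded by an absolute constant independent of $n$ and $k$), the circuits to which we feed the $\#$SAT algorithm live in $\AC^0[m] \circ \THR$ of some depth $d' = d + O(1)$. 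So I would: (i) start from the desired lower bound target $\EMAJ \circ \ACC^0 \circ \THR$ of depth $d$; (ii) observe that the $\#$SAT algorithm of Theorem~\ref{thm:acc_algo} works for $\AC^0[m] \circ \THR$ of \emph{every} constant depth $d'$ (with an $\eps$ depending on $d', m$), in particular the slightly larger depth $d'$ produced by the closure operations; (iii) invoke Theorem~\ref{thm:nqp-res} with $\mathcal{C} = \AC^0[m] \circ \THR$ of depth $d$, whose typicality constant $k$ accounts for the constant depth blow-up, and conclude $\NQP \not\subset \size(n^{\log^k n})$-size $\EMAJ \circ \mathcal{C}$ circuits.

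Concretely the proof reads: Fix $k, d, m$. Let $\mathcal{C}$ be the class of depth-$d$ $\AC^0[m] \circ \THR$ circuits; then $\EMAJ \circ \mathcal{C} = \EMAJ \circ \ACC^0 \circ \THR$ of depth $d$. The class $\mathcal{C}$ is typical: it is closed under negation (complement the $\AC^0[m]$ part), closed under $\AND_2$ up to a constant-depth, polynomial-size blow-up (which is absorbed into the typicality exponent and does not leave $\ACC^0 \circ \THR$), and contains the all-ones function. Also the parity function has polynomial-size $\mathcal{C}$-circuits (a constant-depth $\mathsf{MOD}$-circuit over appropriate moduli, with bottom $\THR$ gates computing literals; if necessary increase $d$ by the constant needed for this subcircuit). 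By Theorem~\ref{thm:acc_algo}, there is an $\eps \in (0,1)$ such that $\#$SAT for $\AC^0[m] \circ \THR$ circuits of the relevant (constant) depth and size $2^{n^\eps}$ runs in $2^{n-n^\eps}$ time; this is exactly the algorithmic hypothesis of Theorem~\ref{thm:nqp-res}. Applying Theorem~\ref{thm:nqp-res} yields that $\NQP$ does not have $\EMAJ \circ \mathcal{C}$ circuits of size $O(n^{\log^k n})$, i.e. no depth-$d$ $\EMAJ \circ \ACC^0 \circ \THR$ circuits of that size. Since $k, d, m$ were arbitrary, the theorem follows. I expect the only real obstacle is articulating (ii)–(iii) cleanly — ensuring the reader sees that the constant-depth increases incurred by the closure operations in Theorem~\ref{thm:nqp-res} stay within the scope of Theorem~\ref{thm:acc_algo}, which holds for all constant depths.
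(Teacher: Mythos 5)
Your proposal is correct and follows essentially the same route as the paper: verify that $\ACC^0 \circ \THR$ is typical and computes parity (hence $\ENC$) in polynomial size, then feed the $\#$SAT algorithm of Theorem~\ref{thm:acc_algo} into Theorem~\ref{thm:nqp-res}; your extra depth bookkeeping is harmless since that algorithm works at every constant depth. One small caution: for a fixed odd modulus $m$, parity is not in $\AC^0[m]$ at any constant depth (Razborov--Smolensky), so the correct fix is to enlarge the modulus of $\mathcal{C}$ (e.g.\ to $2m$, or work with $\ACC^0 \circ \THR$ over all moduli, as the paper does), not merely to pad the depth.
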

\begin{proof}
We first note that $\ACC^0 \circ \THR$ is indeed typical and can represent $\ENC(x)$ by $\poly(n)$-sized circuits as $\ENC(x): \{0, 1\}^n \to \{0, 1\}^{O(n)}$ is a linear function.

By Theorem~\ref{thm:acc_algo} we know that for all constants $d$ there exists some constant $\epsilon \in (0, 1)$ such that there exists a \#SAT algorithm running in time $2^{n-n^{\eps}}$ for all circuits from class $\ACC^0 \circ \THR$ of size $\leq 2^{n^{\eps}}$ and depth $d$.

The above properties imply that $\ACC^0 \circ \THR$ satisfies the preconditions of Theorem~\ref{thm:nqp-res} and hence for every pair of constant $k, d$, $\NQP$ does not have $\size(n^{\log^k n})$ $\EMAJ \circ \ACC^0 \circ \THR$ circuits of depth $d$.
\end{proof}

The above theorem can be rewritten as: For constants $k, d, m$, there exists a constant $e$ such that $\NTIME[n^{\log^e n}]$ does not have $n^{\log^k n}$-size $\EMAJ \circ \ACC^0 \circ \THR$ circuits of depth $d$. Here the constant $e$ depends on $d$ and $m$. Using a standard trick (as in ~\cite{MurrayW18}) this dependence can be removed as we show below.

\begin{corollary}
There exists an $e$ such that $\NTIME[n^{\log^e n}]$ does not have polynomial size $\EMAJ \circ \ACC^0 \circ \THR$ circuits.
\end{corollary}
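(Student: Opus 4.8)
The plan is to remove the dependence of the constant $e$ on the circuit depth $d$ and modulus $m$ by a standard diagonalization-style padding trick, as used in~\cite{MurrayW18}. Recall from Theorem~\ref{thm:acc_lb} (in its quantitative form) that for each pair of constants $d,m$ there is a constant $e = e(d,m)$ such that $\NTIME[n^{\log^e n}]$ does not have $n^{\log^k n}$-size $\EMAJ \circ \ACC^0 \circ \THR$ circuits of depth $d$ using modulus $m$, for every $k$. We want a single universal constant $e$ that works against \emph{all} constant depths and moduli simultaneously; the point is that an $\EMAJ \circ \ACC^0 \circ \THR$ circuit of polynomial size has \emph{some} constant depth $d$ and modulus $m$, so it suffices to handle each $(d,m)$ but with a uniform time bound.

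First I would set up a single ``universal'' language. Fix an efficient enumeration of pairs $(d,m)$, and for a growing but very slowly increasing function $D(n)$ (e.g. $D(n)$ = the largest index such that all of the first $D(n)$ pairs $(d_i,m_i)$ have $e(d_i,m_i) \le \log\log n$, say), consider the language $L$ decided by the following machine: on input of length $n$, compute $D(n)$, and then run the $\NTIME[n^{\log^{e} n}]$ hard-language construction from Theorem~\ref{thm:acc_lb} against depth-$d_i$, modulus-$m_i$ circuits for each $i \le D(n)$, using enough of the input (via a standard padding/interleaving of instances) so that the $i$-th ``slice'' of inputs is a hard instance for circuits of depth $d_i$ and modulus $m_i$. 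The key quantitative observation is that $D(n) \to \infty$, so for every fixed pair $(d,m)$, all sufficiently long inputs have $D(n)$ large enough to include the slice for $(d,m)$; and on that slice $L$ disagrees with every $n^{\log^k n}$-size depth-$d$ modulus-$m$ circuit. Hence $L$ has no polynomial-size $\EMAJ \circ \ACC^0 \circ \THR$ circuit of \emph{any} constant depth or modulus.

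Next I would bound the running time of this machine. On inputs of length $n$ we only run the constructions for pairs $(d_i,m_i)$ with $i \le D(n)$, and by the choice of $D(n)$ each such construction runs in time $n^{\log^{e(d_i,m_i)} n} \le n^{\log^{\log\log n} n}$; summing over the $D(n) \le \log\log n$ slices and accounting for the bookkeeping overhead, the total time is at most $n^{\log^{O(\log\log n)} n} = n^{(\log n)^{o(\log n)}}$, which is certainly $n^{\log^{e'} n}$ for, say, $e' = 2$ and all large $n$ — and in any case is $2^{\log^{O(1)} n}$, i.e., quasi-polynomial, so $L \in \NQP$. (One has to be slightly careful that computing $D(n)$ itself is fast enough; since $e(d,m)$ is a fixed computable function of $(d,m)$, $D(n)$ is computable in time polynomial in $n$, which is negligible.) Combining, $L \in \NTIME[n^{\log^{e'} n}]$ for a universal constant $e'$, and $L$ has no polynomial-size $\EMAJ \circ \ACC^0 \circ \THR$ circuit, establishing the corollary with $e = e'$.

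The only real subtlety — and the step I expect to need the most care — is making the padding/interleaving genuinely work: one must ensure that the hard-instance construction of Theorem~\ref{thm:acc_lb} for a given $(d,m)$ can be embedded into a subset of inputs of length $n$ in a way that (i) a small $\EMAJ \circ \ACC^0 \circ \THR$ circuit for $L$ on all length-$n$ inputs would restrict to a small such circuit on that slice (this is immediate since restricting inputs only shrinks circuits), and (ii) the time to decide $L$ on length $n$ does not blow up beyond quasi-polynomial even though it simultaneously diagonalizes against $D(n)$ different circuit families. Both are handled by the standard trick of letting the $i$-th slice consist of inputs whose length, as a function of $n$, is large enough that $e(d_i,m_i) \le \log\log(\text{slice length})$, exactly as in~\cite{MurrayW18}; I would simply cite that argument rather than reproduce it in full.
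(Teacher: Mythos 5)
Your interleaving/padding construction does not establish the corollary as stated, and the obstruction is quantitative, not a matter of bookkeeping. In Theorem~\ref{thm:acc_lb} the hidden exponent $e(d,m)$ grows without bound as $(d,m)$ grow (it scales like $1/\eps(d,m)$, where $\eps(d,m)$ is the savings from Theorem~\ref{thm:acc_algo}), so any single language that directly contains hard instances for \emph{every} pair $(d,m)$ must, at length $n$, spend time about $n^{\log^{e_i}n}$ on the largest slice it includes. With your choice of $D(n)$ (include slice $i$ once $e(d_i,m_i)\le\log\log n$) the running time is about $n^{(\log n)^{\log\log n}}=2^{(\log n)^{1+\log\log n}}$, which is \emph{not} bounded by $n^{\log^{e'}n}$ for any fixed $e'$ and is not even quasi-polynomial; both of your claims (``certainly $n^{\log^{e'}n}$ for $e'=2$'' and ``in any case $2^{\log^{O(1)}n}$'') are false. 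Nor can heavier padding repair this: if a length-$n'$ hard instance for $(d,m)$ is embedded at length $n=2^{(\log n')^{a}}$, then a $\poly(n)$-size circuit for $L$ restricts to a circuit of size $n'^{O(\log^{a-1}n')}$ for the slice language, so you need hardness against size $n'^{\log^{k}n'}$ with $k\approx a$, which by the quantitative form of the theorem forces $e\approx ck^{4}/\eps(d,m)\approx ca^{4}/\eps(d,m)$, and the slice then takes time $2^{(\log n')^{e+1}}=n^{(\log n)^{(e+1)/a-1}}$; keeping this below $n^{\log^{e'}n}$ forces $e'+1\ge ca^{3}/\eps(d,m)\ge c/\eps(d,m)$, reintroducing exactly the $(d,m)$-dependence you are trying to remove. (There is also a secondary uniformity issue: Theorem~\ref{thm:acc_lb} only asserts that a hard language \emph{exists} for each $(d,m)$, while your universal machine needs a computable map from $(d,m)$ to such a language and to the constant $e(d,m)$.)

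The ``standard trick'' of \cite{MurrayW18} that the paper invokes is a different one, and it is what the paper's proof actually does: assume toward a contradiction that for every $e$ there are constants $d,m$ such that $\NTIME[n^{\log^{e}n}]$ has polynomial-size $\EMAJ\circ\AC^0[m]\circ\THR$ circuits of depth $d$. Then $\P$ has such circuits for some fixed $(d_0,m_0)$, hence the Circuit Evaluation problem does, and therefore \emph{every} polynomial-size circuit --- in particular every polynomial-size $\EMAJ\circ\ACC^0\circ\THR$ circuit of arbitrary constant depth and modulus --- converts with polynomial blowup into an $\EMAJ\circ\AC^0[m_0]\circ\THR$ circuit of the fixed depth $d_0$. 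The assumption thus collapses to ``for all $e$, $\NTIME[n^{\log^{e}n}]$ has polynomial-size depth-$d_0$, modulus-$m_0$ circuits,'' contradicting the single instance of Theorem~\ref{thm:acc_lb} for $(d_0,m_0)$. This completeness/collapse argument sidesteps the unbounded growth of $e(d,m)$ entirely, which is precisely why no union-over-$(d,m)$ diagonalization is needed.
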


\begin{proof}
Assume for contradiction that for all $e$, there exists constants $d, m$ such that $\NTIME[n^{\log^e n}]$ has poly-sized $\EMAJ \circ \AC^0[m] \circ \THR$ circuit of depth $d$. This implies that $P$ has poly-sized $\EMAJ \circ \AC^0[m] \circ \THR$ circuits, which further implies that \emph{CIRCUIT EVALUATION} problem has poly-sized $\EMAJ \circ \AC^0[m_0] \circ \THR$ circuit of a fixed constant depth $d_0$ and fixed constant $m_0$. Hence any circuit of size $s$ has an equivalent $\poly(s)$-sized $\EMAJ \circ \AC^0[m_0] \circ \THR$ circuit of depth $d_0$. Combining this with our assumption yields: For all $e$, there exists constants $d, m$ such that $\NTIME[n^{\log^e n}]$ has poly-sized $\EMAJ \circ \AC^0[m_0] \circ \THR$ circuit of depth $d_0$. This contradicts Theorem~\ref{thm:acc_lb} and hence our assumption was wrong, which completes the proof.
\end{proof}

\section{Extension to All Sparse Symmetric Functions}\label{subsec:sparse}

Our lower bounds extend to circuit classes of the form $f \circ \mathcal{C}$ where $f$ denotes a family of symmetric functions that only take the value $1$ on a small number of slices of the hypercube. Formally, let $f : \{0,1\}^n \rightarrow \{0,1\}$ be a symmetric function, and let $g : \{0,1,\ldots,n\} \rightarrow \{0,1\}$ be its ``companion'' function, where for all $x$, $f(x) = g(\sum_i x_i)$ (here, $x_i$ denotes the $i$-th bit of $x$). For $k \in \{0,1,\ldots,n\}$, we say that a symmetric function $f$ is \emph{$k$-sparse} if $|g^{-1}(1)| = k$. For example, the all-zeroes function is $0$-sparse, the all-ones function is $n$-sparse, and the $\EMAJ$ function is $1$-sparse.  

\begin{theorem} Let $k < n/2$. Every $k$-sparse symmetric function $f : \{0,1\}^n \rightarrow \{0,1\}$ can be represented as an exact majority of $n^{O(k)}$ ANDs on $k$ inputs.
\end{theorem}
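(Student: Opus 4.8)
Let $g : \{0,1,\dots,n\} \to \{0,1\}$ be the companion function of $f$, so $f(x) = g(|x|)$, and write $S := g^{-1}(1) = \{s_1,\dots,s_k\} \subseteq \{0,1,\dots,n\}$. The plan is to realize $f$ by a single $\EMAJ$ gate whose inputs are conjunctions of at most $k$ literals over $x_1,\dots,x_n$ (allowing constant inputs, i.e. the empty conjunction). The key elementary fact is that for fixed $a,b \ge 0$ with $a+b \le k$, the number of conjunctions of $a$ distinct positive literals and $b$ distinct negative literals satisfied by an input $x$ with $|x| = w$ equals exactly $\binom{w}{a}\binom{n-w}{b}$. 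Hence if we feed the $\EMAJ$ gate, for each ``shape'' $(a,b)$, exactly $c_{a,b}$ copies of every conjunction of that shape (for nonnegative integers $c_{a,b}$ to be chosen), then on an input of weight $w$ the number of satisfied bottom gates is exactly $\psi(w) := \sum_{a+b\le k} c_{a,b}\binom{w}{a}\binom{n-w}{b}$. It therefore suffices to choose the $c_{a,b}$ (each bounded by $n^{O(k)}$) and a threshold $u$ so that, for $w \in \{0,\dots,n\}$, we have $\psi(w) = u$ iff $w \in S$: the $\EMAJ$ gate with threshold $u$ on these $\sum_{a,b} c_{a,b}\binom{n}{a}\binom{n-a}{b} = n^{O(k)}$ conjunctions then computes $f$, and it is a legal gate since $u = \psi(s_1)$ is the number of bottom gates satisfied by a weight-$s_1$ input, hence at most the total number of bottom gates.

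\textbf{Constructing $\psi$.} Let $p(w) := \prod_{i=1}^{k}(w - s_i)$, an integer polynomial of degree $k$ vanishing on $\{0,\dots,n\}$ exactly at $S$. Expand it as $p(w) = \sum_{j=0}^{k}\alpha_j\binom{w}{j}$ with $\alpha_j \in \mathbb{Z}$; since $|p(i)| \le n^k$ for $i \le k$ and $\alpha_j = \sum_{i=0}^{j}(-1)^{j-i}\binom{j}{i}p(i)$ is the $j$-th finite difference, $|\alpha_j| \le 2^k n^k$. Every term with $\alpha_j \ge 0$ is already a nonnegative multiple of $\binom{w}{j} = \binom{w}{j}\binom{n-w}{0}$. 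For a term with $\alpha_j < 0$, the Chu--Vandermonde identity $\sum_{a=0}^{j}\binom{w}{a}\binom{n-w}{j-a} = \binom{n}{j}$ gives $\alpha_j\binom{w}{j} = |\alpha_j|\sum_{a=0}^{j-1}\binom{w}{a}\binom{n-w}{j-a} - |\alpha_j|\binom{n}{j}$, i.e. an exchange of the ``unwanted'' term $\binom{w}{j}$ for nonnegative mixed terms at the cost of a single negative constant. Substituting all these and collecting the leftover constants into $N_0 := \sum_{j : \alpha_j < 0}|\alpha_j|\binom{n}{j} \le n^{O(k)}$, the function $p(w) + N_0$ becomes a nonnegative integer combination of the products $\binom{w}{a}\binom{n-w}{b}$, $a+b\le k$, with all coefficients bounded by $n^{O(k)}$. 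Taking $\psi := p + N_0$ (read as that combination) and $u := N_0$ yields $\psi(w) = u \iff p(w) = 0 \iff w \in S$, which is exactly what was required; this completes the proof.

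\textbf{Where the difficulty lies.} The one conceptual obstacle is that a nonnegative combination of the all-positive conjunctions $\binom{w}{j}$ alone is a \emph{nondecreasing} function of $w$, so its $u$-level set is forced to be an interval, whereas $S$ need not be one. Negated literals (the factors $\binom{n-w}{b}$ with $b \ge 1$) are exactly what break this monotonicity, and Chu--Vandermonde is the tool that converts the ``downward'' contributions $-\binom{w}{j}$ into legal nonnegative terms plus a constant, which is then hidden inside the $\EMAJ$ threshold. Everything else --- the finite-difference bound on the $\alpha_j$, the count of bottom gates, and the check that $u$ does not exceed that count --- is routine bookkeeping. (The hypothesis $k < n/2$ is used only to guarantee $k \le n$, so that conjunctions of up to $k$ distinct literals exist.)
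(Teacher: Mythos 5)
Your proof is correct, and it takes a somewhat different (and more explicit) route than the paper. The paper works directly with the multivariate expression $E(x)=\prod_{v\in g^{-1}(1)}\left(\sum_i x_i - v\right)$, expands it as a multilinear polynomial of degree at most $k$ with integer coefficients of magnitude $n^{O(k)}$, and then simply asserts that $E$ can be rewritten as an $\EMAJ$ of $n^{O(k)}$ ANDs of fan-in at most $k$; the step of converting negative monomial coefficients into a legal exact-count gate is left implicit. You instead work with the univariate weight polynomial $p(w)=\prod_i (w-s_i)$, expand it in the binomial basis via finite differences, and use Chu--Vandermonde to trade each negatively weighted term $-\binom{w}{j}$ for nonnegative mixed terms $\binom{w}{a}\binom{n-w}{j-a}$ plus a constant absorbed into the threshold $u$ -- exactly the bookkeeping the paper glosses over, and in my view the genuine content of the claim (as you note, positive conjunctions alone give a monotone count, so negations are unavoidable). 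The price in both arguments is the same: the bottom gates are really ANDs of up to $k$ \emph{literals} (or constants), not just of positive variables, which is how the statement's ``ANDs on $k$ inputs'' must be read; this is harmless for the paper's applications, since there $\mathcal{C}$ is closed under negation and AND and the fan-in-$k$ layer is absorbed into $\mathcal{C}$. Two trivial loose ends, present in the paper's proof as well: the degenerate case $u=N_0=0$ (the paper's $\EMAJ$ definition takes $u$ to be a positive integer) can be fixed by adding a constant-true bottom gate and setting $u=N_0+1$, and the case $k=0$ (the all-zeroes function) should be excluded or handled separately since then $g^{-1}(1)$ is empty.
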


\begin{proof}
Given a $k$-sparse $f$ and its companion function $g$, consider the polynomial expression \[E(x) := \prod_{v \in g^{-1}(1)}\left(\sum_i x_i - v\right).\] Then $E(x) = 0$ whenever $f(x) = 1$, and $E(x) \neq 0$ otherwise. Expanding $E$ into a sum of products, we can write $E$ as a multilinear $n$-variate polynomial of degree at most $k$, with integer coefficients of magnitude at most $n^{O(k)}$ (since each $v \leq n$). We can therefore write $E$ as the EMAJORITY of $n^{O(k)}$ distinct ANDs on up to $k$ inputs. 
\end{proof}

The above theorem immediately implies that for every $k$-sparse symmetric function $f_m$, any circuit with an $f_m$ at the output gate can be rewritten as a circuit with an EMAJ of fan-in at most $m^{O(k)}$ at the output gate (and ANDs of fan-in up to $k$ below that). 

\begin{corollary} For every fixed $k$, and every $k$-sparse symmetric function family $f=\{f_n\}$, $\qNP$ does not have polynomial-size $f \circ \ACC^0 \circ \THR$ circuits.
\end{corollary}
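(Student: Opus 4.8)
The plan is to reduce the statement to the $\EMAJ \circ \ACC^0 \circ \THR$ lower bound already in hand (the Corollary following Theorem~\ref{thm:acc_lb}), using the representation theorem proved just above. Fix the constant $k$ and a $k$-sparse symmetric family $f = \{f_n\}$, and suppose toward a contradiction that $\qNP$ has polynomial-size $f \circ \ACC^0 \circ \THR$ circuits; unpacking the quantifiers, every language $L \in \qNP$ has, for some constants $m$ and $d$, $f \circ \AC^0[m] \circ \THR$ circuits of depth $d$ and size $s = s(n) = n^{O(1)}$.

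The one real step is to convert such a circuit into an equivalent $\EMAJ \circ \AC^0[m] \circ \THR$ circuit of depth $d + O(1)$ and size $n^{O(1)}$. The top $f$-gate has fan-in $t \le s$, and since $f_t$ is $k$-sparse, the preceding theorem writes it as an exact majority of $t^{O(k)} \le s^{O(k)}$ ANDs of fan-in at most $k$ over the outputs of the $\AC^0[m] \circ \THR$ subcircuits. Because $k = O(1)$, each such AND of at most $k$ many $\AC^0[m] \circ \THR$ circuits is itself an $\AC^0[m] \circ \THR$ circuit of depth $d + 1$ and size $O(s)$; feeding the resulting $s^{O(k)}$ subcircuit outputs into a single exact-majority gate yields the claimed $\EMAJ \circ \AC^0[m] \circ \THR$ circuit, of depth $d+1$ and size $s^{O(k)} \cdot O(s) = n^{O(1)}$. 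Consequently $\qNP$ has polynomial-size $\EMAJ \circ \ACC^0 \circ \THR$ circuits; but the Corollary following Theorem~\ref{thm:acc_lb} provides an $e$ such that $\NTIME[n^{\log^e n}]$, which is contained in $\qNP$, has no polynomial-size $\EMAJ \circ \ACC^0 \circ \THR$ circuits, a contradiction. (One may equivalently skip the packaged corollary and, for each fixed $m, d$, apply Theorem~\ref{thm:nqp-res} with $\mathcal{C} = \AC^0[m] \circ \THR$, which is typical and contains parity and the all-ones function, together with the $\#$SAT algorithm of Theorem~\ref{thm:acc_algo}, then remove the dependence of the quasi-polynomial exponent on $(m,d)$ via the CIRCUIT EVALUATION padding argument used in the $\EMAJ$ case.)

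I do not anticipate a substantive obstacle: the difficulty all lives in Theorem~\ref{thm:nqp-res}, the $\#$SAT algorithm of Theorem~\ref{thm:acc_algo}, and the representation theorem above. The points needing care are only (i) that the blowup from the $f$-gate to the $\EMAJ$-gate stays polynomial, which holds because $k = O(1)$ and the $f$-gate's fan-in is at most the circuit size; (ii) that absorbing the low-fan-in AND layer into $\AC^0[m]$ costs only $O(1)$ additional depth and a polynomial factor in size; and (iii) that the constants $m, d$ in the hypothesis may depend on the language, which is harmless since the reduction of the second step preserves $m$ and increases $d$ by a constant, and the corollary being invoked is already depth- and modulus-uniform.
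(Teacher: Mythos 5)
Your proposal is correct and follows essentially the same route as the paper: apply the representation theorem to rewrite the top $f$-gate as an $\EMAJ$ of polynomially many constant-fan-in ANDs, absorb that AND layer into the $\AC^0[m]\circ\THR$ subcircuits with only a constant depth increase and polynomial size blowup, and invoke the $\EMAJ \circ \ACC^0 \circ \THR$ lower bound (with its built-in uniformity over $m,d$). Your attention to the polynomial bound via $k=O(1)$ and to the dependence of $(m,d)$ on the language matches the paper's intent exactly.
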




\section{NEXP Lower Bounds}\label{sec:np-nexp}

In this section we prove NEXP Lower Bounds under weaker algorithmic assumptions. The proof follows the same pattern as the proof of lower bound for $\NQP$ in Theorem~\ref{thm:nqp-res}.

\subsection{NEXP Lower Bounds}
\begin{theorem}[\cite{Williams-jacm14}]\label{thm:nexp-lb-CAPP}
Suppose for some constant $\eps \in (0, 1)$ there is an algorithm $A$ that for all $\poly(n)$-size circuits $C$ on $n$ inputs, $A(C)$ runs in $2^{n}/n^{\omega(1)}$ time, outputs YES on all unsatisfiable $C$, and outputs NO on all $C$ that have at least $2^{n-1}$ satisfying assignments. Then $\NTIME[2^{n}] \not\subset \Ppoly$.
\end{theorem}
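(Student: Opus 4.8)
The plan is to prove the contrapositive in the ``algorithms-to-lower-bounds'' style of~\cite{Williams-jacm14}: assuming both the hypothesized algorithm $A$ and $\NTIME[2^n] \subseteq \Ppoly$ (equivalently $\NEXP \subseteq \Ppoly$ after padding), I would build a nondeterministic algorithm that decides an arbitrary $L \in \NTIME[2^n]$ in time $2^n/n^{\omega(1)}$, contradicting the nondeterministic time hierarchy theorem, which gives $\NTIME[2^n] \not\subseteq \NTIME[2^n/n^{\omega(1)}]$ since $2^{n+1}/(n+1)^{\omega(1)} = o(2^n)$.

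\textbf{Step 1: succinct PCP view of $\NTIME[2^n]$.} Fix $L \in \NTIME[2^n]$ via a nondeterministic machine $M$. By a standard reduction (simulate $M$'s computation by a $2^n \cdot \poly(n)$-size circuit, then reduce to Circuit-SAT / $3$-SAT), membership ``$x \in L$'' for $|x| = n$ is equivalent to satisfiability of an instance of size $2^n \cdot \poly(n)$ whose clause (or gate) indexed by an $n + O(\log n)$-bit string $i$ is computable from $(x,i)$ in $\poly(n)$ time. Applying a randomness-efficient (quasilinear-size) PCP of proximity to this succinctly-presented instance, I obtain a $\poly(n)$-time-constructible verifier circuit $V_x(r,\pi)$ of $\poly(n)$ size that uses $n' = n + O(\log n)$ random coins $r$ (this tight randomness bound, namely $\log$ of the instance size, is precisely why the quasilinear PCP is needed), makes $O(1)$ queries into a proof oracle $\pi \colon \{0,1\}^{O(n)} \to \{0,1\}$, and satisfies: (completeness) if $x \in L$ there is a $\pi$ with $V_x(r,\pi)=1$ for all $r$; (soundness, after $O(1)$ sequential repetitions) if $x \notin L$ then for every $\pi$, $V_x(r,\pi)=0$ for at least half of all $r$.

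\textbf{Step 2: easy witnesses.} In the completeness case we need the proof oracle $\pi$ to be computed by a circuit $W \colon \{0,1\}^{O(n)} \to \{0,1\}$ of size $\poly(n)$, so that the nondeterministic algorithm can guess it in $\poly(n)$ time instead of writing down all $2^{O(n)}$ bits of $\pi$. This is exactly the conclusion of the easy-witness lemma of Impagliazzo--Kabanets--Wigderson applied to $L$ with this verifier: $\NEXP \subseteq \Ppoly$ implies that every yes-instance of an $\NEXP$ search problem has a polynomial-size-circuit witness. I expect this to be the main obstacle: it is the one step that genuinely uses $\NEXP \subseteq \Ppoly$, and it requires the IKW machinery (deriving a sufficiently hard function from the \emph{failure} of easy witnesses and contradicting the assumed circuit upper bound), which must be invoked with enough uniformity to control the size and index length of the resulting $W$.

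\textbf{Step 3: the faster nondeterministic simulation.} On input $x$ with $|x|=n$: nondeterministically guess a $\poly(n)$-size circuit $W$; form the $\poly(n)$-size circuit $E_x(r) := \neg V_x(r,W)$ on $n'$ inputs (constructible in $\poly(n)$ time); run $A(E_x)$; accept iff $A$ outputs YES. If $x \in L$, Step 2 gives a good $W$ for which $E_x$ is unsatisfiable, so $A(E_x)$ outputs YES on that branch. If $x \notin L$, PCP soundness gives that $E_x$ has at least $2^{n'-1}$ satisfying assignments for \emph{every} guessed $W$, so $A(E_x)$ outputs NO on every branch. The total running time is $\poly(n) + 2^{n'}/(n')^{\omega(1)} = 2^n/n^{\omega(1)}$ since $n' = n + O(\log n)$. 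Hence $\NTIME[2^n] \subseteq \NTIME[2^n/n^{\omega(1)}]$, the desired contradiction; a small amount of padding absorbs the ``$+1$'' slack in the hierarchy theorem and the lower-order terms in $n'$.
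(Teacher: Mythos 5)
Your proposal is correct and is essentially the standard argument behind this theorem, which the paper itself does not reprove but imports as a black box from~\cite{Williams-jacm14}: a randomness-efficient PCP for $\NTIME[2^n]$ with $n+O(\log n)$ coins and $\poly(n)$-time computable verifier, the IKW easy-witness lemma to guess a $\poly(n)$-size circuit encoding the proof oracle, a single call to the GAP-UNSAT algorithm on $\neg V_x(r,W)$, and a contradiction with the nondeterministic time hierarchy. The only cosmetic quibble is terminological: what is needed in Step 1 is a succinct/efficient PCP for nondeterministic exponential time (PCPs of proximity are the tool this paper uses elsewhere for encoding inputs, not strictly needed here), but your stated parameters and uniformity requirements are exactly the right ones.
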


\begin{theorem}\label{lem:nexp-lb-CAPP2}
Suppose there is an algorithm $A$ that for all $\poly(n)$-sized circuits $C$ on $n$ inputs, $A(C)$ runs in $2^{n}/g(n)^{\omega(1)}$ time, outputs YES on all unsatisfiable $C$, and outputs NO on all $C$ that have at least $2^{n}(1-1/g(n))$ satisfying assignments, for any $g(n)$ satisfying $g(n) = n^{\omega(1)}, g(n) = 2^{o(n)}$. Then for all $k$, there is a $c \geq 1$ such that $\NTIME[2^{n}] \not\subset \Ppoly$.
\end{theorem}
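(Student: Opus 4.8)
The plan is to mirror the proof of Theorem~\ref{lem:nqp-lb-CAPP2}, using Theorem~\ref{thm:nexp-lb-CAPP} of~\cite{Williams-jacm14} as the black box and the hitting set construction of Lemma~\ref{lem:hit} to amplify the promise gap. Concretely, starting from Theorem~\ref{thm:nexp-lb-CAPP}, we are handed an $m$-input, $\poly(m)$-size circuit $D'$ that is either unsatisfiable or has at least $2^{m-1}$ satisfying assignments, and it suffices to distinguish these two cases in time $2^m/m^{\omega(1)}$.

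First I would build a new circuit $D$ on $n$ inputs, where $n$ is chosen to satisfy $n = m + \psi \cdot \log g(n)$ with $\psi$ the constant from Lemma~\ref{lem:hit}; since $g$ is time constructible and $g(n) = 2^{o(n)}$, such an $n$ exists and can be located by brute-force search in subexponential time, and moreover $n = m(1+o(1))$. The circuit $D$ interprets its $n$ input bits as a random seed $r$, runs the algorithm $S$ of Lemma~\ref{lem:hit} to produce $t = O(\log g(n))$ strings $x_1,\ldots,x_t \in \{0,1\}^m$ via a $\poly(m,\log g(n))$-size circuit, and outputs $\bigvee_{i=1}^t D'(x_i)$. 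The total size of $D$ is $\poly(m,\log g(n)) + O(\log g(n)) \cdot \size(D') = \poly(n)$, since $\log g(n) = o(n)$ and $\size(D') = \poly(m) = \poly(n)$, so $D$ is a legitimate instance for the assumed algorithm $A$.

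Next I would check that the two promise cases transfer: if $D'$ is unsatisfiable then clearly so is $D$; and if $D'$ has at least $2^{m-1}$ satisfying assignments then, applying Lemma~\ref{lem:hit} with $f = D'$, for a $(1-1/g(n))$ fraction of seeds $r$ the OR fires, so $D$ has at least $2^n(1-1/g(n))$ satisfying assignments. By hypothesis, $A$ distinguishes these cases in time $2^n/g(n)^{\omega(1)}$. Unwinding the parameters, this yields an algorithm for the original instance $D'$ running in time $2^n/g(n)^{\omega(1)} = 2^m \cdot g(n)^{O(1)}/g(n)^{\omega(1)} = 2^m/g(n)^{\omega(1)} = 2^m/m^{\omega(1)}$, where the last step uses $g(n) = n^{\omega(1)}$ together with $n = \Theta(m)$. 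Invoking Theorem~\ref{thm:nexp-lb-CAPP} then gives $\NTIME[2^n] \not\subset \Ppoly$, as desired.

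I do not expect a serious obstacle here; the only points requiring care are (i) confirming that the blow-up in building $D$ keeps its size polynomial in $n$ (which hinges on $g(n) = 2^{o(n)}$), and (ii) the bookkeeping that converts a $2^n/g(n)^{\omega(1)}$ bound back into a $2^m/m^{\omega(1)}$ bound, which is exactly where the two conditions $g(n) = n^{\omega(1)}$ and $g(n) = 2^{o(n)}$ are both used. As in Theorem~\ref{lem:nqp-lb-CAPP2}, the quantifier phrasing ``for all $k$ there is a $c$'' in the statement is vacuous in this setting; the content is simply $\NTIME[2^n] \not\subset \Ppoly$.
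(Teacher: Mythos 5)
Your proposal is correct and follows essentially the same route as the paper's own proof: build $D$ on $n = m + \psi\log g(n)$ inputs via the hitting-set construction of Lemma~\ref{lem:hit}, verify both promise cases transfer, and unwind $2^n/g(n)^{\omega(1)}$ into $2^m/m^{\omega(1)}$ to invoke Theorem~\ref{thm:nexp-lb-CAPP}. Your side remarks (size bookkeeping via $g(n)=2^{o(n)}$, use of $g(n)=n^{\omega(1)}$ in the final bound, and the vestigial ``for all $k$'' quantifier) all match the paper's treatment.
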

\begin{proof} Our starting point is Theorem~\ref{thm:nexp-lb-CAPP} (\cite{Williams-jacm14}): we are given an $m$-input, $\poly(m)$-size circuit $D'$ that is either UNSAT or has at least $2^{m-1}$ satisfying assignments, and we wish to distinguish between the two cases with a $2^{m}/m^{\omega(1)}$-time algorithm.
 
We create a new circuit $D$ with $n$ inputs, where $n$ satisfies \[n = m + \psi \cdot \log{g(n)},\] and $\psi > 0$ is the constant from Lemma~\ref{lem:hit}. (Note that, since $g(n)$ is time constructible and $g(n) \leq 2^{o(n)}$, such an $n$ can be found in subexponential time.) Applying the algorithm from Lemma~\ref{lem:hit}, $D$ treats its $n$ bits of input as a string of randomness $r$, computes $t=O(\log g(n))$ strings $x_1, x_2, \ldots, x_t \in \{0,1\}^m$ with a $\poly(m, \log g)$-size circuit, then outputs the OR of $D'(x_i)$ over all $i=1,\ldots,t$. Note the total size of our circuit $D$ is $\poly(m,\log g) + O(\log g)\cdot \size(D') = \poly(m) = \poly(n)$.

Clearly, if $D'$ is unsatisfiable, then $D$ is also unsatisfiable. By Lemma~\ref{lem:hit}, if $D'$ has $2^{m-1}$ satisfying assignments, then $D$ has at least $2^{n}(1-1/g(n))$ satisfying assignments. As $\size(D) \leq \poly(n)$, by our assumption we can distinguish the case where $D$ is unsatisfiable from the case where $D$ has at least $2^{n}(1-1/g(n))$ satisfying assignments, with an algorithm running in time $2^{n}/g(n)^{\omega(1)}$. This yields an algorithm for distinguishing the original circuit $D'$ on $m$ inputs and $\poly(m)$ size, running in time \[2^{n}/g(n)^{\omega(1)}
= 2^{m}g(n)^{O(1)}/g(n)^{\omega(1)} = 2^{m}/g(n)^{\omega(1)} \leq 2^{m}/g(m)^{\omega(1)} \leq 2^{m}/m^{\omega(1)}\] since $n > m, g(n) = n^{\omega(1)}$. By Theorem~\ref{thm:nexp-lb-CAPP}, this implies that $\NTIME[2^{n}] \not\subset \Ppoly$
\end{proof}

\begin{theorem}\label{thm:nexp-res}
$NTIME[2^n]$ does not have $\poly(n)$-sized $\EMAJ \circ \mathcal{C} = \mathcal{H}$ circuits if
\begin{enumerate}
    \item There exists a \#SAT algorithm running in time $2^{n}/b(n)$ for all $\poly(n)$-sized circuits from class $\mathcal{C}$ where $b(n) = n^{\omega(1)}$
    \item $\mathcal{C}$ is {\emph typical} and $(\neg) \ENC_i(x)$ has $\poly(n)$-sized $\mathcal{C}$ circuits.
\end{enumerate}
\end{theorem}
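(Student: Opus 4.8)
The plan is to mirror the proof of Theorem~\ref{thm:nqp-res} almost verbatim, replacing the $\qNP$-to-$\NTIME[2^n]$ bookkeeping with the $\NEXP$ version, and the quasi-polynomial size bounds with polynomial ones. Concretely, assume for contradiction that $\NTIME[2^n]$ has $\poly(n)$-size $\EMAJ \circ \mathcal{C}$ circuits; by the standard ``$\NTIME[2^n] \subseteq \Ppoly$ collapse plus padding'' (as used in~\cite{Williams-jacm14, MurrayW18}) this gives $\P$, hence CIRCUIT EVALUATION, poly-size $\mathcal{H}$ circuits, and so every poly-size circuit has an equivalent $\poly$-size $\mathcal{H} = \EMAJ \circ \mathcal{C}$ circuit. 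By Theorem~\ref{lem:nexp-lb-CAPP2} it then suffices to give a nondeterministic $2^n/g(n)^{\omega(1)}$-time algorithm distinguishing the UNSAT case from the ``$\geq 2^n(1-1/g(n))$ satisfying assignments'' case for generic $\poly(n)$-size circuits $D$ on $n$ inputs, for a suitable $g(n) = n^{\omega(1)}$, $g(n) = 2^{o(n)}$; here the appropriate choice is something like $g(n) = n^{\log n}$ (or any super-polynomial, sub-exponential function), with the gap parameter $k = \log h(n)$ for $h(n) = \poly(n)$ — so $k = O(\log n)$ and $2^{O(k)} = \poly(n)$ everywhere.

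The engine is again Lemma~\ref{lem:comb}: reduce $D$ to a generalized independent-set instance $G_D$ on $n_2 = \poly(m, 2^{O(k)}) = \poly(n)$ vertices, with the partial assignments $\pi_x$ defined through $\ENC$, so that $D(x)=0$ forces a max independent set of size $\kappa$ (findable from $x$ in $\poly(n)$ time) and $D(x)=1$ caps it at $\kappa/h(n)$. Using the easy-witness consequence of the collapse, guess a $\poly(n)$-size $\EMAJ \circ \mathcal{C}$ circuit $U$ purporting to compute $\neg A(x,i)$, convert it via Lemma~\ref{lem:emaj-to-psum} into a $\PSUM \circ \mathcal{C}$ circuit $R = \sum_{j\in[t]} R_j$ for $A(x,i)$ with $t = \poly(n)$ and each $R_j \in \mathcal{C}$ of $\poly(n)$ size on $n' = n + O(\log n)$ inputs. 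Then run the same three verification phases as in Theorem~\ref{thm:nqp-res}: (i) verify $R$ encodes an independent set on all $x$ by checking $\sum_x R_{j_1}(x,i_1)R_{j_2}(x,i_2) = 0$ for every edge and every $j_1,j_2$ (using closure under $\AND_2$ so $R_{j_1}R_{j_2} \in \mathcal{C}$), via $\poly(n)$ calls to the assumed $\#$SAT algorithm, each running in $2^n/b(n)$ time; (ii) verify consistency with $\pi_x$ by checking $\sum_x (\ENC_{j'}(x)\oplus b)R_j(x,i) = 0$ for all $i, (j',b)\in S_i, j$ (using that $\ENC_{j'}(x)\oplus b$ is a parity, hence in $\mathcal{C}$ by hypothesis, and closure under $\AND_2$), again $\poly(n)$ $\#$SAT calls; (iii) compute $\sum_{x}\sum_{i}R(x,i) = \sum_{j}\sum_i\sum_x R_j(x,i)$ with $\poly(n)$ $\#$SAT calls. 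The YES/NO case analysis is identical to Theorem~\ref{thm:nqp-res} via Lemma~\ref{lem:approx}: in the UNSAT case the sum exceeds $2^n\kappa$, while in the ``mostly satisfying'' case it is at most $(2^n/g(n))\cdot\poly(n) + 2^n t\kappa/h(n) < 2^n\kappa$, which works because $h(n) = g(n)^{o(1)}$ and $t = \poly(n)$. Total running time is $2^n/b(n)\cdot \poly(n) = 2^n/n^{\omega(1)} = 2^n/g(n)^{\omega(1)}$ for the chosen $g$, contradicting Theorem~\ref{lem:nexp-lb-CAPP2}.

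The only genuinely different step is confirming that $b(n) = n^{\omega(1)}$ absorbs the $\poly(n)$ overhead from the $\poly(n)$-many $\#$SAT calls and from $n_2, t$, so that the final running time is still $2^n/g(n)^{\omega(1)}$ for a valid $g$; this is where one must be slightly careful choosing $g(n)$ — it must be super-polynomial and sub-exponential, and also satisfy $h(n) = g(n)^{o(1)}$ for $h(n) = \poly(n)$, which holds for, e.g., $g(n) = 2^{\sqrt{n}}$ or $g(n) = n^{\log n}$. I do not anticipate any real obstacle: the PCP machinery (Lemma~\ref{lem:comb}) and the $\#$SAT-to-sum manipulations go through unchanged, and in fact the accounting is \emph{easier} here than in the $\qNP$ case because $k = O(\log n)$ makes $2^{O(k)} = \poly(n)$ rather than quasi-polynomial. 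The main thing to state carefully is the padding/collapse argument turning ``$\NTIME[2^n]$ has poly-size $\EMAJ\circ\mathcal{C}$'' into ``every poly-size circuit has a poly-size $\EMAJ\circ\mathcal{C}$ circuit and the independent-set witnesses are $\EMAJ\circ\mathcal{C}$-computable,'' which is exactly the analogue of Claim~\ref{clm:small-ckt} with $h(n) = \poly(n)$.
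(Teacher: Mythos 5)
Your overall route is the same as the paper's: assume the circuits exist, invoke Theorem~\ref{lem:nexp-lb-CAPP2}, reduce via Lemma~\ref{lem:comb}, guess a succinct $\EMAJ \circ \mathcal{C}$ witness, convert with Lemma~\ref{lem:emaj-to-psum}, and decide the gap with $\#$SAT calls. However, your parameter setting has a genuine gap. You fix $h(n) = \poly(n)$ (so $k = O(\log n)$) and claim the accounting gets easier, but the distinguishing step needs the NO-side bound $2^n t \kappa / h(n)$ to be $o(2^n \kappa)$, i.e.\ $t = o(h(n))$. Here $t$ is the fan-in of the $\PSUM \circ \mathcal{C}$ circuit $R$, which is a polynomial whose degree is inherited from the \emph{hypothesized} polynomial size bound on $\EMAJ \circ \mathcal{C}$ circuits for $\NTIME[2^n]$ and is not under your control; a fixed polynomial $h(n)$ chosen in advance need not dominate it, and your stated justification (``$h(n) = g(n)^{o(1)}$ and $t = \poly(n)$'') does not give $t \ll h(n)$ when both are polynomials. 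The paper avoids this precisely by taking $h(n) = n^{\omega(1)}$ with $h(n) = g(n)^{o(1)}$, so that $t = \poly(n) = h(n)^{o(1)}$ automatically (and Claim~\ref{clm:small-ckt-nexp} still gives a $\poly(n)$-size witness circuit because $\poly(h(n)) \leq 2^{O(n)}$, so the superpolynomial graph size costs nothing in the witness-size analysis).

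A second, related issue is your choice of $g$: picking a fixed $g(n)$ such as $n^{\log n}$ or $2^{\sqrt{n}}$ independent of $b$ does not give running time $2^n/g(n)^{\omega(1)}$, since $b(n)$ is only guaranteed to be $n^{\omega(1)}$ (e.g.\ $b(n) = n^{\log\log n}$ defeats $g(n) = n^{\log n}$). The paper ties the parameters to the algorithm: $g(n) = b(n)^{o(1)}$ and $h(n) = g(n)^{o(1)}$, which makes the total time $2^n \poly(h(n))/b(n) = 2^n/g(n)^{\omega(1)}$ as required. Both problems are fixable --- either adopt the paper's superpolynomial, nested choice of $h$ and $g$, or argue with the explicit constant from the hypothesized size bound when fixing $h$ --- but as written your accounting does not go through, and the claim that the $\NEXP$ case is easier because $2^{O(k)} = \poly(n)$ is exactly where the argument breaks.
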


\begin{proof}
Let us assume that $\NTIME[2^n]$ has $\poly(n)$-sized $\mathcal{H} = \EMAJ \circ \mathcal{C}$ circuits which implies that $\NTIME[2^n] \in \Ppoly$. By Theorem~\ref{lem:nexp-lb-CAPP2}, we will get a contradiction if we can give a $2^{n}/g(n)^{\omega(1)}$ time nondeterministic algorithm for distinguishing between:
\begin{enumerate}
    \item YES case: $D$ has no solutions.
    \item NO case: $D$ has at least $2^{n}\left(1-1/g(n)\right)$ solutions.
\end{enumerate}
given a circuit $D$ with $n$ inputs and size $m = \poly(n)$ where $g(n) = n^{\omega(1)}$. We will take a $g(n)$ such that $g(n) = b(n)^{o(1)}$.

Let $h(n)$ be a function such that $h(n) = g(n)^{o(1)}, h(n) = n^{\omega(1)}$. Using Lemma~\ref{lem:comb} we reduce $D$ to independent set instance on $G_D$ (with $k = \log{h(n)}$) over $n_2 =  \poly(m, 2^{O(k)}) = \poly(m, h(n)) = \poly(h(n))$ vertices and edges as $h(n) = n^{\omega(1)}$ and $m = \poly(n)$. We also find $S_i$ for every vertex $i \in [n_2]$. By Lemma~\ref{lem:comb}, $G_D$ has the following properties:
\begin{enumerate}
    \item Let $D(x) = 0$ then for $G_D(\pi_x)$ there exists an independent set of size $\kappa$. Further given $x$ we can find this assignment in $\poly(h(n))$ time.
    \item Let $D_1(x) = 1$ then for $G_D(\pi_x)$ all independent sets have size $\leq \kappa/h(n)$.
\end{enumerate}
where $\pi_x$ is the partial assignment which assigns a vertex $i$ to $0$ if there exist $(j', b) \in S_i$ such that $\ENC_{j'}(x) \neq b$. $\pi_x$ does not assign any vertex to $1$.

This means we need to distinguish between the following two cases:
\begin{enumerate}
    \item YES case: For all $x$, $G_D(\pi_x)$ has an independent set of size $\kappa$.
    \item NO case: For at most $2^{n}/g(n)$ values of $x, G_D(\pi_x)$ has an independent set of size $\geq \kappa/h(n)$.
\end{enumerate}

\textbf{Guessing a succinct witness circuit: }As given an $x$ such that $D(x) = 1$ we can find the assignment $A(x)$ to $G_D$ which is consistent with $\pi_x$ and represents an independent set of size $\kappa$ in $\poly(h(n))$ time. Let $A(x, i)$ denote the assignment to $i^{th}$ vertex in $A(x)$. Given $x$ and vertex $i \in [n_2]$ in time $\poly(h(n))$ we can produce $\neg A(x, i)$.
\begin{claim}\label{clm:small-ckt-nexp}
There exists a $\poly(n)$-sized $\EMAJ \circ \mathcal{C}$ circuit $U$ with $x, i$ as input representing $A(x, i)$. 
\end{claim}
\begin{proof}
 As given $x$ and vertex $i \in [n_2]$ in time $\poly(h(n))$ we can produce $\neg A(x, i)$. $\NTIME[2^n]$ has poly-sized $\EMAJ.\mathcal{C}$ circuits given $x$ and $i \in [n_2]$ we can also produce/represent $\neg A(x)_i$ by a $\poly(n+O(\log h(n))) = \poly(n)$ $\EMAJ.\mathcal{C}$ circuit.

\end{proof}

Our nondeterministic algorithm for GAP-UNSAT begins by guessing $U$ guaranteed by Claim~\ref{clm:small-ckt-nexp} which is supposed to represent $\neg A$. Then by the reduction in Lemma~\ref{lem:emaj-to-psum} we can covert $U$ to a $\PSUM \circ \mathcal{C}$ circuit $R$ for $A(x, i)$ of size $\poly(n)$. Note that if our guess for $U$ is correct i.e. $U = \neg A$ then $R$ represents $A$.

Let subcircuits of $R$ be $R_1, R_2, \ldots, R_{t}$ i.e. $R(x) = \sum_{j \in [t]} R_{j}$ where $R_{j} \in \mathcal{C}$ and $t = \poly(n)$. The number of inputs to $R_{j}$ are $n' = \abs{x}+\log{n_2}=n+O(\log h(n))$ and the size of $R_{j}$ is $\poly(n)$.

Note that $R(x, i) = 0$ represents that the $i^{th}$ vertex is not part of the independent set in a solution corresponding to $x$ while $R(x, i) > 0$ represents that it is part of the independent set in a solution corresponding to $x$. For all $x, i$, $0 \leq R(x, i) \leq t \leq \poly(n)$.\\

\textbf{Verifying that $R$ encodes valid independent sets:} We can verify that the circuit produces an independent set by checking each edge over all $x$. To check the edge between vertices $i_1$ and $i_2$ we need to verify that most one of them is part of the independent set. Equivalently, for all $x$, $R(x, i_1) \cdot R(x, i_2) = 0$. As $R(x, i)$ is always $\geq 0$ we can just verify $$\sum_x R(x, i_1) \cdot R(x, i_2) = 0.$$ Since $R(x, i) = \sum_{j \in [t]} R_j(x, i)$ it suffices to verify that
$$\sum_x \sum_{j_1, j_2 \in [t]} R_{j_1}(x, i_1) \cdot R_{j_2}(x, i_2) = 0$$ Let $R_{j_1, j_2}(x, i_1, i_2) = R_{j_1}(x, i_1) \cdot R_{j_2}(x, i_2)$. By definition~\ref{def:typical}, $\mathcal{C}\cdot\mathcal{C} = \AND_2.\mathcal{C} = \mathcal{C}$ we $R_{j_1, j_2}$ has a $\poly(n)$ sized $\mathcal{C}$ circuit. Interchanging the summations we get that we need to verify
$$\sum_{j_1, j_2 \in [t]} \sum_x R_{j_1, j_2}(x, i_1, i_2) = 0$$
For a fixed $j_1, j_2$ number of inputs to $R_{j_1, j_2}$ are $\abs{x}=n$ and its size is $\poly(n)$. Hence, for a fixed pair of $j_1, j_2$ we can compute $\sum_x R_{j_1, j_2}(x, i_1, i_2)$ using the \#SAT algorithm from our assumption in time $2^{n}/b(n)$. Going over all $j_1, j_2$ pairs only adds another multiplicative factor of $t^2 = \poly(n)$. This allows us to verify that the edge $(i_1, i_2)$ is satisfied by $R$. 

Checking all edges only adds another multiplicative factor of $\poly(h(n))$. Hence the total running time for verifying that $R$ encodes valid independent sets is still $2^{n}\poly(h(n))/b(n)$.\\

\textbf{Verifying consistency of independent set produced by $R$ with $\pi_x$: } As we care about the size of independent set in $G_D(\pi_x)$ while $R$ assigns all vertices in $G_D$ we need to check if the assignment by $R$ is consistent with $\pi_x$. As $\pi_x$ only assigns vertices to $0$ we need to verify that all vertices assigned to $0$ in $\pi_x$ are in fact assigned to $0$ by the assignment given by $R(x, \cdot)$. From Lemma~\ref{lem:comb} we know that $\pi_x$ assigns a vertex $i$ to $0$ if for any $(j', b) \in S_i$, $\ENC_{j'}(x) \neq b$. To check this we need to verify that $R(x, i) = 0$ whenever for any $(j', b) \in S_i$, $\ENC_{j'}(x) \neq b$. Equivalently, $(\ENC_{j'}(x) \oplus b)R(x, i) = 0$ for all $x, i, (j', b) \in S_i$. As $(\ENC_j(x) \oplus b)R(x, i) \geq 0$ we can just check that $$\sum_x (\ENC_{j'}(x) \oplus b)R(x, i) = 0$$ for all $i, (j', b) \in S_i$. As $R(x, i) = \sum_{j \in [t]} R_j(x, i)$ we can equivalently verify that  $$\sum_{x \in \{0, 1\}^n} \sum_{j \in [t]} (\ENC_{j'}(x) \oplus b)R_{j}(x, i) =  0$$ for all $i, (j', b) \in S_i$. Note that $R_{j'}(x, i)$ has a $\poly(n)$-sized $\mathcal{C}$ circuit. By our assumption $(\ENC_j(x) \oplus b)$ has a $\poly(n)$-sized $\mathcal{C}$ circuit. Hence $(\ENC_j(x) \oplus b)R_{j'}(x, i)$ has a $\poly(n)$-sized $\mathcal{C}$ circuit as we are given that $\mathcal{C}$ is typical. 

For fixed $(i, j, j')$, $(\ENC_{j'}(x) \oplus b)R_{j}(x, i) \in \mathcal{C}$ has $\abs{x} = n$ inputs and size $\poly(n)$. Hence we can use \#SAT algorithm from assumption to calculate $\sum_{x \in \{0, 1\}^n} (\ENC_{j'}(x) \oplus b)R_{j}(x, i)$ in time $2^{n}/b(n)$. Going over all $j \in [t]$ adds another multiplicative factor of $\poly(n)$. This allows us to verify the condition for a fixed $i, (j', b) \in S_i$.

To go all $i, (j', b) \in S_i$ ($\abs{S_i} = O(n)$ by Theorem~\ref{thm:code}) only adds another multiplicative factor of $\poly(h(n)) \cdot O(n) = \poly(h(n))$ in time. The total running time for verifying consistency w.r.t. $\pi_x$ is $2^{n}\poly(h(n))/b(n)$.

As we now know that $R$ represents and independent set and that $R$ is consistent with $\pi_x$ we need to distinguish between:
\begin{enumerate}
    \item YES case: For all $x$, $R(x, \cdot)$ represents an independent set of size $\kappa$.
    \item NO case: For at most $2^{n}/g(n)$ values of $x, R(x, \cdot)$ represents an independent set of size $\geq \kappa/h(n)$.
\end{enumerate}
This is because we are giving a non-deterministic algorithm, and hence we can assume in the YES case that $R = A$.\\



\begin{claim}\label{lem:approx-nexp}
For an $x$ such that $R(x, \cdot)$ represents an independent set of size $a$ then $a \leq \sum_{i \in [n_2]} R(x, i) \leq at$.
\end{claim}
\begin{proof}
For every vertex $i$ which is part of the independent set we have $1 \leq R(x, i) \leq t$ while for all vertices $i$ which are not part of the independent set we have $R(x, i) = 0$.  Hence $b \leq \sum_{i \in [n_2]} R(x, i) \leq bt$.
\end{proof}

\textbf{Distinguishing between YES and NO cases: }
To distinguish between YES and NO cases we compute $$\sum_{x \in \{0, 1\}^n} \sum_{i \in [n_2]} R(x, i)$$ This allows us to distinguish between the YES case and NO case as:
\begin{enumerate}
    \item YES case: We have for at least $2^{n}(1-1/g(n))$ values of $x$ we have an independent set of size at most $\kappa/h(n)$. By Lemma~\ref{lem:approx-nexp} for such $x$, $\sum_{i \in [n_2]} R(x, i) \leq t\kappa/h(n)$. for the rest of $2^{n}/g(n)$ values of $x$ the independent set could be all the vertices in the graph $G_D$. Hence by Lemma~\ref{lem:approx-nexp} for such values of $x$, $\sum_{i \in [n_2]} R(x, i) \leq tn_2 = \poly(h(n))$. Hence
    \begin{align*}
        \sum_{x \in \{0, 1\}^n} \sum_{i \in [n_2]} R(x, i) &\leq (2^{n}/g(n))\poly(h(n))+ 2^nt\kappa/h(n)\\
        &\leq o(2^n)+2^nt\kappa/h(n) \hspace{15pt} [\text{As } h(n) = g(n)^{o(1)}]\\
        &\leq o(2^n)+o(2^n\kappa) \hspace{15pt} [\text{As } t = h(n)^{o(1)}]\\
        &\leq 2^n \kappa \hspace{15pt} [\text{As } \kappa > 1]
    \end{align*}

    \item NO case: We have for all $x \in \{0, 1\}^n$ the independent set is at least of size $\kappa$. Hence by Lemma~\ref{lem:approx-nexp} the sum is $\sum_{x \in \{0, 1\}^n} \sum_{i \in [n_2]} R(x, i) > 2^{n}\kappa$.
\end{enumerate}

All that remains is how to compute $\sum_{x \in \{0, 1\}^n} \sum_{i \in [n_2]} R(x, i)$. As $R(x, i) = \sum_{j \in [t]} R_j(x, i)$ we can compute
$$\sum_{x \in \{0, 1\}^n} \sum_{i \in [n_2]} \sum_{j \in [t]} R_{j}(x, i) = \sum_{j \in [t]} \sum_{i \in [n_2]} \sum_{x \in \{0, 1\}^n} R_{j}(x, i)$$

For a fixed $i, j$, $R_{j}(x, i) \in \mathcal{C}$, it has $\abs{x} = n$ inputs and size $\poly(n)$. Hence we can use \#SAT algorithm from assumption to calculate $\sum_{x \in \{0, 1\}^n} R_{j}(x, i)$ in time $2^{n}/b(n)$. Doing the summation for all $j \in [t], i \in [n_2]$ add another $h(n)^{o(1)}\poly(h(n)) = \poly(h(n))$ multiplicative factor.  The running time for distinguishing YES case and NO case is $2^{n}\poly(h(n))/b(n)$.

In total our running time comes to $2^{n}\poly(h(n))/b(n) = 2^{n}/g(n)^{\omega(1)}$. By Theorem~\ref{lem:nexp-lb-CAPP2} this gives us a contradiction which completes our proof.


\end{proof}

\bibliographystyle{alpha}
\bibliography{article}

\end{document}